\documentclass[12pt]{article}

\usepackage[T1]{fontenc}
\usepackage[utf8]{inputenc}
\usepackage{lmodern}
\usepackage{booktabs}
\usepackage{amsmath,amssymb,amsthm,mathtools}
\usepackage[left=3.5cm, right=3.5cm, top=2.8cm, bottom=3.2cm]{geometry}
\usepackage{graphicx}
\usepackage{float}
\usepackage{tikz}
\usetikzlibrary{cd,decorations.pathreplacing,matrix,arrows,positioning,automata,shapes,shadows,calc,fadings,decorations,snakes,through,intersections}
\usepackage{enumitem}
\usepackage{authblk}
\usepackage{comment}
\usepackage{nicefrac}
\usepackage{bm}
\usepackage{mathrsfs}
\usepackage{cancel}
\usepackage{dirtytalk}
\usepackage[authoryear]{natbib}

\usepackage{microtype}

\usepackage{hyperref}
\hypersetup{
    pdftitle={Prova},
    pdfauthor={},
    pdfmenubar=false,
    pdffitwindow=true,
    pdfstartview=FitH,
    colorlinks=true,
    linkcolor=blue,
    citecolor=blue,
    urlcolor=cyan
}
\usepackage{cleveref}

\crefname{axiom}{axiom}{axioms}
\Crefname{axiom}{Axiom}{Axioms}
\newtheorem{axiom}{}

\crefname{axiomm}{axiom}{axioms}
\Crefname{axiomm}{Axiom}{Axioms}
\newtheorem{axiomm}{}

\crefname{axiomq}{axiom}{axioms}
\Crefname{axiomq}{Axiom}{Axioms}
\newtheorem{axiomq}{}

\crefname{axiome}{axiom}{axioms}
\Crefname{axiome}{Axiom}{Axioms}

\newtheorem{theorem}{Theorem}
\newtheorem{corollary}{Corollary}
\newtheorem{lemma}{Lemma}
\newtheorem{proposition}{Proposition}

\theoremstyle{definition}
\newtheorem{definition}{Definition}
\let\olddefinition\definition
\renewcommand{\definition}{\olddefinition\normalfont}
\newtheorem{question}{Question}
\let\oldquestion\question
\renewcommand{\question}{\oldquestion\normalfont}
\newtheorem{example}{Example}
\let\oldexample\example
\renewcommand{\example}{\oldexample\normalfont}
\newtheorem{remark}{Remark}
\let\oldremark\remark
\renewcommand{\remark}{\oldremark\normalfont}
\newtheorem{claim}{\textsc{Claim}}

\newcommand{\R}{\mathbb{R}}

\newcommand{\EUp}{\mathbb{E}_p[u]}
\newcommand{\EUq}{\mathbb{E}_q[u]}

\newcommand{\supp}{\mathrm{supp}}
\newcommand{\dd}{\mathbf{d}}
\newcommand{\DX}{\bigtriangleup(X)}
\newcommand{\DXint}{\bigtriangleup_1(X)}
\newcommand{\succi}{\succsim_{\mathbf{c}}}
\newcommand{\cci}{\sim_{\mathbf{c}}}

\begin{document}

\title{\textbf{A distance-based theory of lottery complexity}}
\author{Giulio Principi\thanks{Affiliation: \textit{New York University}. E-mail: gp2187@nyu.edu.\\ I am extremely indebted and grateful to my advisor Efe Ok for invaluable discussions, support, and suggesting the topic of this paper. I am also very grateful to Simone Cerreia-Vioglio, Tommaso Denti, and especially Fabio Maccheroni for their continued support and very helpful discussions. Furthermore, I am also indebted to Andrea Aveni and Francesco Fabbri for many useful discussions. I thank Dilip Abreu, Arjada Bardhi, Alistair Barton, Pierpaolo Battigalli, Fabio Bellini, Roberto Corrao, Joyee Deb, David Dillenberger, Itzhak Gilboa, Andreas Kleiner, Felix-Benedikt Liebrich, Alberto Maccheroni, Erik Madsen, Alfonso Maselli, Benny Moldovanu, Stefania Minardi, Pietro Ortoleva, David Pearce, Debraj Ray, Todd Sarver, Andrew Schotter, Lorenzo Stanca, Kirtivardhan Singh, Quitze Valenzuela-Stookey, Michele Valinoti, Lucrezia Villa, Peter P. Wakker, Ruodu Wang, and Fan Wang for their comments and suggestions. I thank also the participants of D-TEA2026, IMPMS2026, and FUR2026 for their comments.}}
\date{\today\vspace{-3ex}}
\maketitle

\begin{abstract}

We propose a measure of lottery complexity that combines probabilistic dispersion with dissimilarity between outcomes. Taking degenerate lotteries, which yield a single outcome with certainty, as the simplest alternatives, we measure complexity by proximity to this class. Specifically, for each degenerate lottery, we compute the expected distance between its certain outcome and the lottery’s outcomes, and take the minimum across all degenerate lotteries. We provide an axiomatic characterization establishing uniqueness up to positive rescaling, study how complexity changes under outcome compression, and characterize maximally complex lotteries. Finally, we introduce Complexity-Adjusted Expected Utility preferences and characterize adherence to stochastic dominance and strong risk aversion.

\end{abstract}

\section{Introduction}

A substantial body of empirical evidence on risky choice suggests that individuals systematically undervalue complex options. Complexity affects valuation, generates mistakes, and leads decision makers to favor simpler alternatives. These effects matter in many economic environments. In choice under risk, they can shape the design of insurance products, pension plans, and financial contracts, since participation may depend not only on the distribution of payoffs, but also on the cognitive cost of processing the information contained in these alternatives.\footnote{See for instance \cite{Carlin2009}, \cite{CelerierVallee2017}, and \cite{GoodmanPuri2025}.}

This raises a basic question: what does it mean for a lottery to be complex? Arguably, complexity is a vague and multidimensional notion. Some lotteries may appear complex because they have many possible outcomes; others because probability mass is spread across outcomes; others because the outcomes themselves are far apart. A theory of complexity under risk must therefore identify which features of a lottery make it difficult to evaluate.

The guiding idea of the paper is to measure complexity indirectly, by starting from alternatives that are \textit{unambiguously simple}. Degenerate lotteries provide the natural benchmark. A degenerate lottery assigns probability one to a single outcome and contains no distributional uncertainty. It is therefore the simplest type of lottery. The complexity of a lottery can then be measured by how far it is from this benchmark class, or equivalently by how well it can be summarized by a single outcome. In other words, if you tell us what is simple, then we may be able to tell you what is complex.

Adopting this perspective shifts the question from what complexity means to what it means for a lottery to be far from the class of degenerate lotteries. The answer depends on how the decision maker perceives dissimilarities between outcomes. For monetary lotteries, the absolute payoff distance can provide a natural benchmark. Consider multidimensional lotteries in which, for instance, each outcome specifies medical expenses, property damage, and lost income. These outcomes are vectors of monetary losses and can naturally be compared using the Euclidean distance. For categorical outcomes, for example,  health states recorded as clinical categories or types of insured losses, there may be no natural way to say that one pair of outcomes is closer than another. If the decision maker distinguishes only whether two outcomes coincide, the discrete metric is a natural choice.\footnote{That is, the metric $\dd$ such that $\dd(x,y)=1$ if $x\neq y$ and $\dd(x,y)=0$ otherwise, for all $x,y\in X$.} In other settings, the relevant notion of distance may be different and potentially context-dependent, reflecting which features of the outcomes are salient to the decision maker. To accommodate this variety, we deliberately leave the outcome metric general.
 A decision maker may perceive two outcomes as distant depending on the choice context, cognitive ability, tastes, and further subjective aspects. The perceived outcome dissimilarity is then represented by a metric $\dd$ on an abstract outcome space $X$, where $\dd(x,y)$ measures how different the decision maker regards outcomes $x$ and $y$. We then measure the complexity of a lottery as the expected distance from the class of degenerate lotteries:
$$
C_{\dd}(p)=\min_{x\in X}\int \dd(x,y)\mathrm{d}p(y).
$$

This approach differs from existing alternatives such as support-size and entropy-based measures of complexity (\cite{MononenEntropy}, \cite{PuriSupport}). Support-size measures count how many outcomes the decision maker must consider. Entropy measures the dispersion of probability weights across outcomes. Both capture important dimensions of complexity, but neither accounts for the dissimilarity between outcomes. By contrast, the index $C_{\dd}$ treats complexity as a joint property of probabilities and perceived outcome distances.

A simple example illustrates the distinction. Using the absolute-value metric to compare monetary outcomes, consider three lotteries. Lottery $p$ gives equal chances of losing $\$100$ and gaining $\$100$. Lottery $q$ assigns equal probability to a hundred payoffs evenly spaced between a loss of 50 cents and a gain of 50 cents. Lottery $r$ gives equal chances of losing one cent and gaining one cent. Both support-size and entropy measures rank $q$ as the most complex and treat $p$ and $r$ as equally complex. Our distance-based index instead ranks $p$ as more complex than $q$, and $q$ as more complex than $r$. The outcomes of $p$ are far from any single payoff; those of $q$, despite being numerous, are concentrated near zero; and $r$ is closer still to being degenerate. Thus, a lottery with many nearby outcomes need not be more complex than one with only a few widely separated outcomes. Moreover, $C_{\dd}$ also ranks $q$ as slightly more complex than $r$. Thus, while $C_{\dd}$ is not an entropy measure, it remains sensitive to probability weights through their interaction with outcome distances.

Our main result provides an axiomatic foundation for $C_{\dd}$. The primitive is the decision maker's ranking of lotteries by complexity, denoted by $\succi$: $p\succi q$ means that the decision maker regards $p$ as at least as complex as $q$. The axioms begin from the idea that degenerate lotteries are the simplest alternatives, but their central concept is that of a \textit{best simple approximation}. An outcome $x$ is a best simple approximation of a lottery $p$ if the 50-50 mixture of $p$ and $\delta_x$ is no more complex than the analogous mixture using any other outcome. Best simple approximations are therefore decision-maker-specific: they are the outcomes that, according to the decision maker's complexity ranking, most simplify the lottery when mixed with it in equal proportions.

Most of the axioms describe how complexity behaves around these revealed approximations. A best simple approximation shared by two lotteries acts as a stable center: it remains a best approximation when the lotteries are mixed, and moving from a more complex lottery toward a less complex lottery sharing that center cannot increase complexity. Simplifications must also be comparable across lotteries: if two lotteries are equally complex, replacing the same fraction of each with one of its own best simple approximations preserves equal complexity. These requirements capture, in behavioral terms, the linear way in which expected distance from a fixed benchmark responds to changes in probabilities. The remaining axioms prevent randomization from producing spurious simplifications. Mixing equally complex lotteries cannot generate a strictly simpler lottery, and, starting from a 50-50 binary lottery, placing additional probability on an outcome already in its support cannot be more complex than placing it on a possibly new outcome. 

The representation theorem shows that these axioms characterize $C_{\dd}$: the decision maker's complexity ranking satisfies them if and only if it is represented by $C_{\dd}$ for some continuous metric $\dd$. Under this representation, the revealed best simple approximations coincide with the outcomes that minimize expected distance. Moreover,
\[
C_{\dd}\left(\frac{1}{2}\delta_x+\frac{1}{2}\delta_y\right)
=
\frac{1}{2}\dd(x,y),
\]
so the complexity of 50-50 binary lotteries reveals how dissimilar the decision maker regards their supported outcomes. The metric is therefore recovered from the decision maker's complexity ranking rather than specified independently of it. Different rankings may reveal different metrics; for any fixed ranking, however, the metric, and hence the complexity index, is unique up to positive rescaling. This cardinal uniqueness makes relative magnitudes of complexity meaningful. For instance, statements such as ``lottery $p$ is twice as complex as lottery $q$'' have a well-defined meaning.

Beyond the axiomatic characterization, we examine how lotteries can be simplified, which lotteries are maximally complex, and how complexity relates to familiar comparisons of risk. We introduce median-preserving contractions, which move outcomes toward a best simple approximation, and show that they weakly reduce complexity. We then show how the geometry induced by $\dd$ determines the maximally complex lotteries. For convex outcome sets under Euclidean distance, these lotteries are the ones supported on outcomes farthest from the center of the outcome set, the point that minimizes the maximum distance to any feasible outcome, and that have this center as their expectation. Finally, we relate $C_{\dd}$ to the convex order, which regards one lottery as more dispersed than another when the former is a mean-preserving spread of the latter. For norm-induced distances, mean-preserving spreads weakly increase complexity, and this comparison remains valid after both lotteries are mixed in the same proportion with any common third lottery. On the real line with absolute-value distance, the converse also holds: the complexity comparisons that remain valid under every such common mixture coincide exactly with the convex order.

Finally, the paper applies $C_{\dd}$ to choice under risk. We introduce Complexity-Adjusted Expected Utility (CAEU) preferences, represented by
$$
V(p)=\mathbb E_p[u]-\lambda C_{\dd}(p),
$$
where $u$ is a utility function, $\dd$ is the distance capturing the decision maker’s perceived dissimilarity between outcomes, and $\lambda$ measures the decision maker's attitude toward complexity. When $\lambda>0$, the decision maker is averse to complexity; when $\lambda<0$, complexity is preferred. By identifying the local utilities of CAEU preferences, we characterize consistency with first- and second-order stochastic dominance and with strong risk aversion. The analysis shows that stochastic dominance imposes joint restrictions on utility and complexity sensitivity. In particular, consistency with first-order stochastic dominance requires utility monotonicity to be stronger when the decision maker is more sensitive to complexity. The characterization of (strong) risk aversion shows that, under CAEU, effective risk attitude depends not only on the curvature of $u$, but also on the complexity term. In this model, complexity aversion can therefore be interpreted as a cognitive component of risk aversion.

\subsection{Related Literature}

This paper contributes to the literature on complexity and decision making under risk. The closest papers study models in which complex lotteries are evaluated less favorably. \cite{PuriSupport} axiomatizes a criterion in which expected utility is penalized by a cost that is monotone in the support size of the lottery. \cite{MononenEntropy} studies the event-splitting effect and shows how this phenomenon is represented by an expected-utility model with an entropic complexity cost. In subsequent work, \cite{Mononen2025general} axiomatizes a more general expected utility minus a likelihood-based complexity cost representation.

The present paper takes a different approach. Rather than starting from support size, entropy, or a particular behavioral anomaly, it asks how to measure the complexity of a lottery itself. The proposed index measures how far a lottery is from the class of degenerate lotteries. Thus, complexity is modeled not merely as a property of the number of outcomes or their probability weights, but as a joint property of probabilities and perceived dissimilarities among outcomes. In this sense, the paper complements support-size and entropy-based approaches. Support size captures how many outcomes must be processed; entropy captures how probability mass is distributed across outcomes; $C_{\dd}$ captures how difficult the lottery is to summarize by a single outcome.

A related procedural approach is developed by \cite{Hu2023}. In that model, the decision maker simplifies lotteries through rules that can generate support-size and entropy costs, and also through a partition of outcomes that are close in value. This latter idea is close in spirit to the present paper because it recognizes that complexity depends on the similarity between outcomes. The difference is that here outcome dissimilarity is evaluated through the metric $\dd$, rather than partitioning the outcomes that are close enough depending on a certain cutoff level.

The paper is also related to work on complexity in other domains of decision theory. In decision problems under ambiguity, \cite{VALENZUELASTOOKEY202376} studies complex acts through brackets generated by simpler acts. In menu choice, \cite{Ortoleva2013} studies thinking aversion, where agents dislike large choice sets because choosing from them is cognitively costly. More broadly, \cite{GabaixGraeber2024} and \cite{Gabaix2025} develop models in which complexity is tied to the cognitive technology of solving decision problems. The present paper differs by focusing on the complexity of the risky alternatives themselves: a lottery is complex when it is difficult to approximate by a degenerate lottery.

The measurement of lottery complexity is also connected to the literature on costly information acquisition and rational inattention. Since \cite{Blackwell53}, information structures have been ordered by their informational content; more recent work studies how to measure the cost or difficulty of acquiring and processing information \citep{FraenkelKamenica,WoodfordNeighborhood,DentiPosterior,DentiExpcost,Caplindeanlehay2022,Pomattoetal}. The connection is conceptual: lotteries, like signals, contain information that must be processed by the decision maker. The difference is that this paper studies the perceived complexity of outcome distributions, rather than the cost of acquiring information.

Finally, the paper relates to the mathematical literature on stochastic orders, risk measures, deviation measures, and uncertainty quantification \citep{StoyanMuller,ShakedShantikumar,FollmerSchied,Artzner,RockafellarUryasev,gonzalez-garcia2026quantification}. The paper connects comparisons based on expected distances to the convex order, a stochastic order that formalizes when one distribution is a mean-preserving spread of another. This connection relates the proposed complexity index to classical notions of dispersion.

A large empirical literature documents that complexity affects behavior: it generates mistakes, attenuation, caution, choice avoidance, and preferences for simpler options. Evidence in risky choice includes \cite{HuckWeizsacker1999}, \cite{MadorSonsinoBenzion2000}, \cite{SonsinoBenzionMador2002}, \cite{FudenbergPuri2022}, \cite{FudenbergPuri2023}, and \cite{PuriSupport}; related evidence on cognitive uncertainty and behavioral attenuation includes \cite{EnkeGraeber2023CognitiveUncertainty}, \cite{EnkeGraeberOpreaYang2024Attenuation}, and \cite{KendallOprea2024}. Additional experimental evidence is provided in \cite{deClippelEtAl2025}, where it is shown that agents undervalue complex options and react with caution to complex decision problems as if, complexity triggers a form of “internal ambiguity.’’ Closest to the present paper, \cite{EnkeShubatt2023} and \cite{ShubattYang2024} study the complexity of lottery choice problems. Enke and Shubatt construct empirically estimated complexity indices in which the most important role is played by the state-by-state dissimilarity between the lotteries being compared. Although this is not their main objective, they also construct a measure for an individual lottery by comparing it with a safe payoff equal to its expected value. Shubatt and Yang develop a theory in which pronounced tradeoffs make alternatives more difficult to compare. By contrast, the present paper axiomatizes complexity as an intrinsic property of a lottery, without reference to a comparison alternative. Thus, while all three approaches emphasize dissimilarity, the present paper focuses on perceived dissimilarities among the outcomes of a single lottery rather than between alternatives.

\section{Measuring the complexity of lotteries}\label{sect:measurement}

The framework is that of choice under risk. Let $X$ be a compact metric space, endowed with a metric $\rho$. The elements of $X$ are outcomes and, to avoid triviality, $X$ is assumed to contain at least two distinct elements. By $\DX$ we denote the set of Borel probability measures on $X$, endowed with the topology of weak convergence. Elements of $\DX$ are interpreted as lotteries over outcomes. Under the canonical identification $x\leftrightarrow \delta_x$, we make a slight abuse of notation by treating $X$ as a subset of $\DX$.

The objective of this section is to introduce a measure of lottery complexity. The guiding idea is that degenerate lotteries are the simplest choice alternatives. A degenerate lottery assigns probability one to a single outcome and therefore involves no distributional uncertainty. Nondegenerate lotteries are more complex insofar as they are harder to approximate by any degenerate lottery.

To formalize this idea, let $\dd$ be a continuous metric on $X$. Continuity is understood with respect to the background topology induced by $\rho$. The metric $\dd$ need not coincide with $\rho$. Instead, $\dd(x,y)$ represents the decision maker's perceived dissimilarity between outcomes $x$ and $y$. Thus, two outcomes may be close with respect to $\rho$ but far apart according to the decision maker's subjective perception, $\dd$, or vice versa.

\begin{definition}\label{def:best_approximation_complexity}
A function $C:\DX\to[0,\infty)$ is a \textit{best-approximation complexity index} if there exists a continuous metric $\dd$ on $X$ such that, for every $p\in\DX$,
$$
C(p)=\min_{x\in X}\int \dd(x,y)\mathrm{d}p(y).
$$
We write $C_{\dd}$ for the index induced by the metric $\dd$.
\end{definition}

For a fixed outcome $x$, the quantity
$$
\int \dd(x,y)\mathrm{d}p(y)
$$
is the average subjective distance between the outcomes drawn according to $p$ and the degenerate lottery $\delta_x$. The index $C_{\dd}(p)$ selects the degenerate lottery that best approximates $p$. Hence, $C_{\dd}(p)$ is the average subjective distance of $p$ from its closest degenerate lottery. Equivalently, any minimizer
$$
x_p\in\arg\min_{x\in X}\int \dd(x,y)\mathrm{d}p(y)
$$
can be interpreted as a subjective median of $p$. Under this interpretation, $C_{\dd}(p)$ measures the perceived dispersion of the lottery around its subjective median. A lottery is complex when its probability mass is, on average, far from every outcome that can serve as its best degenerate proxy.

The compactness of $X$ and the continuity of $\dd$ guarantee that such minimizers exist. Indeed, for every $p\in\DX$, the map
$$
x\mapsto \int \dd(x,y)\mathrm{d}p(y)
$$
is continuous on $X$ and therefore attains its minimum.

\begin{example}[Monetary lotteries]\label{ex:monetary_complexity}
Let $X=[a,b]\subseteq\mathbb R$, and let $\dd:(x,y)\mapsto |x-y|$. Then, for every $p\in \DX$,
$$
C_{\dd}(p)=\min_{x\in[a,b]}\int |x-y|\mathrm{d}p(y).
$$
The minimizers are medians of $p$, and $C_{\dd}(p)$ is the mean absolute deviation of the lottery from a median.
\end{example}

\begin{example}[Finite categorical outcomes]\label{ex:discrete_metric_complexity}
Suppose $X$ is finite and $\dd$ is the discrete metric:
$$
\dd:(x,y)\mapsto
\begin{cases}
0, & x=y,\\
1, & x\neq y.
\end{cases}
$$
Then, for every $p\in\DX$,
$$
C_{\dd}(p)=1-\max_{x\in X}p(x).
$$
Thus, complexity is low when the lottery has a salient modal outcome and high when probability mass is spread across many categories. In particular, the uniform lottery is maximally complex among lotteries on $X$.
\end{example}

\begin{example}[Incomplete preferences]
Suppose $X$ is a compact metric space and the decision maker has multiexpected utility preferences $\succsim$ over $\DX$. In particular, we assume that there exists an equicontinuous set of utility functions $\mathcal{U}$ on $X$ that separates points\footnote{That is, for all $x\neq y$ in $X$, there exists $u\in \mathcal{U}$ such that $u(x)\neq u(y)$. An example of such a set $\mathcal{U}$ can be found in the Pareto order on $X=[0,1]^2$, i.e., $\mathcal{U}=\{u_1,u_2\}$ with $u_i(x_1,x_2)=x_i$ for $i=1,2$.} and such that for all $u\in \mathcal{U}$ and $x\in X$, $0\leq u(x)\leq 1$, and 
\[
p\succsim q \Longleftrightarrow \forall u\in \mathcal{U}, \EUp\geq \EUq.
\]
A plausible distance in this setting is:
\[
\dd:(x,y)\mapsto \sup_{u\in \mathcal{U}}\left|u(x)-u(y)\right|
\]
where the idea is that the distance between two outcomes $x,y$ depends on the distance of their farthest respective utility values. The associated complexity measure would become:
\[
C_\dd(p)=\min\limits_{x\in X}\int \sup_{u\in \mathcal{U}}\left|u(x)-u(y)\right|\mathrm{d}p(y)
\]
for every $p\in \DX$.
\end{example}

These examples illustrate the role of the subjective metric. With the absolute-value metric on monetary outcomes, complexity is dispersion around a median. With the discrete metric on categorical outcomes, complexity is lack of concentration on a single outcome. More generally, $C_{\dd}$ measures dispersion according to the decision maker's perceived dissimilarity between outcomes.

\par\medskip
\noindent \textbf{An optimal transport perspective.}
The class of distances we study in this paper is the 1-Wasserstein metric induced by a continuous $\dd$ on $X$. To notice that, recall the definition of the 1-Wasserstein metric $W_{\dd}^1$ on $\DX$,
\[
W_{\dd}^1:(p,q)\mapsto \inf_{\gamma\in \Gamma(p,q)}\int \dd(x,y)\mathrm{d}\gamma(x,y)
\]
where $\Gamma(p,q)$ denotes the set of couplings between $p$ and $q$, that is the set of joint probability measures on $X\times X$ whose marginals are $p$ and $q$ on the first and second factors, respectively. Formally,
\[
\Gamma(p,q)=\left\lbrace \gamma\in \bigtriangleup(X\times X):\forall A\in \mathcal{B}(X),\ \gamma(A\times X)=p(A)\ \textnormal{and}\ \gamma(X\times A)=q(A) \right\rbrace.
\]
It is straightforward to see that $\Gamma(p,\delta_x)=\left\lbrace p\otimes \delta_x \right\rbrace$ for all $p\in \DX$ and $x\in X$, and hence,
\[
W^1_\dd(p,\delta_x)=\int \dd(x,y)\mathrm{d}p(y) 
\]
for all $p\in \DX$ and $x\in X$. Therefore, $C_\dd$ can be rewritten as
\[
C_\dd(p)=\min\limits_{x\in X}W^1_\dd(p,\delta_x).
\]
The choice of the Wasserstein metric is \textit{threefold}. First, it admits an appealing interpretation rooted in the optimal transport problem. Indeed, $W^1_\dd(p,\delta_x)$ can be interpreted as the minimum cost of compressing $p$ to the outcome $x$, where transferring one unit of mass from $y \in \supp(p)$ to $x$ incurs cost exactly $\dd(y,x)$. In terms of complexity, we say that a lottery $p$ is simpler than a lottery $q$ the cheaper it is to compress $p$ to any degenerate outcome, compared with compressing $q$. Second, the framework imposes no substantive restrictions on the underlying metric $\dd$ over outcomes, leaving ample modeling flexibility to the decision environment under investigation. Third, for each $x \in X$, the map $p \mapsto W^1_\dd(p,\delta_x)$ is affine, yielding a high degree of mathematical tractability.\footnote{Notwithstanding the reasons above, the motivation underpinning the definition $C_\dd$ admits broader generalizations. There is no \emph{a priori} reason to restrict attention to the Wasserstein distance on $\DX$; other probability metrics could be employed and may yield interesting results.}

\subsection{Comparison with alternative approaches}\label{sect:comparison}

The best-approximation index $C_{\dd}$ can be compared with two natural approaches to measuring lottery complexity. The first measures complexity by the number of possible outcomes. The second measures it by the probabilistic dispersion of the lottery, employing the entropy. For a finitely supported lottery $p$, these measures are formally defined as
\[
S(p)=|\supp(p)|
\qquad\text{and}\qquad
H(p)=-\sum_{x\in\supp(p)}p(x)\log p(x).
\]
Support size captures the number of contingencies to be considered, as in \cite{PuriSupport}, whereas entropy accounts for how evenly probability is distributed across them, as in \cite{MononenEntropy}. The index $C_{\dd}$ can agree with either approach when these features move together with outcome dispersion. Unlike them, however, it also accounts for how dissimilar the outcomes are.
\par\medskip
\noindent \textbf{Support size comparison.} Let $\dd(x,y)=|x-y|$ and define
\[
p=\frac{1}{2}\delta_{-1}+\frac{1}{2}\delta_1,
\qquad
q_{a}
=
\frac{1}{5}\delta_{-2a}+\frac{1}{5}\delta_{-a}+\frac{1}{5}\delta_0+\frac{1}{5}\delta_{a}+\frac{1}{5}\delta_{2a},
\]
where $a>0$. The lottery $q_{a}$ assigns equal probability to $5$ equally spaced outcomes, with $a$ denoting the distance between adjacent outcomes. Support size always ranks $q_{a}$ as more complex than $p$, since
\[
S(q_{a})=5>2=S(p).
\]
Since zero is the median of $q_{a}$,
\[
C_{\dd}(p)=1\ \textnormal{and}\ C_\dd(q_a)=\frac{6}{5}a.
\]
Thus, $C_{\dd}$ agrees with support size when $a>5/6$: the five outcomes of $q_a$ are sufficiently dispersed that $q_a$ is also more complex according to the distance-based measure. When $a<5/6$, however, $C_{\dd}$ ranks $q_a$ as simpler than $p$, despite its larger support, because its five outcomes are concentrated in a sufficiently small region. Additional support points increase distance-based complexity when they are sufficiently far apart, but need not do so when they are closely clustered.

\par\medskip
\noindent\textbf{Entropy comparison.} A similarly concise comparison applies to entropy. For $a>0$ and $\pi\in(0,1/2]$, consider
\[
r_{\pi,a}=\pi\delta_a+(1-\pi)\delta_0.
\]
Under the absolute-value metric $\dd(x,y)=|x-y|$,
\[
C_{\dd}(r_{\pi,a})=\pi a\ \ \textnormal{and}\ \
H(r_{\pi,a})
=-\pi\log\pi-(1-\pi)\log(1-\pi).
\]
Holding $a$ fixed, both measures increase with $\pi$: as the probabilities become more evenly distributed, the lottery becomes more complex according to both entropy and $C_{\dd}$. Holding $\pi$ fixed, however, entropy is unaffected by $a$, whereas $C_{\dd}$ increases proportionally with the distance between the two outcomes. Consequently, entropy and $C_{\dd}$ agree when probabilistic dispersion is the relevant source of complexity but may disagree when differences in outcome dispersion dominate.

\par\medskip

Overall, support size, entropy, and $C_{\dd}$ capture distinct dimensions of complexity. Support size captures the number of possible outcomes. Entropy captures the dispersion of probability mass across these outcomes. The index $C_{\dd}$ captures perceived dispersion in the outcome space. The advantage of $C_{\dd}$ is that it treats complexity as a joint property of probabilistic dispersion and outcome dissimilarity, rather than as a property of probability weights or support alone. An important observation is that, outside of trivial cases, there is no distance $\dd$ so that the entropy or the support size approaches are equivalent to $C_\dd$. If $X$ is finite and $|X|\geq 3$, no metric $\dd$ makes the ranking induced by $C_\dd$ equivalent to entropy. Support size is not ordinally equivalent to some $C_\dd$ whenever $|X|\geq 2$.\footnote{In the appendix this is shown formally, see Appendix \ref{remarksuppentropy}.}

\subsection{Axiomatic characterization and uniqueness}

This section provides a ranking-based foundation for the complexity index $C_{\dd}$. The primitive object is a binary relation $\succsim_{\mathbf{c}}$ on $\DX$, where
\[
p\succsim_{\mathbf{c}}q
\]
is interpreted as ``lottery $p$ is weakly more complex than lottery $q$.'' The objective is to identify conditions under which this ranking can be represented by a best-approximation complexity index,
\[
C_{\dd}(p)=\min_{x\in X}\int \dd(x,y)\mathrm{d}p(y),
\]
for some continuous metric $\dd$ on $X$.

Experimental studies link complexity to noisier choices, weaker sensitivity to payoff-relevant fundamentals, and greater caution when choosing among difficult-to-evaluate options. These findings motivate treating perceived complexity as a systematic feature of lotteries. To provide a behavioral foundation for the proposed measure, we take the decision maker's comparative complexity judgments as primitive and represent them by the binary relation $\succi$. The axioms take degenerate lotteries as benchmarks of simplicity and describe how perceived complexity changes when lotteries are mixed with one another or with suitable degenerate proxies.

We begin with two basic requirements. Axiom \ref{mx1} requires $\succi$ to be a continuous weak order. Axiom \ref{mx3} identifies degenerate lotteries as exactly the least complex lotteries.
\begin{axiom}\label{mx1}
$\succsim_{\mathbf{c}}$ is complete, transitive, and continuous with respect to the weak convergence topology.\footnote{We recall that a binary relation $\succsim$ on $\DX$ is:
\begin{itemize}
\item Complete: for all $p,q\in \DX$, either $p\succsim q$ or $q\succsim p$.
\item Transitive: for all $p,q,r\in \DX$,  $p\succsim q$ and $q\succsim r$ imply $p\succsim r$.
\item Continuous: for all $p\in \DX$, the following sets
\[
\left\lbrace q\in \DX:q\succsim p\right\rbrace\ \textnormal{and}\ \left\lbrace q\in \DX:p\succsim q\right\rbrace
\]
are closed with respect to the weak convergence topology.
\end{itemize}
}
\end{axiom}

\begin{axiom}\label{mx3}
For all $p\in \DX$ and $x\in X$,\ $p\succsim_{\mathbf{c}}\delta_x$ and $p\succ_{\mathbf{c}}\delta_x$ if $p$ is not degenerate.\footnote{By $\succ_{\mathbf{c}}$ we denote the asymmetric part of $\succsim_{\mathbf{c}}$, that is, $p\succ_{\mathbf{c}}q$ if and only if $p\succsim_{\mathbf{c}}q$ and $q\not\succsim_{\mathbf{c}}p$.}
\end{axiom}
\noindent Notice that this axiom implies also that degenerates are all equally simple.
\par\medskip
The ranking is also required to satisfy a convexity property. If two lotteries are equally complex, then randomizing between them cannot make the resulting lottery strictly simpler than both. Intuitively, mixing two objects of the same complexity does not lead to a simplification: the mixture must remain at least as complex as each of the original lotteries.
\begin{axiom}\label{mx4}
For all $p,q\in \DX$,
\[
p\sim_{\mathbf{c}}q \Longrightarrow \forall \alpha\in [0,1],\ \alpha p+(1-\alpha)q\succsim_\mathbf{c} p.\footnote{By $\sim_{\mathbf{c}}$ we denote the symmetric part of $\succsim_{\mathbf{c}}$, that is, $p\sim_{\mathbf{c}}q$ if and only if $p\succsim_{\mathbf{c}}q$ and $q\succsim_{\mathbf{c}}p$.}
\]
\end{axiom}
Before presenting the next axioms, we need some further notation. We denote by $A_{\succsim_{\mathbf{c}}}:\DX\rightrightarrows X$ the correspondence defined as:
\[
A_{\succsim_{\mathbf{c}}}(p)=\left\lbrace x\in X:\forall y\in X,\ \frac{p+\delta_y}{2}\succsim_{\mathbf{c}}\frac{p+\delta_x}{2}  \right\rbrace
\]
for all $p\in \DX$. The set $A_{\succsim_{\mathbf{c}}}(p)$ contains the degenerate lotteries that best simplify $p$. Indeed, $x\in A_{\succsim_{\mathbf{c}}}(p)$ means that mixing $p$ with $\delta_x$ produces a lottery weakly less complex than the lottery obtained by mixing $p$ with any other degenerate lottery. Elements of $A_{\succsim_{\mathbf{c}}}(p)$ are therefore interpreted as best degenerate proxies for $p$.

The next two restrictions govern how these best proxies behave under mixtures. First, if two lotteries share the same best degenerate proxy, then every mixture of them has the same best proxy. Second, along mixtures of lotteries with a common best proxy, complexity is monotone: moving from a more complex lottery toward a less complex one cannot increase complexity. Thus, when $p$ and $q$ are both best summarized by the same outcome $x$, and $p$ is more complex than $q$, the lottery $\alpha p+(1-\alpha)q$ lies between them in the complexity order. Figure \ref{fig:mx6} illustrates how Axiom \ref{mx6}, operating alongside the quasiconcavity guaranteed by axioms \ref{mx1} and \ref{mx4}, ensures the mixture lies exactly between $p$ and $q$ in the complexity order

\begin{axiom}\label{mx5}
For all $p,q\in \DX$, $x\in A_{\succsim_{\mathbf{c}}}(p)\cap A_{\succsim_{\mathbf{c}}}(q)$, and $\alpha\in [0,1]$,
\[
x \in A_{\succsim_{\mathbf{c}}}(\alpha p+(1-\alpha)q).
\]
\end{axiom}

\begin{axiom}\label{mx6}
For all $p,q\in \DX$, $x\in A_{\succsim_{\mathbf{c}}}(p)\cap A_{\succsim_{\mathbf{c}}}(q)$, and $\alpha\in [0,1]$,
\[
p\succsim_{\mathbf{c}}q \Longrightarrow  p\succsim_\mathbf{c} \alpha p+(1-\alpha)q.
\]
\end{axiom}

\begin{figure}[t]
    \centering
    \includegraphics[width=0.82\textwidth]{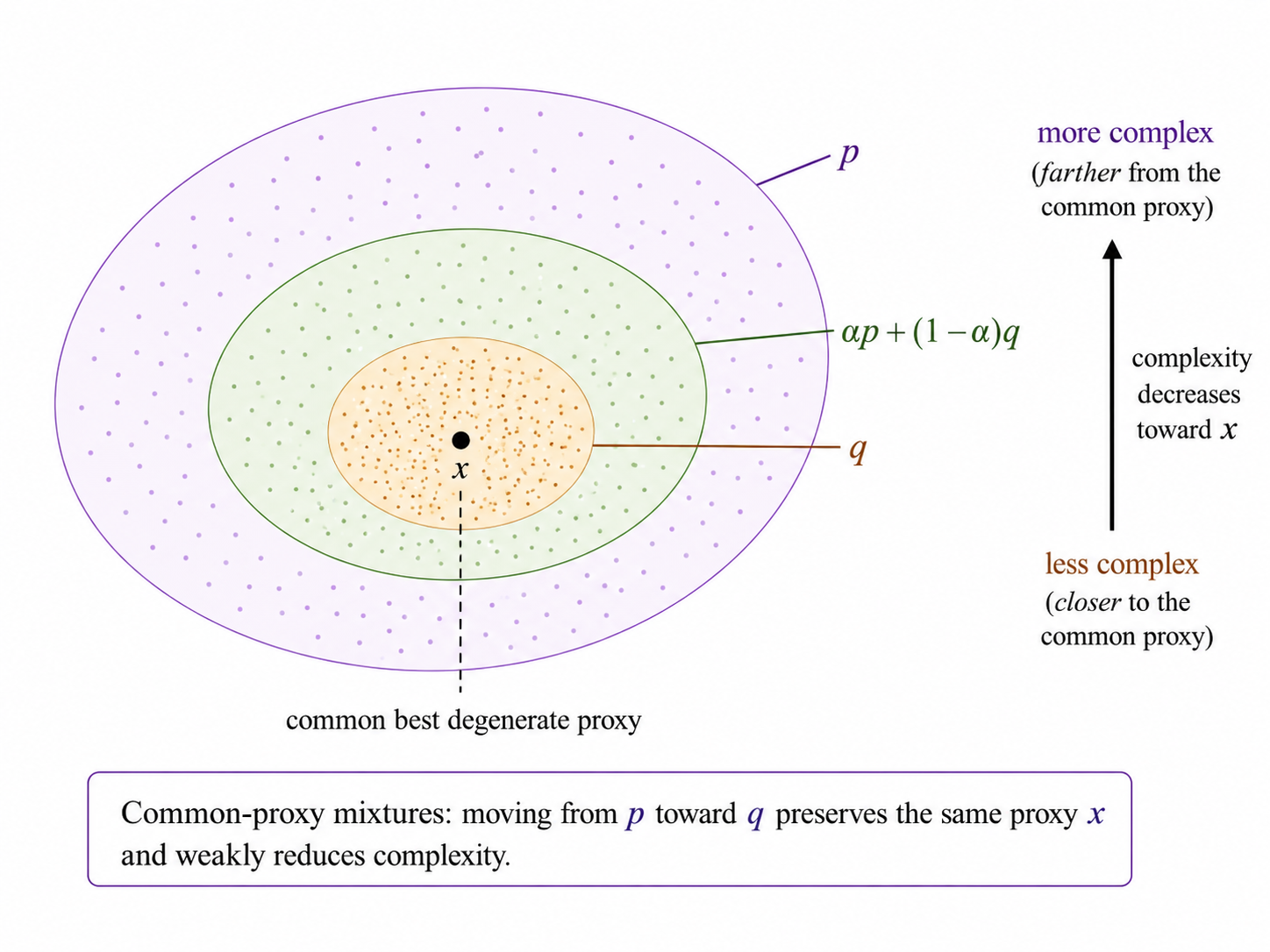}
    \caption{The lotteries $p$, $q$, and $\alpha p+(1-\alpha)q$ share a common best degenerate proxy $x$. The lottery $p$ is depicted as more complex than $q$, while the mixture $\alpha p+(1-\alpha)q$ lies between them and preserves the same proxy. The figure illustrates the requirement that moving from $p$ toward $q$ along common-proxy mixtures weakly reduces complexity.}
    \label{fig:mx6}
\end{figure}

The following axiom disciplines how complexity scales in mixing a lottery with one of its best degenerate proxies. This axiom is a proportionality requirement stated ordinally: if two lotteries are equally complex, then replacing the same fraction of each lottery by one of its own best degenerate proxies preserves indifference. This restriction ensures that complexity scales linearly along mixtures connecting lotteries to their best degenerate approximations.

\begin{axiom}\label{mx7}
For all $p,q\in \DX$, $x\in A_{\succsim_{\mathbf{c}}}(p)$, $y\in A_{\succi}(q)$, and $\alpha\in [0,1]$,
\[
p\cci q \Longrightarrow \alpha p+(1-\alpha)\delta_x \cci \alpha q+(1-\alpha)\delta_y.
\]
\end{axiom}

The last axiom asks complexity of a very specific class of lotteries to be monotonic in the entropy of the lottery. Specifically, suppose that we have a 50-50 lottery. The axiom asks that adding some weight to one of the outcomes in the support of the lottery leads to a simpler lottery than giving that weight to a third, possibly distinct, outcome. Therefore, reducing the entropy of a 50-50 lottery makes it easier.

\begin{axiom}\label{mx9}
For all $x,y,z\in X$ and $\alpha\in [0,1]$,
\[
\alpha\frac{\delta_x+\delta_y}{2}+(1-\alpha)\delta_z\succsim_{\mathbf{c}}\alpha\frac{\delta_x+\delta_y}{2}+(1-\alpha)\delta_y.
\]
\end{axiom}
\noindent This restriction is also consistent with the view that complexity can increase with the number of distinct components the decision maker must process. Starting from a 50-50 lottery, increasing the probability of an outcome already in its support preserves the set of components to be evaluated. Assigning the same probability mass to a possibly distinct outcome instead introduces an additional component into the evaluation. For 50-50 lotteries, the axiom imposes that adding mass to an existing outcome cannot be more complex than adding that mass to a new one.

The main result of this paper is the following theorem that provides an axiomatic foundation for the proposed complexity measure.

\begin{theorem}\label{axiomatic_characterization}
Let $\succsim_{\mathbf{c}}$ be a binary relation on $\DX$. The following are equivalent:
\begin{enumerate}
\item $\succsim_{\mathbf{c}}$ satisfies axioms \ref{mx1}-\ref{mx9}.
\item $\succsim_{\mathbf{c}}$ admits a best-approximation complexity index representation for some continuous metric $\dd$.
\end{enumerate}
\end{theorem}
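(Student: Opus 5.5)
The direction (2)$\Rightarrow$(1) is a sequence of verifications, which I would do first. Write $f_x(p)=\int \dd(x,y)\,\mathrm{d}p(y)$; each $f_x$ is affine and weakly continuous in $p$, and $C_\dd=\min_x f_x$.

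\begin{itemize}
\item \textbf{Continuity and quasiconcavity.} $C_\dd$ is a minimum over a compact family of functions that are jointly continuous in $(x,p)$, so it is continuous; this gives Axiom \ref{mx1}. Being a minimum of affine maps, it is concave, which yields Axiom \ref{mx4}.
\item \textbf{Degenerate lotteries.} We have $C_\dd(\delta_x)=0$. If $p$ is nondegenerate, then $f_x(p)>0$ for every $x$, so $C_\dd(p)>0$; this gives Axiom \ref{mx3}.
\item \textbf{Identifying $A_{\succi}$.} The key observation is
\[
C_\dd\!\left(\tfrac{p+\delta_y}{2}\right)=\min_x\tfrac12\bigl(f_x(p)+\dd(x,y)\bigr).
\]
I claim this is minimized in $y$ exactly at $y\in\arg\min_x f_x(p)$. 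Taking $x=y$ gives the bound $\le \tfrac12 f_y(p)$. The triangle inequality gives $f_x(p)+\dd(x,y)\ge f_y(p)$, so the minimum equals $\tfrac12 f_y(p)$. Hence $A_{\succi}(p)$ is the set of subjective medians.
\item \textbf{Axioms \ref{mx5}--\ref{mx7}.} Once $A_{\succi}$ is identified, these follow from affinity of $f_x$. A common minimizer $x$ of $f_\cdot(p)$ and $f_\cdot(q)$ minimizes $f_\cdot(\alpha p+(1-\alpha)q)$, and on such mixtures $C_\dd=f_x$ is affine. Also $C_\dd(\alpha p+(1-\alpha)\delta_x)=\alpha C_\dd(p)$ whenever $x$ is a median of $p$, because $x$ stays a median.
\item \textbf{Axiom \ref{mx9}.} This is a direct computation using the triangle inequality. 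The left side is $\min_w\bigl[\tfrac\alpha2(\dd(w,x)+\dd(w,y))+(1-\alpha)\dd(w,z)\bigr]$. For $\alpha\in[0,1]$, the right side equals $\tfrac\alpha2\dd(x,y)$, attained at $w=y$. The left side is at least $\tfrac\alpha2\dd(x,y)$, since $\dd(w,x)+\dd(w,y)\ge \dd(x,y)$.
\end{itemize}

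For (1)$\Rightarrow$(2), the plan is to build a numerical representation, read off the metric from binary lotteries, and show that the representation equals $C_\dd$.

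First, Axiom \ref{mx1} gives a continuous utility $C$ on the compact set $\DX$, and Axiom \ref{mx3} lets us set $C(\delta_x)=0$. To fix the cardinal scale, I would use Axiom \ref{mx7}: normalize $C$ along each segment $\alpha p+(1-\alpha)\delta_{x_p}$ with $x_p\in A_{\succi}(p)$ so that $C(\alpha p+(1-\alpha)\delta_{x_p})=\alpha C(p)$. Concretely, choose a reference nondegenerate $p_0$ with median $x_0$, and define $C(p)=\alpha$ when $p\sim_{\mathbf c}\alpha p_0+(1-\alpha)\delta_{x_0}$ (extending by scaling beyond $p_0$, or choosing $p_0$ maximal by compactness). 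Continuity and the intermediate value theorem make this well defined once monotonicity along the segment is shown, and that monotonicity comes from Axiom \ref{mx6} applied to $p$ and $\delta_{x_0}$, which share the proxy $x_0$. Axiom \ref{mx7} then makes the homogeneity hold for every $p$.

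Next, set $\dd(x,y)=2C(\tfrac{\delta_x+\delta_y}{2})$. This is symmetric, continuous, and positive off the diagonal by Axiom \ref{mx3}. The triangle inequality is derived from Axiom \ref{mx9} combined with the scaling. Taking $\alpha=2/3$ in Axiom \ref{mx9} compares $\tfrac13(\delta_x+\delta_y+\delta_z)$ with $\tfrac13\delta_x+\tfrac23\delta_y$. The latter equals $\tfrac23\cdot\tfrac{\delta_x+\delta_y}{2}+\tfrac13\delta_y$, so its value is $\tfrac23\cdot\tfrac12\dd(x,y)$, provided $y\in A_{\succi}(\tfrac{\delta_x+\delta_y}{2})$. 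That membership itself follows from Axiom \ref{mx9} and the definition of $A_{\succi}$. Symmetrizing over the roles of $x,y,z$ and using the quasiconcavity from Axiom \ref{mx4} should yield $\dd(x,z)\le \dd(x,y)+\dd(y,z)$.

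The main step, and the one I expect to be the obstacle, is showing $C=\min_x f_x$ with $f_x(p)=\int\dd(x,y)\,\mathrm{d}p(y)$. I would proceed in three stages.

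\begin{enumerate}
\item Fix $x$ and let $K_x=\{p: x\in A_{\succi}(p)\}$. This set is convex by Axiom \ref{mx5} and closed by continuity. On $K_x$, Axioms \ref{mx6} and \ref{mx4} make $C$ both quasiconvex and quasiconcave along segments, hence monotone along them. Combined with the homogeneity toward $\delta_x$, a standard argument in the style of mixture-space theorems should give that $C$ is affine on $K_x$.
\item Identify $C$ with $f_x$ on $K_x$. By affinity, $C$ agrees with $f_x$ at binary points $\tfrac{\delta_x+\delta_y}{2}$ in $K_x$, and it also agrees with $f_x$ at $\delta_x$. Since each $\tfrac{\delta_x+\delta_y}{2}$ lies in $K_x$, affinity plus continuity and density of simple lotteries should extend the equality to all of $K_x$, using that any $p\in K_x$ decomposes suitably through mixtures with $\delta_x$.
\item Show $C\le f_x$ outside $K_x$. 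Here I would use the characterization of $A_{\succi}(p)$ through the mixtures $\tfrac{p+\delta_y}{2}$ together with concavity. Then $C(p)=f_{x_p}(p)=\min_x f_x(p)$.
\end{enumerate}

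Delicate points remain: verifying that the scaled $C$ is well defined, and the decomposition in the second stage. For the decomposition I would first work on finitely supported lotteries and then pass to limits by weak continuity.
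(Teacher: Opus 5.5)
Your verification of (2)$\Rightarrow$(1) is correct and is essentially the paper's. Your architecture for (1)$\Rightarrow$(2) also matches the paper: rescale along the segment from a maximal lottery to one of its proxies via Axiom \ref{mx7}, set $\dd(x,y)=2C(\tfrac{\delta_x+\delta_y}{2})$, prove affinity on the common-proxy sets $K_x$, decompose simple lotteries, and pass to limits. The step that fails as written is the triangle inequality. Axiom \ref{mx9} at $\alpha=2/3$ plus the rescaling gives only $C(\tfrac13(\delta_x+\delta_y+\delta_z))\geq\tfrac13\max\{\dd(x,y),\dd(y,z),\dd(x,z)\}$, and quasiconcavity likewise only produces lower bounds on $C$ at mixtures. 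To conclude $\dd(x,z)\leq\dd(x,y)+\dd(y,z)$ you need an \emph{upper} bound on the complexity of some lottery in terms of $\dd(x,y)$ and $\dd(y,z)$, and nothing you have at that point supplies one. Even quasiconvexity on $K_y$ from Axiom \ref{mx6} would only give $\dd(x,z)\leq 2\max\{\dd(x,y),\dd(y,z)\}$. The missing ingredient is affinity on $K_y$, i.e.\ your Stage 1 (which the paper proves by shrinking the more complex lottery toward $\delta_x$ until it is indifferent to the other, using that $C$ is constant on the segment joining them, and rescaling), applied at $\alpha=1/2$. By Axiom \ref{mx9}, $y$ is a proxy of both $\tfrac{\delta_x+\delta_y}{2}$ and $\tfrac{\delta_y+\delta_z}{2}$, so affinity gives $C(\tfrac{\delta_x+2\delta_y+\delta_z}{4})=\tfrac14(\dd(x,y)+\dd(y,z))$. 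On the other side, Axiom \ref{mx9} at $\alpha=1/2$ and homogeneity (with $x$ a proxy of $\tfrac{\delta_x+\delta_z}{2}$) give $C(\tfrac{\delta_x+2\delta_y+\delta_z}{4})\geq C(\tfrac12\cdot\tfrac{\delta_x+\delta_z}{2}+\tfrac12\delta_x)=\tfrac14\dd(x,z)$. So the triangle inequality must come after Stage 1, as in the paper. Alternatively, drop it as a separate step: Stages 1--3 never use it (continuity of $\dd$ follows directly from weak continuity of $C$), and once $C(p)=\min_w\int\dd(w,y)\,\mathrm{d}p(y)$ is established, evaluating at $\tfrac{\delta_x+\delta_z}{2}$ gives $\dd(x,z)\leq\dd(w,x)+\dd(w,z)$ for free.

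The rescaling step has a hole that you flag but do not fill. Axiom \ref{mx6}, with Axiom \ref{mx5} keeping $x_0$ a proxy of every $r_\alpha=\alpha p_0+(1-\alpha)\delta_{x_0}$, gives only weak monotonicity in $\alpha$. A flat stretch would make ``$C(p)=\alpha$'' ill defined and destroy both continuity of the rescaled index and homogeneity. Strictness needs Axiom \ref{mx7} together with Axioms \ref{mx1} and \ref{mx3}. If $r_\alpha\cci r_\beta$ with $\alpha>\beta>0$ and $\theta=\beta/\alpha$, mixing both sides with $\delta_{x_0}$ at weight $\theta$ gives $r_\beta\cci r_{\theta\beta}$. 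Iterating gives $r_\alpha\cci r_{\theta^k\alpha}$ for all $k$, so continuity forces $r_\alpha\cci\delta_{x_0}$, contradicting Axiom \ref{mx3}.

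Finally, in Stages 2--3, simple approximants of $p\in K_x$ need not lie in $K_x$, so do not extend ``within $K_x$''. Instead, observe that for every simple $p$ and \emph{every} $x$, $\tfrac{p+\delta_x}{2}$ is a mixture of binaries lying in $K_x$, hence $C(\tfrac{p+\delta_x}{2})=\tfrac12\int\dd(x,y)\,\mathrm{d}p(y)$. With homogeneity and the definition of $A_{\succi}$, this gives $C(p)=\min_x\int\dd(x,y)\,\mathrm{d}p(y)$ on simple lotteries without any appeal to concavity, which you do not have at that stage since Axiom \ref{mx4} only yields quasiconcavity. Both sides are weakly continuous, so density of simple lotteries finishes the proof.
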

\noindent The theorem shows that the preceding axioms are jointly necessary and sufficient for the complexity ranking to admit a best-approximation representation. Importantly, the representation connects the behaviorally defined best degenerate proxies to the minimizers of the distance criterion: for every $p\in\DX$,
\[
A_{\succsim_{\mathbf c}}(p)=\arg\min_{x\in X}\int \dd(x,y)\mathrm{d}p(y).
\]
Thus, the subjective medians of a lottery are revealed by the complexity ranking rather than imposed as additional primitives. The theorem also gives behavioral content to the underlying metric: complexity comparisons among lotteries encode perceived dissimilarities between outcomes. In particular, the complexity of 50-50 binary lotteries over $x$ and $y$ encodes the distance $\dd(x,y)$, as
\[
\dd(x,y)=2C_\dd\left(\frac{\delta_x+\delta_y}{2}\right).
\]
The next result strengthens this observation by showing that the metric is identified up to a common positive scale.

\subsubsection{Cardinal uniqueness and the complexity of lottery choice problems}

The representation obtained above is not merely ordinal: the complexity measure $C_{\dd}$ is cardinally unique.

\begin{proposition}\label{prop:measurecardinaluniqueness}
Let $\dd_1$ and $\dd_2$ be continuous metrics on $X$. Then $C_{\dd_1}$ and $C_{\dd_2}$ are ordinally equivalent if and only if there exists $\kappa>0$ such that
\[
\dd_1=\kappa \dd_2.
\]
\end{proposition}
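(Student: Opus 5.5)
The "if" direction is immediate. If $\dd_1=\kappa\dd_2$ with $\kappa>0$, then for every $p\in\DX$ and $x\in X$ we have $\int \dd_1(x,y)\mathrm{d}p(y)=\kappa\int \dd_2(x,y)\mathrm{d}p(y)$. Taking the minimum over $x$ gives $C_{\dd_1}=\kappa C_{\dd_2}$, which is a strictly increasing transformation. So the two indices are ordinally equivalent.

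For the converse, suppose $C_{\dd_1}(p)\geq C_{\dd_1}(q)\iff C_{\dd_2}(p)\geq C_{\dd_2}(q)$. Then there is a strictly increasing $\phi$ on the range of $C_{\dd_2}$ with $C_{\dd_1}=\phi\circ C_{\dd_2}$. I would show $\phi$ is linear through the origin by exploiting the exact linearity of each index along segments joining a lottery to one of its medians. Fix $p$ and $x_p\in\arg\min_x\int\dd_2(x,y)\mathrm{d}p(y)$. The map $q\mapsto\int\dd(x_p,y)\mathrm{d}q(y)$ is affine and $x_p$ contributes zero, so for $q_\alpha=\alpha p+(1-\alpha)\delta_{x_p}$ we get $\int \dd_2(x_p,\cdot)\mathrm{d}q_\alpha=\alpha C_{\dd_2}(p)$. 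For any other $z$, concavity of the minimum of affine maps gives
\[
\int\dd_2(z,\cdot)\mathrm{d}q_\alpha\geq \alpha\int\dd_2(x_p,\cdot)\mathrm{d}p+(1-\alpha)\dd_2(z,x_p)\geq \alpha C_{\dd_2}(p).
\]
Hence $x_p$ remains a median of $q_\alpha$ and $C_{\dd_2}(q_\alpha)=\alpha C_{\dd_2}(p)$.

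The main obstacle is that the median of $p$ under $\dd_2$ need not be a median under $\dd_1$, so the same segment need not scale $C_{\dd_1}$ linearly. I would avoid this with 50-50 binary lotteries $b_{xy}=\frac{1}{2}\delta_x+\frac{1}{2}\delta_y$, where both $x$ and $y$ are medians for every metric: $\frac12\dd(z,x)+\frac12\dd(z,y)\geq\frac12\dd(x,y)$ by the triangle inequality. The segment $\alpha b_{xy}+(1-\alpha)\delta_y=\frac{\alpha}{2}\delta_x+(1-\frac{\alpha}{2})\delta_y$ has $y$ as a median under both metrics. This gives
\[
C_{\dd_i}\left(\tfrac{\alpha}{2}\delta_x+(1-\tfrac{\alpha}{2})\delta_y\right)=\tfrac{\alpha}{2}\dd_i(x,y)\quad\text{for } i=1,2.
\]
Fix $x\neq y$ with $D=\dd_2(x,y)>0$. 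As $\alpha$ ranges over $[0,1]$, the value $t=\frac{\alpha}{2}D$ covers $[0,D/2]$, and $\phi(t)=\frac{\dd_1(x,y)}{D}t$ there. So $\phi$ is linear on $[0,D/2]$ with slope $\kappa_{xy}=\dd_1(x,y)/\dd_2(x,y)$.

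Finally I would show $\kappa_{xy}$ does not depend on the pair. For two pairs, both intervals $[0,D/2]$ and $[0,D'/2]$ contain a common nontrivial interval $[0,\min(D,D')/2]$ on which the same function $\phi$ is linear with both slopes. Hence the slopes coincide, and $\kappa_{xy}=\kappa$ for all $x\neq y$. Therefore $\dd_1=\kappa\dd_2$ on distinct pairs, and trivially on the diagonal. Here $\kappa>0$ because $\dd_1$ is a metric and $X$ has two distinct points.
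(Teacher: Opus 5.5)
Your proof is correct and is essentially the paper's argument: both use $C_{\dd_i}(\alpha\delta_x+(1-\alpha)\delta_y)=\alpha\,\dd_i(x,y)$ for $\alpha\le 1/2$ to show that $\phi$ is linear through the origin with slope $\dd_1(x,y)/\dd_2(x,y)$ near zero, and then compare the slopes of two pairs at a common small argument (the paper picks an explicit common point via $\alpha_1,\alpha_2$, while you use the overlap $[0,\min(D,D')/2]$). One wording fix: the inequality in your median-preservation step comes from affinity of $q\mapsto\int\dd_2(z,\cdot)\,\mathrm{d}q$ together with minimality of $x_p$ for $p$, not from concavity of a minimum of affine maps.
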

\noindent The key observation behind \Cref{prop:measurecardinaluniqueness} is that the value of $C_{\dd}$ on binary lotteries directly identifies the underlying metric. Indeed, as we observed also above, for all $x,y\in X$,
\[
C_\dd\left(\frac{\delta_x+\delta_y}{2}\right)=\frac{\dd(x,y)}{2}.
\]
Thus, once the complexity ranking is represented by a best-approximation measure, the induced metric is pinned down by the complexity of two-point lotteries. In particular, the only strictly increasing transformations of $C_{\dd}$ that preserve the best-approximation structure are linear. If $\phi\circ C_{\dd}$ is again representable as $C_{\dd'}$ for some metric $\dd'$, then $\phi$ must be linear: $\phi(t)=\kappa t$ for some $\kappa>0$.\footnote{Linearity is intended on the bounded range of $C_\dd$, that is $C_\dd(\DX)$.}

This uniqueness property is useful both conceptually and empirically.\\ Conceptually, it means that the proposed measure does not merely rank lotteries from simpler to more complex. It also assigns meaningful relative magnitudes to differences in complexity. For instance, statements such as ``lottery $p$ is twice as complex as lottery $q$'' are meaningful thanks to cardinal uniqueness. Empirically, this makes the measure suitable for quantitative applications: one can estimate the complexity cost associated with a lottery and compare complexity across treatments (after proper normalization).

The result also extends the scope of the approach beyond the complexity of individual lotteries. This paper focuses on lotteries as the primitive objects whose complexity is being measured. However, a relevant part of the decision-theoretic and experimental literature is interested in the complexity of choice problems. In particular, in this strand of the literature, many papers study how complex the comparison between two lotteries $p$ and $q$ is, so to say, the complexity of the binary choice problem with alternatives $p$ and $q$. Thanks to the cardinal uniqueness result, it is possible to aggregate $C_\dd(p)$ and $C_\dd(q)$ homogeneously to get a value interpretable as the aggregate complexity of a binary choice problem. Two natural examples are:
\[
K_{\dd}(p,q)=C_\dd(p)+C_\dd(q)\ \textnormal{and}\ H_\dd(p,q)=\max\left\lbrace C_\dd(p),C_\dd(q) \right\rbrace
\]
The first, $K_\dd$, captures the total amount of complexity faced by the decision maker, while the second, $H_\dd$, captures the idea that the difficulty of the problem is determined by its most complex alternative. Other aggregators are possible, but the cardinal uniqueness of $C_{\dd}$ disciplines these transformations: since $C_{\dd}$ is unique up to a positive scalar, any homogeneous aggregator inherits the same cardinal property.

This observation is especially useful for models of stochastic choice. Here, the complexity of a choice problem affects the precision with which alternatives are compared rather than their deterministic valuations. Let $\rho$ be a binary stochastic choice rule, where $\rho(p,q)$ denotes the probability that the decision maker chooses lottery $p$ over lottery $q$. Let $v$ denote the decision maker's deterministic valuation of lotteries, and let $G_{\dd}(p,q)$ be either $K_{\dd}(p,q)$ or $H_{\dd}(p,q)$. A simple complexity-dependent stochastic choice rule is
\[
\rho(p,q)
=
F\left(
\frac{v(p)-v(q)}
{\lambda G_{\dd}(p,q)}
\right),
\]
for all lotteries $p,q$ with $G_\dd(p,q)>0$, and where $\lambda>0$ measures the effect of complexity on noise and $F$ is a strictly increasing and continuous distribution function symmetric around zero.

Holding the difference in valuations fixed, greater choice-problem complexity increases the noise scale and moves the choice probability toward one half. Choices therefore become less sensitive to differences in valuation and closer to random choice. The assumption that complexity increases noise is an additional modeling hypothesis; the role of cardinal uniqueness is to ensure that this specification does not depend on an arbitrary numerical representation of complexity. Indeed, if $C_{\dd}$ is multiplied by a positive constant $\kappa$, then $G_{\dd}$ is multiplied by the same constant, and this change can be absorbed by replacing $\lambda$ with $\lambda/\kappa$. By contrast, if the complexity index were only ordinally unique, an arbitrary increasing transformation of it would generally change the resulting choice probabilities and could not be absorbed by a single parameter. 

\section{Compression, maximal complexity, and dispersion}\label{sect:measure_properties}

This section is devoted to explore some of the properties of $C_\dd$ and their implications. First, we formalize the intuitive idea that compressing a lottery towards one of its best degenerate approximation leads to a simpler lottery. Second, we analyze under which assumptions on the set of outcomes and the metric $\dd$ we are able to identify the most complex lotteries. Third, we relate our complexity measure with integral stochastic order, with the purpose to investigate the relation between $C_\dd$ and the most prominent order of dispersion: the convex order.

\subsection{Median-preserving contractions}

The measure $C_\dd$ evaluates the distance of a lottery from its closest degenerate lottery, that is the distance between $p$ and one of its $\dd$-medians.\footnote{We recall that an outcome $x_p$ is a $\dd$-median for a lottery $p$ if:
\[
x_p\in \arg\min_{x\in X}\int \dd(x,y)\mathrm{d}p(y).
\]} It is therefore natural to ask whether a lottery becomes simpler when its outcomes are moved closer to one of its medians. The next result formalizes this idea through median preserving contractions. Fix a continuous metric $\dd$ on $X$, and let $p\in\DX$. A map $T:X\to X$ is a $\dd$-\textit{median-preserving contraction} for $p$ if it is $1$-Lipschitz with respect to $\dd$, that is,
\[
\dd(T(x),T(y))\leq \dd(x,y)
\quad\text{for all }x,y\in X;
\]
and
there exists
\[
x_p\in \arg\min_{x\in X}\int \dd(x,y)\mathrm{d}p(y)
\]
such that $T(x_p)=x_p$. Thus, a median-preserving contraction moves outcomes without increasing any pairwise distance and leaves one best degenerate approximation of $p$ unchanged. In this sense, its application to a lottery leads to a contraction of the lottery around one of its medians.

\begin{proposition}\label{prop:medianprescontr}
Suppose $\dd$ is a continuous metric on $X$, and let $p\in\DX$. If $T$ is a $\dd$-median-preserving contraction for $p$, then
\[
C_\dd(p)\geq C_\dd(T_{\#}p).\footnote{We recall that $T_\#p(A)=p(T^{-1}(A))$ for all Borel sets $A\subseteq X$}.
\]
\end{proposition}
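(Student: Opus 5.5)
The plan is to bound $C_\dd(T_\# p)$ from above by testing the minimization in its definition at one convenient candidate point, namely the fixed median $x_p$. Let $x_p\in\arg\min_{x\in X}\int \dd(x,y)\,\mathrm{d}p(y)$ with $T(x_p)=x_p$. By the definition of $C_\dd$ as a minimum over $X$, we have
\[
C_\dd(T_\# p)\leq \int \dd(x_p,z)\,\mathrm{d}(T_\# p)(z).
\]
It then suffices to show that the right-hand side is at most $C_\dd(p)=\int \dd(x_p,y)\,\mathrm{d}p(y)$.

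The first step is measurability. A map that is $1$-Lipschitz with respect to $\dd$ is continuous in the $\dd$-topology. However, $\dd$ is only assumed continuous with respect to $\rho$, so the $\dd$-topology need not a priori coincide with the $\rho$-topology. Since $\dd$ is a continuous metric on the compact space $(X,\rho)$, the identity map $(X,\rho)\to(X,\dd)$ is a continuous bijection from a compact space to a Hausdorff space, hence a homeomorphism. The two topologies therefore agree, and $T$ is $\rho$-continuous and thus Borel. This ensures that $T_\# p\in\DX$ and that the change of variables is legitimate.

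The second step is the change-of-variables formula for pushforward measures, applied to the bounded continuous function $z\mapsto \dd(x_p,z)$:
\[
\int \dd(x_p,z)\,\mathrm{d}(T_\# p)(z)=\int \dd(x_p,T(y))\,\mathrm{d}p(y).
\]
Next I use the fixed-point property to write $\dd(x_p,T(y))=\dd(T(x_p),T(y))$, and the $1$-Lipschitz property to bound this by $\dd(x_p,y)$ pointwise. Integrating against $p$ gives
\[
\int \dd(x_p,T(y))\,\mathrm{d}p(y)\leq \int \dd(x_p,y)\,\mathrm{d}p(y)=C_\dd(p),
\]
and chaining the inequalities yields the claim.

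The main (mild) obstacle is the measurability and topological issue handled in the first step. Once that is in place, the argument is a one-line comparison. One point worth noting is that the median of $T_\# p$ may differ from $x_p$, but this is harmless because we only need an upper bound, and the minimum defining $C_\dd(T_\# p)$ is at most its value at any particular candidate point.
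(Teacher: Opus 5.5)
Your proposal is correct and follows the paper's proof essentially verbatim: evaluate the minimum defining $C_\dd(T_{\#}p)$ at the fixed median $x_p$, change variables, and use $T(x_p)=x_p$ together with the $1$-Lipschitz bound. Your extra step is also sound and fills a point the paper leaves implicit: a continuous metric $\dd$ on the compact space $(X,\rho)$ generates the same topology as $\rho$, so $T$ is Borel and $T_{\#}p\in\DX$.
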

\noindent
The proposition says that applying a median-preserving contraction to a lottery weakly reduces its complexity. The intuition is direct. Since $T$ fixes a median $x_p$ and does not increase distances, every outcome in the transformed lottery is weakly closer to $x_p$ than the corresponding original outcome. The transformed lottery may have a different median, but its average distance from the original median $x_p$ is already weakly lower. Hence its distance from its own closest median, and therefore its complexity, cannot be higher. The following example shows a simple median-preserving contraction.

\begin{example}
Let $X=[-2,2]$, let $\dd(x,y)=|x-y|$, and consider the lottery
$$
p=\frac{1}{4}\delta_{-2}+\frac{1}{2}\delta_0+\frac{1}{4}\delta_2.
$$
The point $x_p=0$ is a median of $p$. Define
$$
T:y\mapsto \min\{1,\max\{-1,y\}\}.
$$
The map $T$ truncates all outcomes to the interval $[-1,1]$. It is $1$-Lipschitz and satisfies $T(0)=0$, hence it is a median-preserving contraction for $p$. The transformed lottery is
$$
T_{\#}p
=
\frac{1}{4}\delta_{-1}
+
\frac{1}{2}\delta_0
+
\frac{1}{4}\delta_1.
$$
Therefore,
$$
C_\dd(p)=1
\qquad\text{and}\qquad
C_\dd(T_{\#}p)=\frac{1}{2}.
$$
Thus, truncating the extreme outcomes toward the median reduces the complexity of the lottery.
\end{example}

\subsection{Maximal and minimal complexity}\label{sect:maximal}
It is obvious that $C_\dd(\delta_x)=0$ for all $x\in X$ and hence finding the minimally complex lotteries is a trivial exercise. On the other hand, the search of maximally complex lotteries is not trivial, and it strongly depends on the geometric and analytical features of $X$ and $\dd$. We start with two results that pave the way to more general observations. In the case of monetary lotteries with $X=[a,b]$ and $\dd$ being the absolute value metric, we have the following lemma.

\begin{proposition}\label{lem:maxcompllott}
If $X=[a,b]\subseteq \R$ and $\dd$ is the absolute value metric, then $1/2\delta_a+1/2\delta_b\in \arg\max_{p\in \DX}C_\dd(p)$.
\end{proposition}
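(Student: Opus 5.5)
We compute the value at the candidate and then bound every other lottery by that value, using the midpoint of $[a,b]$ as a test outcome.

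\textbf{Value at the candidate.} Let $p^*=\tfrac12\delta_a+\tfrac12\delta_b$. For any $x\in[a,b]$,
\[
\int |x-y|\,\mathrm d p^*(y)=\tfrac12(x-a)+\tfrac12(b-x)=\tfrac{b-a}{2}.
\]
This is constant in $x$, so $C_\dd(p^*)=(b-a)/2$. Equivalently, this is the identity $C_\dd\bigl(\tfrac{\delta_x+\delta_y}{2}\bigr)=\dd(x,y)/2$ already noted after Theorem \ref{axiomatic_characterization}.

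\textbf{Upper bound for an arbitrary lottery.} Fix $p\in\DX$. Since $C_\dd(p)$ is a minimum over $x\in X$, it suffices to exhibit one outcome whose expected distance is at most $(b-a)/2$. Take the midpoint $m=(a+b)/2\in[a,b]$. Every $y\in[a,b]$ satisfies $|m-y|\le (b-a)/2$, so
\[
C_\dd(p)\le \int |m-y|\,\mathrm dp(y)\le \tfrac{b-a}{2}=C_\dd(p^*).
\]
Hence $p^*\in\arg\max_{p\in\DX}C_\dd(p)$.

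Neither step is a real obstacle: the whole argument is the choice of $m$ as test point, which is exactly the Chebyshev center of $[a,b]$. This matches the description in the introduction of maximally complex lotteries as those supported on the points farthest from the center, with the center as their expectation.

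As a remark, the same computation characterizes all maximizers. Equality in the second inequality forces $p$ to be supported on $\{a,b\}$. For $p=\pi\delta_a+(1-\pi)\delta_b$ one has $C_\dd(p)=\min\{\pi,1-\pi\}(b-a)$, which equals $(b-a)/2$ only when $\pi=\tfrac12$. So $p^*$ is in fact the unique maximizer, although the proposition only asks for membership in the argmax.
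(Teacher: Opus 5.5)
Your proof is correct, and it takes a different, shorter route than the paper. The paper works with the actual minimizer. It invokes the median lemma (Lemma~\ref{lemma_median}) to write $C_\dd(p)=\int|y-\mathrm{med}(p)|\,\mathrm dp(y)$ and splits the integral at $\mathrm{med}(p)$. It then bounds the distances on the two sides by $\mathrm{med}(p)-a$ and $b-\mathrm{med}(p)$, and uses that each side, excluding the median itself, carries mass at most $1/2$ to reach $(b-a)/2$. You never identify a minimizer: evaluating at the fixed test point $m=(a+b)/2$ gives the bound in one line. This is exactly the Chebyshev-radius bound $C_\dd(p)\le r^*(X)$ that the paper records in Section~\ref{sect:maximal}, which in fact holds for any metric in the form $\min_x\sup_y\dd(x,y)$. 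So your argument is the one-dimensional case of the approach behind Proposition~\ref{prop:maximalelementsnormedspaces}, with attainment checked by direct computation instead of Sion's theorem. It also makes uniqueness nearly immediate, whereas the paper proves it by a three-case analysis on whether $\mathrm{med}(p)$ equals $a$, equals $b$, or is interior. Equality forces $|m-y|=(b-a)/2$ $p$-almost surely, hence $p(\{a,b\})=1$. The two-point formula $C_\dd(\pi\delta_a+(1-\pi)\delta_b)=\min\{\pi,1-\pi\}(b-a)$ then pins $\pi=1/2$. The paper's route stays within the median description of $C_\dd$ as mean absolute deviation, and it yields a $p$-dependent intermediate bound in terms of $\mathrm{med}(p)$ and the masses on either side of it. For the proposition as stated, however, your argument is simpler and loses nothing.
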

\noindent Under the assumptions of the lemma, the complexity measure $C_\dd$ corresponds to the mean-median deviation on the real line, and its unique maximum is represented by the lottery that assigns equal probability to the interval extremes. This result suggests that if the decision maker perception of outcome dissimilarity is representable by the absolute value metric, then decision maker equates the complexity of a lottery with its level of outcome dispersion.

If the outcome space is categorical and hence very different from the monetary case, and, also, the decision maker can only distinguishes two outcomes apart, then the result changes quite a lot.
\begin{proposition}\label{lem:uniformdiscrete}
If $X$ is finite and $\dd$ is the discrete metric, then 
\[
\sum_{x\in X}\frac{1}{|X|}\delta_x\in \arg\max_{p\in \DX}C_\dd(p).\footnote{Where we recall that by $|X|$ we intend the cardinality of $X$.}
\]
\end{proposition}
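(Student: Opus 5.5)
The plan is to reduce to the closed form stated in Example~\ref{ex:discrete_metric_complexity}, namely $C_{\dd}(p)=1-\max_{x\in X}p(x)$ for the discrete metric, and then to bound the maximum of the probability vector from below. First I will verify this formula briefly. For fixed $x\in X$,
\[
\int \dd(x,y)\,\mathrm{d}p(y)=\sum_{y\neq x}p(y)=1-p(x).
\]
Minimizing over $x$ gives $C_{\dd}(p)=1-\max_{x}p(x)$. Here every $p\in\DX$ is a probability vector on the finite set $X$, so the minimum over the finitely many $x$ is attained.

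Next I will bound the index for arbitrary $p$. Write $n=|X|$. Since $\sum_{x\in X}p(x)=1$ with $n$ summands, the largest summand is at least the average, so $\max_x p(x)\geq 1/n$. Hence
\[
C_{\dd}(p)\leq 1-\frac{1}{n}\quad\text{for every } p\in\DX.
\]
Finally, let $u=\sum_{x\in X}\frac{1}{n}\delta_x$ be the uniform lottery. Then $\max_x u(x)=1/n$, so $C_{\dd}(u)=1-1/n$. This attains the upper bound, and therefore $u\in\arg\max_{p\in\DX}C_{\dd}(p)$.

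I do not expect a real obstacle. The only step that needs care is the pigeonhole-type inequality $\max_x p(x)\geq 1/n$, which follows at once from the fact that the masses sum to one.

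As an optional remark, the bound is tight only when every $p(x)\le 1/n$. Combined with $\sum_x p(x)=1$, this forces every $p(x)=1/n$, so the maximizer is unique. The statement does not require this.
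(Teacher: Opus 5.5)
Your proof is correct and is essentially the paper's argument: both rest on $\int \dd(i,y)\,\mathrm{d}p(y)=1-p(i)$ together with the fact that the masses sum to one. The only difference is phrasing. The paper shows that any $p$ with $C_{\dd}(p)\geq C_{\dd}(\text{uniform})=(n-1)/n$ must have every $p(i)\leq 1/n$, so that uniqueness comes out directly, whereas you give the direct bound $C_{\dd}(p)\leq 1-1/n$ and note uniqueness separately.
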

\noindent Thus, contrarily to what happens in the case of monetary outcomes (Proposition \ref{lem:maxcompllott}) for discrete metric spaces, the uniform distribution on the whole space is indeed the maximally complex lottery. This highlights how the features of $X$ and $\dd$ play a major role in determining the maxima of $C_\dd$. 

However, these two results are not completely unrelated as one may think at first glance. Indeed, suppose $X$ is finite with just two elements, then the maximally complex lottery is exactly the uniform of the, only, two extremes. More generally, suppose that $X$ is not just a doubleton and that we isometrically embed its elements in a Euclidean space. Then, all these points are equally distant, forming a regular simplex (if $|X|=3$ it would correspond to an equilateral triangle). Therefore, Proposition \ref{lem:uniformdiscrete}, suggests that in a regular simplex, the maximally complex lottery is exactly the uniform over all the extreme points (the vertices of the triangle). This is because the maximally complex lottery corresponds to the one that assigns equal weights to all points that are equally and maximally distant from the barycenter. This is exactly also what happens when $X=[a,b]$, the uniform distribution on the extreme points is the maximally complex lottery, and it is the only one for which each point is equally and maximally distant from the barycenter. 

These observations suggest that, in some settings, it is possible to characterize maximally complex lotteries considering the distances of points from the \say{barycenter} of the set of outcomes. More generally, across convex structures with norm-induced metrics, maximal complexity remains deeply tied to the geometric boundaries of the outcome space. First, we notice that one can always make a lottery more complex by spreading it towards the extreme points.

\begin{proposition}\label{prop:extremeasure}
Suppose $X$ is a convex and compact subset of a finite-dimensional vector space and $\dd$ is induced by a continuous norm.\footnote{Meaning that $\dd(x,y)=\lVert x-y \rVert$ for some continuous norm $\lVert \cdot\rVert$ on the vector space.} For all $p\in \DX$, there exists $p^*\in \bigtriangleup(\textnormal{ext}(X))$ such that $C_\dd(p^*)\geq C_\dd(p)$.\footnote{We recall that $\textnormal{ext}(X)$ denotes the set of extreme points of $X$, i.e., $z\in \textnormal{ext}(X)$ if there are not distinct $x,y\in X$ and $\alpha\in (0,1)$ such that $z=\alpha x+(1-\alpha)y$.}
\end{proposition}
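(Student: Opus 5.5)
The plan is to replace $p$ by a ``Choquet lift'' of itself that lives on the extreme points, and then use convexity of the complexity index along mean-preserving spreads.

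\emph{Step 1 (disintegration).} Since $X$ is a compact convex subset of a finite-dimensional space, every $y\in X$ is the barycenter of some probability measure $\mu_y$ supported on $\textnormal{ext}(X)$. By Carathéodory and Minkowski, $\mu_y$ can be taken finitely supported, and $\textnormal{ext}(X)$ need not be closed, so I will work with finitely supported $\mu_y$. The selection $y\mapsto\mu_y$ has to be measurable. I would obtain it from a measurable selection theorem (Kuratowski--Ryll-Nardzewski) applied to the closed-valued correspondence $y\mapsto\{(\lambda,e)\in\Delta_{n}\times \overline{\textnormal{ext}(X)}^{\,n+1}: \sum_i\lambda_i e_i=y\}$. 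The fact that the points lie in $\textnormal{ext}(X)$ itself, and not merely in its closure, then needs an extra argument. Alternatively, I would first prove the claim for finitely supported $p$ and extend by a density/continuity argument. That extension is delicate because $\bigtriangleup(\textnormal{ext}(X))$ need not be closed, so the direct measurable-selection route seems safer. Once the selection is in hand, set $p^*=\int \mu_y\,\mathrm{d}p(y)\in\bigtriangleup(\textnormal{ext}(X))$.

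\emph{Step 2 (comparison).} Fix any $x\in X$. The map $z\mapsto \lVert x-z\rVert$ is convex, so Jensen's inequality gives, for each $y$,
\[
\int \lVert x-z\rVert\,\mathrm{d}\mu_y(z)\geq \Bigl\lVert x-\int z\,\mathrm{d}\mu_y(z)\Bigr\rVert=\lVert x-y\rVert .
\]
Integrating against $p$ yields
\[
\int \dd(x,z)\,\mathrm{d}p^*(z)\geq \int \dd(x,y)\,\mathrm{d}p(y)\geq C_\dd(p)
\]
for every $x$. Taking the minimum over $x$ then gives $C_\dd(p^*)\geq C_\dd(p)$. In other words, $p^*$ is a mean-preserving spread of $p$, and $C_\dd$ is a minimum of integrals of convex functions.

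\emph{Main obstacle.} Step 2 is routine; the real work is the measurable choice of representing measures in Step 1. My first attempt would be the finite-support approach via Carathéodory, combined with measurable selection on the compact set of barycentric representations. If the non-closedness of $\textnormal{ext}(X)$ causes trouble, I would use the fact that in finite dimensions the function $y\mapsto$ (a Choquet-maximal representing measure) can be chosen as a Borel kernel. This is the standard Choquet--Bishop--de Leeuw disintegration, whose maximal measures are carried by $\textnormal{ext}(X)$ when $X$ is metrizable.
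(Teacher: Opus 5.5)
This is essentially the paper's proof. The paper deduces the statement from Proposition \ref{prop:extremepoints}: it takes $p^*$ from the Choquet--Cartier dilation theorem (Theorem \ref{thm:Cartier}), which is your fallback kernel, then uses convexity of $y\mapsto\lVert x-y\rVert$ to get $\int\dd(x,y)\,\mathrm{d}p^*(y)\geq\int\dd(x,y)\,\mathrm{d}p(y)$ for every $x$, and minimizes over $x$, exactly as in your Step 2.

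Your primary Kuratowski--Ryll-Nardzewski route is, as you say, incomplete, because selecting in the closure of $\textnormal{ext}(X)$ need not land in $\textnormal{ext}(X)$. So, like the paper, you must rely on the cited theorem. No kernel is actually needed: any Choquet-maximal $q\geq_{\mathrm{cvx}}p$ is carried by $\textnormal{ext}(X)$ and already gives that inequality.
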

\noindent Therefore, to find \textit{some} maximally complex lotteries, with respect to $C_\dd$, it is sufficient to look for lotteries supported on the extremal outcomes. 

Moreover, the previous discussion about barycenters suggests that it is possible to characterize \textit{all} the maximally complex lotteries. To this end, we recall some useful mathematical jargon. Given a compact, convex, and nonempty subset of a normed space the value 
\[
r^*(X)=\inf\limits_{x\in X}\sup\limits_{y\in X}\lVert x-y\rVert
\]
is finite and it is called the \textit{Chebyshev radius of} $X$. The map $R:X\to \R$ defined as $R(x)=\sup_{y\in X}\lVert x-y\rVert$ for all $x\in X$ is convex and $1$-Lipschitz, therefore
\[
\arg\min_{x\in X}R(x)\neq \emptyset
\]
and these minimizers are called the \textit{Chebyshev centers of} $X$. By definition, $R(x)=r^*(X)$ for all $x\in \arg\min_{x\in X}R(x)$. The Chebyshev radius is a natural upper bound for $C_\dd$, indeed, if $\dd$ is induced by $\lVert\cdot\rVert$, for all $p\in \DX$,
\[
C_\dd(p)=\min\limits_{x\in X}\int \lVert x-y\rVert
\mathrm{d}p(y)\leq\min\limits_{x\in X}\int \sup\limits_{z\in X}\lVert x-z\rVert \mathrm{d}p(y)= r^*(X). 
\]
\noindent Propositions \ref{lem:maxcompllott} and \ref{lem:uniformdiscrete} suggest that maximally complex lotteries $p$ are precisely those that attain the Chebyshev radius. This is shown to be true in much more general settings. 
\begin{remark}
In what follows whenever we refer to $r^*(X)$ and Chebyshev centers we intend those obtained with the continuous norm $\lVert \cdot \rVert$ that will induce $\dd$.
\end{remark}

\begin{proposition}\label{prop:maximalelementsnormedspaces}
Suppose $X$ is a compact, convex, and nonempty subset of finite-dimensional vector space. If $\dd$ is induced by a continuous norm, then,
\[
\arg\max_{p\in \DX}C_\dd(p)=\left\lbrace p\in \DX:C_\dd(p)=r^*(X) \right\rbrace.
\]
\end{proposition}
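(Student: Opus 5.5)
The plan is to reduce the statement to a minimax identity. The inequality $C_\dd(p)\leq r^*(X)$ for every $p\in\DX$ was already established just before the proposition. So the equality of the two sets follows once I show that $\max_{p\in\DX}C_\dd(p)$ exists and equals $r^*(X)$.

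\emph{Attainment.} I would first argue that the maximum exists. The family $\{y\mapsto\lVert x-y\rVert\}_{x\in X}$ consists of continuous functions that are uniformly bounded and $1$-Lipschitz in $x$. Hence the map
\[
f(x,p)=\int\lVert x-y\rVert\,\mathrm{d}p(y)
\]
is jointly continuous on $X\times\DX$ under the weak topology. By Berge's maximum theorem, or directly because a minimum over a compact set of a jointly continuous function is continuous, $C_\dd=\min_x f(\cdot,p)$ is continuous on the compact space $\DX$ (Prohorov). It therefore attains its maximum. It then suffices to prove $\max_p C_\dd(p)\geq r^*(X)$.

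\emph{Minimax step.} I would apply Sion's minimax theorem to $f$ on $X\times\DX$. Both sets are compact and convex; $X$ sits in a finite-dimensional space and $\DX$ is a convex subset of the space of signed measures with the weak topology. For fixed $p$, the map $x\mapsto f(x,p)$ is convex, being an average of the convex functions $x\mapsto\lVert x-y\rVert$, and it is continuous. For fixed $x$, the map $p\mapsto f(x,p)$ is affine and continuous. Sion then gives
\[
\max_{p\in\DX}\min_{x\in X}f(x,p)=\min_{x\in X}\max_{p\in\DX}f(x,p).
\]
For fixed $x$, an affine continuous function on $\DX$ is maximized at an extreme point, i.e.\ a Dirac. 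Concretely, $f(x,p)\leq\sup_y\lVert x-y\rVert$ with equality at $\delta_{y^*}$ for a farthest point $y^*$, so $\max_p f(x,p)=R(x)$. The right-hand side is therefore $\min_x R(x)=r^*(X)$, and the left-hand side is $\max_p C_\dd(p)$.

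\emph{Conclusion and main obstacle.} Hence $\max_p C_\dd(p)=r^*(X)$, and
\[
\arg\max_p C_\dd(p)=\{p:C_\dd(p)=r^*(X)\}
\]
follows immediately from the upper bound. The main point needing care is verifying Sion's hypotheses in the weak topology: convexity and compactness of $\DX$, and the semicontinuity of $f$ in each variable. Both hold because $X$ is compact and the integrands are bounded continuous functions. If one prefers to avoid infinite-dimensional minimax, a fallback is to use Proposition \ref{prop:extremeasure} together with a finite approximation of $X$ by finite sets. One would then apply von Neumann's theorem on finite simplices and pass to the limit using the continuity of $C_\dd$.
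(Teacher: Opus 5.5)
Your proposal is correct and follows essentially the same route as the paper. The paper also applies Sion's minimax theorem to $F(x,p)=\int\lVert x-y\rVert\,\mathrm{d}p(y)$ on $X\times\DX$, uses that the affine inner problem $\max_p F(x,p)$ is attained at a Dirac (so the right-hand side equals $r^*(X)$), and combines this with the bound $C_\dd\leq r^*(X)$; your explicit attainment argument through continuity of $C_\dd$ on the compact set $\DX$ spells out a step the paper leaves implicit.
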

\noindent The proof of this proposition is based on the application of Sion's minimax theorem. Under additional restrictions on the norm that induces $\dd$ it is possible to refine the previous result. For any Chebyshev center $x^*$ of a set $X$, we denote by
\[
C(x^*)=\left\lbrace y\in X:\lVert y-x^*\rVert=r^*(X) \right\rbrace
\]
the set of contact points of $x^*$. We have the following refinement.
\begin{corollary}\label{coro:maxRnmeasure}
Suppose $X$ is a compact, convex, nonempty subset of $\R^n$ and $\dd$ is induced by the Euclidean norm. Then, $X$ admits a unique Chebyshev center $x^*$,
\begin{equation}\label{eq:equationmaximumRn}
\arg\max_{p\in \DX}C_\dd(p)=\left\lbrace p\in \DX :p(C(x^*))=1,\ \int y\mathrm{d}p(y)=x^*\right\rbrace,
\end{equation}
and $C(x^*)\subseteq \textnormal{ext}(X)$.
\end{corollary}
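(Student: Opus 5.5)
Uniqueness of the Chebyshev center comes first, from strict convexity of the Euclidean norm. If $x_1\neq x_2$ were both centers, put $m=(x_1+x_2)/2\in X$. For every $y\in X$, $\lVert m-y\rVert\le \tfrac12\lVert x_1-y\rVert+\tfrac12\lVert x_2-y\rVert\le r^*(X)$, and the first inequality is strict unless $x_1-y$ and $x_2-y$ are positively parallel. When both $\lVert x_i-y\rVert=r^*(X)$, parallelism forces $x_1=x_2$. So $\lVert m-y\rVert<r^*(X)$ for every $y$, and by compactness $R(m)<r^*(X)$, a contradiction. Hence $x^*$ is unique.

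Next I characterize the maximizers. By Proposition \ref{prop:maximalelementsnormedspaces}, $p$ is a maximizer iff $C_\dd(p)=r^*(X)$. For the inclusion ``$\supseteq$'', take $p$ with $p(C(x^*))=1$ and barycenter $x^*$. For any $x\in X$,
\[
\int\lVert x-y\rVert\,\mathrm dp(y)\ \ge\ \int\Big\langle y-x,\ \tfrac{y-x^*}{r^*}\Big\rangle\mathrm dp(y)=\frac{1}{r^*}\Big(\int\lVert y-x^*\rVert^2\mathrm dp + \langle x^*-x,\textstyle\int (y-x^*)\mathrm dp\rangle\Big)=r^*,
\]
using Cauchy--Schwarz, $\lVert y-x^*\rVert=r^*$ on the support, and zero mean. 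Hence $C_\dd(p)\ge r^*$, and equality follows from the upper bound already noted.

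For ``$\subseteq$'', let $C_\dd(p)=r^*$. First, $\int\lVert x^*-y\rVert\,\mathrm dp\ge r^*$, while $\lVert x^*-y\rVert\le r^*$ pointwise. So $\lVert x^*-y\rVert=r^*$ for $p$-a.e.\ $y$, and since $C(x^*)$ is closed, $p(C(x^*))=1$. Now let $b=\int y\,\mathrm dp$, which lies in $X$ by convexity, and suppose $b\neq x^*$. Set $x_t=x^*+t(b-x^*)\in X$ for small $t>0$. Using $\lVert x_t-y\rVert^2=r^{*2}-2t\langle b-x^*,y-x^*\rangle+O(t^2)$ and integrating the first-order expansion of the square root,
\[
\int\lVert x_t-y\rVert\,\mathrm dp = r^*-\frac{t}{r^*}\lVert b-x^*\rVert^2+O(t^2)<r^*,
\]
which contradicts $C_\dd(p)=r^*$. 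The $O(t^2)$ term is uniform in $y$ because $X$ is bounded and $r^*>0$; the latter holds since $X$ has at least two points. So $b=x^*$.

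Then I show $C(x^*)\subseteq\mathrm{ext}(X)$. If $y\in C(x^*)$ were $y=\alpha u+(1-\alpha)v$ with $u\neq v$ in $X$ and $\alpha\in(0,1)$, then strict convexity of $\lVert\cdot\rVert^2$ gives $r^{*2}=\lVert y-x^*\rVert^2<\alpha\lVert u-x^*\rVert^2+(1-\alpha)\lVert v-x^*\rVert^2\le r^{*2}$, a contradiction.

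The main obstacle is the barycenter step in ``$\subseteq$'', which needs a careful uniform first-order expansion, or equivalently a one-sided directional derivative of $x\mapsto\int\lVert x-y\rVert\,\mathrm dp$ at $x^*$ in the direction $b-x^*$. Because $p$ is supported on the sphere of radius $r^*>0$ around $x^*$, the integrand is smooth near $x^*$, and differentiation under the integral is justified by dominated convergence.
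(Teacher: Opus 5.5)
Your proof is correct and has the same structure as the paper's. You get uniqueness of $x^*$ from strict convexity of the Euclidean norm, and you use Proposition~\ref{prop:maximalelementsnormedspaces} with the squeeze $r^*(X)=C_\dd(p)\le\int\lVert x^*-y\rVert\,\mathrm{d}p(y)\le r^*(X)$ to get $p(C(x^*))=1$. You then apply the first-order condition at $x^*$ in the direction of the barycenter, and strict convexity for $C(x^*)\subseteq\textnormal{ext}(X)$. The only difference is technical: the paper invokes Lemma~\ref{lem:differentiability-contact-median}, whereas you use a one-sided expansion along $x^*+t(b-x^*)$ and an explicit Cauchy--Schwarz bound (the integrated subgradient inequality), which makes your argument slightly more self-contained.
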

\noindent Therefore, the maximally complex lotteries are exactly the ones concentrated on $C(x^*)$ that exhibit as barycenter (mean) exactly $x^*$. 

To retrieve a unique maximally complex lotteries, we can restrict to simplices. Indeed, in a simplex $X$, for each point $x\in X$, there is unique vector of weights, called \textit{barycentric coordinates}, such that the convex combinations of the vertices of $X$ with respect to such weights returns exactly $x$. Therefore, by the uniqueness of the Chebyshev center in simplices endowed with the Euclidean norm, we have the following.
\begin{corollary}\label{coro:maxsimplices}
Suppose $X\subseteq \R^n$ is a simplex and $\dd$ is induced by the Euclidean norm. Then, $X$ has a unique Chebyshev center $x^*$ and
\[
\arg\max_{p\in \DX}C_\dd(p)=\left\lbrace \sum_{v\in \textnormal{ext}(X)}\alpha^*_v\delta_v\right\rbrace,
\]
where $(\alpha^*_v)_{v\in \textnormal{ext}(X)}$ are the barycentric coordinates of $x^*$ of $X$.
\end{corollary}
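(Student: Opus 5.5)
The plan is to derive this corollary directly from \Cref{coro:maxRnmeasure}, using that a simplex is compact, convex, and nonempty. That corollary already gives a unique Chebyshev center $x^*$. It also tells us that the maximizers of $C_\dd$ are exactly the lotteries $p$ with $p(C(x^*))=1$ and $\int y\,\mathrm{d}p(y)=x^*$, and that $C(x^*)\subseteq\textnormal{ext}(X)$. So it remains to show that, when $X$ is a simplex, this set is the singleton consisting of $\sum_{v}\alpha^*_v\delta_v$.

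The first step is uniqueness. A simplex has finitely many extreme points, namely its vertices $v_0,\dots,v_k$, and these are affinely independent. Take any maximizer $p$. Since $p(C(x^*))=1$ and $C(x^*)\subseteq\textnormal{ext}(X)$, $p$ is supported on the vertices, so $p=\sum_v p(v)\delta_v$. Its barycenter condition then reads $\sum_v p(v)v=x^*$ with $\sum_v p(v)=1$ and $p(v)\ge 0$. By affine independence, barycentric coordinates are unique, so $p(v)=\alpha^*_v$ for every vertex $v$. Hence the argmax is contained in $\{\sum_v\alpha^*_v\delta_v\}$.

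The second step is existence. The argmax is nonempty because $C_\dd$ is continuous on the compact set $\DX$: $C_\dd$ is a minimum over the compact set $X$ of functions that are jointly continuous in $(x,p)$, by Berge's maximum theorem or directly. Since the argmax is nonempty and contained in a singleton, it equals that singleton.

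As a consistency check, this forces $\alpha^*_v=0$ for every vertex $v\notin C(x^*)$. That agrees with the characterization and needs no separate argument. The only point that needs care is the nonemptiness of the argmax, since \Cref{coro:maxRnmeasure} only describes the argmax as a set. Continuity of $C_\dd$ settles this; alternatively, \Cref{prop:maximalelementsnormedspaces} together with the minimax argument behind it shows that $r^*(X)$ is attained. Everything else is the uniqueness of barycentric coordinates.
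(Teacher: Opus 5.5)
Your proposal is correct and is essentially the paper's proof. Both combine Corollary~\ref{coro:maxRnmeasure} (maximizers are supported on $C(x^*)\subseteq\textnormal{ext}(X)$ and have barycenter $x^*$) with the uniqueness of barycentric coordinates in a simplex. Your one addition is to state explicitly that the argmax is nonempty, via continuity of $C_\dd$ on the compact set $\DX$ or via Proposition~\ref{prop:maximalelementsnormedspaces}; the paper's one-line argument leaves this implicit.
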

\noindent This corollary yields the following, that is the convex counterpart of Proposition \ref{lem:uniformdiscrete}.
\begin{corollary}\label{coro:maxregularsimplices}
Suppose $X\subseteq \R^n$ is a regular simplex and $\dd$ is induced by the Euclidean norm. Then,
\[
\arg\max_{p\in \DX}C_\dd(p)=\left\lbrace \frac{1}{\lvert \textnormal{ext}(X) \rvert}\sum_{v\in \textnormal{ext}(X)}\delta_v\right\rbrace.
\]
\end{corollary}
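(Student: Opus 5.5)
Corollary \ref{coro:maxregularsimplices} will follow directly from Corollary \ref{coro:maxsimplices}: we only need to identify the Chebyshev center of a regular simplex and its barycentric coordinates. Let $X=\textnormal{conv}\{v_1,\dots,v_m\}$ with $m=|\textnormal{ext}(X)|$ and $\lVert v_i-v_j\rVert=s>0$ for all $i\neq j$. The candidate center is the centroid $c=\frac1m\sum_i v_i$, whose barycentric coordinates are $\alpha^*_{v}=1/m$ for every vertex $v$. By Corollary \ref{coro:maxsimplices}, the Chebyshev center is unique and the maximizer is $\sum_v\alpha^*_v\delta_v$. It therefore suffices to show that $c$ is the Chebyshev center, i.e.\ $c\in\arg\min_{x\in X}R(x)$.

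First, we compute $R(c)$. Since $y\mapsto\lVert c-y\rVert$ is convex, its maximum over the polytope $X$ is attained at a vertex, so $R(c)=\max_i\lVert c-v_i\rVert$. By the symmetry of the regular simplex, all distances $\lVert c-v_i\rVert$ coincide; call this common value $r$. Concretely, expanding $\lVert v_i-c\rVert^2$ using $\lVert v_i-v_j\rVert^2=s^2$ gives $r^2=\frac{m-1}{2m}s^2$. This computation is not actually needed; it is enough that all the distances are equal.

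Next, we show that no other point does better. For any $x\in X$, use the identity $\frac1m\sum_i\lVert x-v_i\rVert^2=\lVert x-c\rVert^2+\frac1m\sum_i\lVert c-v_i\rVert^2=\lVert x-c\rVert^2+r^2$, which holds because $\sum_i(v_i-c)=0$. Hence
\[
R(x)^2\geq\max_i\lVert x-v_i\rVert^2\geq\frac1m\sum_i\lVert x-v_i\rVert^2=\lVert x-c\rVert^2+r^2\geq r^2=R(c)^2,
\]
with strict inequality when $x\neq c$. So $c$ is the unique Chebyshev center. This averaging identity, which relies on the Euclidean structure, is the only nontrivial step.

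Plugging $\alpha^*_v=1/m$ into Corollary \ref{coro:maxsimplices} yields the claimed unique maximizer $\frac{1}{|\textnormal{ext}(X)|}\sum_{v}\delta_v$. As a consistency check, its barycenter is $c$ and it is supported on the contact points $C(c)=\textnormal{ext}(X)$, in agreement with Corollary \ref{coro:maxRnmeasure}.
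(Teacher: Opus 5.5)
Your proposal is correct and follows essentially the same route as the paper: reduce to Corollary~\ref{coro:maxsimplices}, use equidistance of the vertices from the centroid (the paper's Lemma~\ref{lem:stupidregularsimplices}) to get $R(c)=r$, and use the decomposition $\frac1m\sum_i\lVert x-v_i\rVert^2=\lVert x-c\rVert^2+r^2$ (the paper's weighted version specialized to $\alpha_i=1/k$) to show no point of $X$ beats $c$. The only cosmetic difference is that you obtain uniqueness of the center directly from the strict inequality, whereas the paper relies on Lemma~\ref{lem:uniquechebyshevcenter}.
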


\subsection{The affine core of $C_{\dd}$ and the convex order}\label{sect:affinecore}

The measure $C_{\dd}$ can be interpreted as an index of perceived dispersion. In choice under risk, several orders have been proposed to compare the variability of lotteries. Among the most important is the convex order, which ranks lotteries according to their expectations under all convex functions. In this section, we relate $C_{\dd}$ to this tradition by studying the affine core of the order induced by $C_{\dd}$.\footnote{The affine core of a binary relation $\succsim$ on $\DX$, is the largest affine subrelation of $\succsim$. We recall that a binary relation $\succsim$ is \textit{affine} if for all $p,q,r\in \DX$ and $\alpha\in (0,1]$,
\[
p\succsim q \Longleftrightarrow \alpha p+(1-\alpha)r\succsim\alpha q+(1-\alpha)r,
\]
The affinity condition is often called \textit{independence axiom}.}

Fix a continuous metric $\dd$ on $X$. The \textit{affine core} of $C_\dd$ is the binary relation $\geq_{\dd}$ on $\DX$ defined by
$$
p\geq_{\dd} q
\Longleftrightarrow
\forall \alpha\in [0,1),\ \forall \ell\in \DX,\quad
C_{\dd}(\alpha p+(1-\alpha)\ell)
\geq
C_{\dd}(\alpha q+(1-\alpha)\ell).
$$
Thus, $p\geq_{\dd}q$ means that $p$ remains weakly more complex than $q$ after both lotteries are mixed with the same lottery $\ell$, with weight $1-\alpha$. The relation $\geq_{\dd}$ therefore captures the affine part of the complexity ranking induced by $C_{\dd}$. It compares $p$ and $q$ only when the comparison survives every common affine perturbation, $p\geq_{\dd}q$ requires the comparison to be stable across all such mixtures. In this sense, the affine core extracts from $C_{\dd}$ the part of the complexity ranking that behaves linearly with respect to mixtures.

Affine cores are central in decision theory because they identify the \textit{local utilities} of  nonexpected-utility criteria.\footnote{In the sense of \cite{Machina_EU_no_indep}. Local utilities are useful because they describe the local expected-utility approximations of non-EU preferences and allow one to study consistency with stochastic dominance and other integral stochastic orders.} The following proposition shows that, in the present setting, the \say{local utilities} of $C_\dd$ are the distance functions $\dd(x,\cdot)$, indexed by the outcomes $x\in X$.

\begin{proposition}\label{prop:affinecoremeasure}
For all $p,q\in \DX$,
$$
p\geq_{\dd} q
\Longleftrightarrow
\forall x\in X,\quad
\int \dd(x,y)\mathrm{d}p(y)
\geq
\int \dd(x,y)\mathrm{d}q(y).
$$
\end{proposition}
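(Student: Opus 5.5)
We prove the two implications separately. Write $f_x(p)=\int \dd(x,y)\mathrm{d}p(y)$, which is affine in $p$, so that $C_\dd(p)=\min_{x\in X} f_x(p)$.

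\emph{Sufficiency.} Suppose $f_x(p)\ge f_x(q)$ for every $x\in X$. Fix $\alpha\in[0,1)$ and $\ell\in\DX$. By affinity,
\[
f_x(\alpha p+(1-\alpha)\ell)=\alpha f_x(p)+(1-\alpha)f_x(\ell)\ge \alpha f_x(q)+(1-\alpha)f_x(\ell)=f_x(\alpha q+(1-\alpha)\ell)
\]
for every $x$. Taking the minimum over $x$ on both sides preserves the inequality, since a pointwise inequality between functions passes to their minima. Hence $C_\dd(\alpha p+(1-\alpha)\ell)\ge C_\dd(\alpha q+(1-\alpha)\ell)$, which is $p\geq_\dd q$.

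\emph{Necessity.} Suppose $p\geq_\dd q$ and fix $x\in X$. The idea is to choose $\ell$ so that $\delta_x$ becomes the best approximation of both mixtures. Then the minimum is attained at $x$, and $C_\dd$ reads off $f_x$.

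Take $\ell=\delta_x$ and let $\alpha>0$ be small. For any $z$, the triangle inequality gives
\[
f_z(\alpha p+(1-\alpha)\delta_x)=\alpha f_z(p)+(1-\alpha)\dd(z,x)\ge \alpha\big(f_x(p)-\dd(z,x)\big)+(1-\alpha)\dd(z,x),
\]
using $f_z(p)\ge f_x(p)-\dd(z,x)$. The right-hand side equals
\[
\alpha f_x(p)+(1-2\alpha)\dd(z,x)\ge \alpha f_x(p)=f_x(\alpha p+(1-\alpha)\delta_x)
\]
whenever $\alpha\le 1/2$. So for every $\alpha\in(0,1/2]$, the point $x$ is a minimizer and
\[
C_\dd(\alpha p+(1-\alpha)\delta_x)=\alpha f_x(p).
\]
The same argument applies to $q$. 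The hypothesis $p\geq_\dd q$ with $\ell=\delta_x$ and, say, $\alpha=1/2$ then gives $\tfrac12 f_x(p)\ge\tfrac12 f_x(q)$. Since $x$ was arbitrary, $f_x(p)\ge f_x(q)$ for every $x\in X$.

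The main step is identifying the minimizer after mixing with $\delta_x$. This is exactly where the triangle inequality for $\dd$ is needed, and the threshold $\alpha\le 1/2$ is what makes the estimate close. Everything else is affinity of $f_x$ and the fact that minima preserve pointwise inequalities.
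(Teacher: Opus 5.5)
Your proof is correct and follows the paper's argument. For sufficiency you use affinity of $p\mapsto\int\dd(x,y)\,\mathrm{d}p(y)$ together with the fact that taking the minimum preserves pointwise inequalities. For necessity you mix with $\delta_x$ at weight $\alpha=1/2$ and use $C_\dd(\alpha p+(1-\alpha)\delta_x)=\alpha\int\dd(x,y)\,\mathrm{d}p(y)$ for $\alpha\le 1/2$; the only difference is that you re-derive this identity inline via the triangle inequality, whereas the paper cites it from its appendix lemma on basic properties of $C_\dd$ (item 6), which is proved by the same estimate.
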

\noindent The proposition shows that the affine core of $C_{\dd}$ coincides with the integral stochastic order generated by the family of functions
$$
\left\{\dd(x,\cdot):x\in X\right\}.
$$
Thus, $p\geq_{\dd}q$ if and only if, for each outcome $x$, lottery $p$ has weakly larger average distance from $x$ than lottery $q$. This characterization clarifies the relation between $C_{\dd}$ and stochastic orders of dispersion. The order $\geq_{\dd}$ is generally incomplete: two lotteries may be difficult to rank if one is farther from some benchmark outcomes while the other is farther from different benchmark outcomes. The index $C_{\dd}$ completes this comparison by focusing on the best approximating degenerate for each lottery. Hence, while $\geq_{\dd}$ compares lotteries by requiring larger average distance from every benchmark, $C_{\dd}$ compares them by their distance from the closest benchmark. The measure $C_{\dd}$ therefore is a completion of the distance-based integral stochastic order induced by $\{\dd(x,\cdot):x\in X\}$.

\subsubsection{Complexity, dispersion, and the convex order}\label{sect}

The characterization above, Proposition \ref{prop:affinecoremeasure}, makes the connection with classical dispersion orders immediate. The affine core is itself an integral stochastic order, generated by the family of distance functions $\{\dd(x,\cdot):x\in X\}$. Integral stochastic orders provide standard tools for comparing the dispersion of random variables and lotteries. In decision theory under risk, the most prominent dispersion order is the convex order. This order formalizes the idea that one lottery is more dispersed than another while preserving its mean.

Let $Y$ be a convex subset of a normed space, and denote by $\mathrm{cvx}(Y)$ the set of continuous and convex functions $\phi:Y\to\mathbb R$. The convex order on $\triangle(Y)$, denoted by $\geq_{\mathrm{cvx}}$, is defined by
\[
p\geq_{\mathrm{cvx}}q
\Longleftrightarrow
\forall \phi\in \mathrm{cvx}(Y),\quad
\int \phi(y)\mathrm{d}p(y)
\geq
\int \phi(y)\mathrm{d}q(y),
\]
In the case of monetary lotteries, $p\geq_{\mathrm{cvx}}q$ is equivalent to saying that $p$ is a mean-preserving spread of $q$.

\begin{example}\label{ex:compactmps}
Let $X=[-1,1]$. The lottery $p=\frac{1}{2}\delta_{-1}+\frac{1}{2}\delta_1$ is a mean-preserving spread of the degenerate lottery $q=\delta_0.$ Hence $p\geq_{\mathrm{cvx}}q$.
\end{example}

The affine core of $C_{\dd}$ is naturally related to the convex order when the outcome space has a linear structure. Suppose $X$ is a convex and compact subset of a normed space $V$, and we say that the decision maker's dissimilarity metric is induced by a norm $\lVert\cdot\rVert$, if $\dd:(x,y)\mapsto \lVert x-y \rVert$. For all $x\in X$, the function
\[
y\mapsto \lVert x-y\rVert
\]
is convex. Hence, if $p$ is larger than $q$ in the convex order, then $p$ has weakly larger expected distance from every outcome $x$.

\begin{lemma}\label{lem}
If $X$ is a convex and compact subset of a normed space $V$, and $\dd$ is induced by a continuous norm $\lVert\cdot\rVert$, then
\[
p\geq_{\mathrm{cvx}}q
\Longrightarrow
p\geq_{\dd} q
\]
for all $p,q\in\DX$.
\end{lemma}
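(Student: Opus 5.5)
The plan is to combine Proposition \ref{prop:affinecoremeasure} with the convexity of the norm distance functions. By Proposition \ref{prop:affinecoremeasure}, $p\geq_{\dd}q$ holds if and only if
\[
\int \dd(x,y)\mathrm{d}p(y)\geq \int \dd(x,y)\mathrm{d}q(y)\quad\text{for all } x\in X.
\]
So it suffices to check this family of integral inequalities. This reduces the lemma to showing that each local utility $\dd(x,\cdot)$ belongs to the class of test functions defining $\geq_{\mathrm{cvx}}$, namely $\mathrm{cvx}(X)$.

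Fix $x\in X$ and set $\phi_x(y)=\lVert x-y\rVert$ on $Y=X$. Convexity follows from the triangle inequality and homogeneity of the norm:
\[
\lVert x-(\alpha y+(1-\alpha)z)\rVert=\lVert \alpha(x-y)+(1-\alpha)(x-z)\rVert\leq \alpha\lVert x-y\rVert+(1-\alpha)\lVert x-z\rVert.
\]
Continuity holds because the norm is assumed continuous, or directly since $|\phi_x(y)-\phi_x(z)|\leq\lVert y-z\rVert$. This is also consistent with the standing assumption that $\dd$ is continuous with respect to the background topology of $X$. Hence $\phi_x\in\mathrm{cvx}(X)$.

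Now take $p\geq_{\mathrm{cvx}}q$. Applying the definition of the convex order to $\phi_x$ gives $\int\dd(x,y)\mathrm{d}p(y)\geq\int\dd(x,y)\mathrm{d}q(y)$ for every $x$. Proposition \ref{prop:affinecoremeasure} then yields $p\geq_{\dd}q$.

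The only potential obstacle is a technical one: we must make sure the integrals are well defined and that the convex order is taken on $\bigtriangleup(X)$ with $Y=X$. Both are immediate, since $\phi_x$ is continuous, hence Borel, and bounded on the compact set $X$. If one prefers to avoid Proposition \ref{prop:affinecoremeasure}, there is a direct route. The convex order is affine, since mixing with a common $\ell$ preserves all the integral inequalities. So $\alpha p+(1-\alpha)\ell\geq_{\mathrm{cvx}}\alpha q+(1-\alpha)\ell$, and the expected distance from each $x$ is weakly larger for the first mixture. Taking the minimum over $x$ preserves the inequality, giving $C_\dd(\alpha p+(1-\alpha)\ell)\geq C_\dd(\alpha q+(1-\alpha)\ell)$, which is the definition of $p\geq_{\dd}q$.
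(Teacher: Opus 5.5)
Your proof is correct and takes the same route as the paper, which justifies this lemma in the main text: each $y\mapsto\lVert x-y\rVert$ is convex and continuous, so convex-order dominance gives weakly larger expected distance from every benchmark $x$, and Proposition \ref{prop:affinecoremeasure} identifies that family of inequalities with $p\geq_{\dd}q$. Your ``direct route'' (affinity of $\geq_{\mathrm{cvx}}$ plus taking the minimum over $x$) simply repeats the easy direction of that proposition's own proof, so it is also valid.
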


Thus, whenever outcomes have a linear structure and subjective dissimilarity is norm-based, the affine core of $C_{\dd}$ is consistent with the standard convex-order comparison of dispersion. If $p$ is a mean-preserving spread of $q$, then $p$ is also farther than $q$, on average, from every possible benchmark outcome. In this setting, greater dispersion implies greater complexity in the affine-core sense.

For product spaces, the $L^1$-metric provides a transparent characterization. Let $X=\prod_{i\in [k]} I_i$ for some compact intervals $I_i\subseteq\mathbb R$, we say that $\dd$ is induced by the $L^1$-norm if
\[
\dd:(x,y)\mapsto \lVert x-y\rVert_1=\sum_{i\in [k]} |x_i-y_i|.
\]
For a lottery $p\in\DX$, denote by $p^i$ its $i$-th marginal. The previous lemma gives a one-way implication for norm-induced metrics. For the $L^1$-metric on product spaces, the affine core admits a sharper characterization. Because the $L^1$-distance is additively separable across coordinates, the order $\geq_{\dd}$ decomposes into one-dimensional convex-order comparisons of the marginals.

\begin{proposition}\label{prop:marginals_convexorder}
Suppose $X=\prod_{i\in [k]} I_i$ for some compact intervals $I_i\subseteq \mathbb R$, with $k\geq 1$, and suppose $\dd$ is induced by $\lVert\cdot\rVert_1$. Then, for all $p,q\in\DX$,
\[
p\geq_\dd q
\Longleftrightarrow
\forall i\in [k],\quad
p^i\geq_{\mathrm{cvx}}q^i.
\]
\end{proposition}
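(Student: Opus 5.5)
By Proposition \ref{prop:affinecoremeasure}, $p\geq_\dd q$ holds if and only if $\int \lVert x-y\rVert_1\mathrm{d}p(y)\geq \int \lVert x-y\rVert_1\mathrm{d}q(y)$ for every $x\in X$. The $L^1$-norm is additively separable across coordinates, so this integral splits as $\sum_{i\in[k]}\int |x_i-t|\mathrm{d}p^i(t)$. The proof therefore reduces to comparing the functions $g_p^i(s)=\int |s-t|\mathrm{d}p^i(t)$ on $I_i$. Since $x$ ranges over the product $X=\prod_i I_i$, the coordinates $x_i$ can be chosen independently.

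For ($\Leftarrow$), assume $p^i\geq_{\mathrm{cvx}}q^i$ for each $i$. The map $t\mapsto|x_i-t|$ is continuous and convex on $I_i$, so $g^i_p(x_i)\geq g^i_q(x_i)$ for every $i$. Summing over $i$ gives the inequality for every $x$, and hence $p\geq_\dd q$. Alternatively, one can apply Lemma \ref{lem} coordinatewise.

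For ($\Rightarrow$), the first step is to show that the means of the marginals coincide. Write $I_i=[a_i,b_i]$. At $x=a:=(a_1,\dots,a_k)$ we have $|a_i-t|=t-a_i$, so the condition yields $\sum_i(\mathbb E_{p^i}-a_i)\geq\sum_i(\mathbb E_{q^i}-a_i)$. Similarly, $x=b$ gives $\sum_i(b_i-\mathbb E_{p^i})\geq \sum_i (b_i-\mathbb E_{q^i})$, and together these force $\sum_i \mathbb E_{p^i}=\sum_i\mathbb E_{q^i}$. This alone is not coordinatewise, so I next take mixed corners. Let $x$ have $x_j=a_j$ and $x_i=b_i$ for all $i\neq j$, and compare it with the corner $b$. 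Subtracting the two conditions is not legitimate directly, since both are inequalities. Instead, at the mixed corner every coordinate term is affine in $t$, and the condition reads $(\mathbb E_{p^j}-\mathbb E_{q^j})-\sum_{i\neq j}(\mathbb E_{p^i}-\mathbb E_{q^i})\geq 0$. Writing $d_i=\mathbb E_{p^i}-\mathbb E_{q^i}$, the corner $b$ gives $-\sum_i d_i\geq0$, the corner $a$ gives $\sum_i d_i\geq0$, and the mixed corners give $2d_j-\sum_i d_i\geq 0$, i.e. $d_j\geq0$ for all $j$. Combined with $\sum_i d_i=0$, this gives $d_j=0$ for every $j$, so the marginals have equal means.

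The second step is to get $g^j_p(s)\geq g^j_q(s)$ for every $s\in I_j$. Take $x$ with $x_j=s$ and $x_i=b_i$ for $i\neq j$. For $i\neq j$ the terms are $b_i-\mathbb E_{p^i}$ and $b_i-\mathbb E_{q^i}$, which are equal by the first step, so they cancel and leave $g^j_p(s)\geq g^j_q(s)$. The final step is the classical one-dimensional fact that, on a compact interval, equal means together with $\int|s-t|\mathrm{d}p^j\geq\int|s-t|\mathrm{d}q^j$ for all $s\in I_j$ imply $p^j\geq_{\mathrm{cvx}}q^j$. With equal means, $\int (t-s)^+\mathrm{d}\mu=\tfrac12\bigl(\int|t-s|\mathrm{d}\mu+\mathbb E_\mu-s\bigr)$, so the hypothesis becomes the stop-loss order. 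Together with equal means, this is equivalent to the convex order; see, e.g., \cite{ShakedShantikumar}. I would cite that result, or sketch it by approximating a convex $\phi$ on $I_j$ uniformly by affine functions plus nonnegative combinations of hinges $(t-s)^+$.

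The main obstacle is the first step. It requires extracting coordinatewise equality of means from an inequality indexed by points of $X$, and the corner argument above is what handles it.
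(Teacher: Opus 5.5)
Your proof is correct and follows essentially the same route as the paper. You reduce via Proposition~\ref{prop:affinecoremeasure} and the additive separability of $\lVert\cdot\rVert_1$, then use corner points of $X$ to force coordinatewise equality of the marginal means, which is the paper's Lemma~\ref{lem:samemeanL1}; the paper uses the corners $x(G)$ for $G=\{i\}$, $[k]\setminus\{i\}$, $[k]$, $\emptyset$, whereas you need only $a$, $b$ and the single-$a_j$ corners plus a sign argument. You then freeze the other coordinates at an endpoint to isolate one marginal and apply the one-dimensional equal-means/absolute-deviation characterization of the convex order (the paper's Proposition~\ref{prop:convexorder}, via Shaked--Shanthikumar), for which your stop-loss identity and hinge-approximation sketch are a valid self-contained substitute.
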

\noindent The proposition shows that, under the $L^1$-metric, lottery $p$ is more complex than $q$ in the affine-core sense if and only if each marginal distribution of $p$ is more dispersed than the corresponding marginal distribution of $q$ in the convex order. The reason is the additivity of the $L^1$-norm: expected distance from a benchmark $x$ separates into the sum of the expected coordinatewise distances from $x_i$,
\[
\int \lVert x-y\rVert_1\mathrm{d}p(y)=\sum_{i\in [k]}\int \lvert x_i-t\rvert \mathrm{d}p^i(t)
\]
for all $p\in \DX$ and $x\in X$.

This characterization also shows that the affine core of $C_{\dd}$ is generally weaker than the convex order. It only compares marginal dispersion and therefore ignores dependence across coordinates.

\begin{example}
Let $X=[-1,1]^2$, and let $\dd$ be induced by the $L^1$-norm. Consider the lotteries
\[
p=\frac{1}{2}\delta_{(1,1)}+\frac{1}{2}\delta_{(-1,-1)}
\]
and
\[
q=\frac{1}{2}\delta_{(1,-1)}+\frac{1}{2}\delta_{(-1,1)}.
\]
The two lotteries have the same marginals. Indeed, for each coordinate $i=1,2$,
\[
p^i=q^i=\frac{1}{2}\delta_{-1}+\frac{1}{2}\delta_1.
\]
Therefore, by Proposition \ref{prop:marginals_convexorder},
\[
p\geq_{\dd}q
\quad\text{and}\quad
q\geq_{\dd}p.
\]
However, $p$ and $q$ are not comparable in the multivariate convex order. To see this, consider the convex functions
\[
\phi:(x_1,x_2)\mapsto (x_1+x_2)^2
\quad\text{and}\quad
\psi:(x_1,x_2)\mapsto (x_1-x_2)^2.
\]
Then
\[
\int \phi\mathrm{d}p=4>0=\int \phi\mathrm{d}q,
\]
whereas
\[
\int \psi \mathrm{d}p=0<4=\int \psi\mathrm{d}q.
\]
Thus, neither lottery dominates the other in the convex order. The example shows that, under the $L^1$-metric, the affine core captures marginal dispersion but ignores dependence across coordinates.
\end{example}

In the one-dimensional case, the distinction disappears. The $L^1$-metric is simply the absolute-value metric, and the affine core coincides with the usual convex order.

\begin{corollary}\label{prop}
If $X$ is a compact interval in $\mathbb R$ and $\dd:(x,y)\mapsto |x-y|$, then, for all $p,q\in\DX$,
\[
p\geq_\dd q
\Longleftrightarrow
p\geq_{\mathrm{cvx}} q.
\]
\end{corollary}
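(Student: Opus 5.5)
This follows from Proposition \ref{prop:marginals_convexorder} with $k=1$. When $X=I_1$ is a compact interval, $\lVert x-y\rVert_1=|x-y|$, and the only marginal of $p$ is $p$ itself. The proposition then gives directly
\[
p\geq_\dd q \Longleftrightarrow p\geq_{\mathrm{cvx}} q .
\]

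For a self-contained argument, one can also combine Proposition \ref{prop:affinecoremeasure} with Lemma \ref{lem}. The direction $p\geq_{\mathrm{cvx}}q\Rightarrow p\geq_\dd q$ is exactly Lemma \ref{lem} applied to the absolute-value norm on $\mathbb R$. For the converse, Proposition \ref{prop:affinecoremeasure} turns $p\geq_\dd q$ into the inequality
\[
\int|x-y|\,\mathrm{d}p(y)\geq \int|x-y|\,\mathrm{d}q(y)\quad\text{for all }x\in X=[a,b].
\]
It remains to show that these inequalities force $p\geq_{\mathrm{cvx}}q$. The plan has two steps.

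First, show that the means agree. Take $x=a$: since $|a-y|=y-a$ on $X$, we get $\int y\,\mathrm{d}p\geq\int y\,\mathrm{d}q$. Take $x=b$: since $|b-y|=b-y$, we get the reverse inequality. Hence the means are equal, and every affine function has the same integral under $p$ and $q$.

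Second, approximate an arbitrary $\phi\in\mathrm{cvx}([a,b])$ uniformly by functions of the form
\[
\beta+\gamma y+\sum_j c_j|x_j-y|,\qquad c_j\geq 0,\ x_j\in[a,b].
\]
This is the step where care is needed. The natural route is piecewise linear interpolation of $\phi$ on a fine grid of $[a,b]$. The interpolant $\phi_n$ is convex, and uniform continuity of $\phi$ gives $\phi_n\to\phi$ uniformly. At each interior kink $x_j$, the interpolant has a nonnegative change in slope $s_j$, and
\[
\phi_n(y)=\text{affine}(y)+\sum_j \tfrac{s_j}{2}\,|y-x_j| .
\]
This identity holds because $|y-x_j|$ contributes slope change exactly $2$ at $x_j$.

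With this representation, $\int\phi_n\,\mathrm{d}p\geq\int\phi_n\,\mathrm{d}q$ follows from the two facts above: affine parts integrate equally, and each term $c_j|x_j-y|$ with $c_j\geq0$ integrates at least as much under $p$. Passing to the limit by uniform convergence gives $\int\phi\,\mathrm{d}p\geq\int\phi\,\mathrm{d}q$, hence $p\geq_{\mathrm{cvx}}q$.
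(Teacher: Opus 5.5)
Your self-contained argument is correct and is essentially the paper's proof, which appears as Proposition \ref{prop:convexorder} in the appendix. You use Lemma \ref{lem} for one direction, and for the other you use Proposition \ref{prop:affinecoremeasure} with evaluation at the endpoints to equate the means. Your piecewise-linear interpolation (affine part plus a nonnegative combination of the functions $|\cdot-x_j|$) just proves directly the convex-order characterization that the paper imports from Theorem 3.A.2 of \cite{ShakedShantikumar}.

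Do not rely on your first paragraph, however. In the appendix, the proof of Proposition \ref{prop:marginals_convexorder} itself invokes this one-dimensional result, so deducing the corollary from it with $k=1$ would be circular relative to the paper's proof structure.
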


\noindent
Thus, on the real line, the affine core of $C_{\dd}$ is exactly a pure dispersion order. In this case, the affine core of $C_\dd$ does not merely agree with the convex order; it reproduces exactly the standard mean-preserving-spread comparison.

\begin{remark}
The coordinatewise characterization in Proposition \ref{prop:marginals_convexorder} relies on the additivity of the $L^1$-norm. For $L^p$-metrics with $p>1$, expected distance from a benchmark generally depends on the joint distribution and cannot be reduced to marginal convex-order comparisons. The general implication from convex order to $\geq_{\dd}$ remains valid for all norm-induced metrics, but the converse and the marginal characterization are special to the $L^1$ case.
\end{remark}

\subsubsection{Maximal complexity with respect to the affine core}\label{sect:maximalorder}

The connection with convex order has a further implication. It allows one to use the geometry of the outcome space to identify where maximal complexity can arise. If complexity is strongly associated with dispersion, and dispersion can be increased by mean-preserving spreads, then maximal complexity should be associated with lotteries that put probability on the extreme points of the outcome set.

Assuming a linear structure on $X$ and that $\dd$ is induced by a norm, the relation between $\geq_\dd$ and the convex order suggests that one can increase the complexity of a lottery by moving probability mass away from interior outcomes and toward extreme outcomes, while preserving the barycenter of the lottery. Thus, maximally complex lotteries should place probability only on outcomes that cannot themselves be decomposed as mixtures of other outcomes.

The next result formalizes this idea. It shows that every lottery can be made weakly more complex, in the affine-core sense, by replacing its outcomes with lotteries supported on the extreme points of $X$.

\begin{proposition}\label{prop:extremepoints}
Suppose $X$ is a nonempty, compact, convex subset of a normed space, and suppose $\dd$ is induced by a continuous norm. Then, for every $p\in \DX$, there exists $p^*\in \DX$ such that
$$
p^*(\mathrm{ext}(X))=1\ \textnormal{and}\
p^*\geq_\dd p.
$$
\end{proposition}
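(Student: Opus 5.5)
The plan is to use Choquet theory: represent each outcome $y\in X$ as the barycenter of a probability measure supported on the extreme points, then mix these representing measures according to $p$. The resulting lottery $p^*$ is a mean-preserving spread of $p$. Lemma \ref{lem} then gives $p^*\geq_{\dd}p$. A small subtlety is that the setting is a normed space, possibly infinite-dimensional. Compactness of $X$ is what makes the Choquet–Bishop–de Leeuw machinery work in general. The separability needed for $\mathrm{ext}(X)$ to be a Borel ($G_\delta$) set holds because $X$ is compact metrizable.

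\textbf{Step 1: representing measures on extreme points.} By the Choquet theorem (for metrizable compact convex sets in a locally convex space), for every $y\in X$ there is a Borel probability $\mu_y$ on $X$ with $\mu_y(\mathrm{ext}(X))=1$ and barycenter $y$. That is, $\phi(y)=\int \phi\,\mathrm{d}\mu_y$ for all continuous affine $\phi$. Consequently, $\phi(y)\le \int\phi\,\mathrm{d}\mu_y$ for all continuous convex $\phi$. This last inequality also follows from Jensen's inequality, since a continuous convex function on compact $X$ is a supremum of continuous affine functions.

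\textbf{Step 2: measurable selection.} This is the main obstacle. To define $p^*(A)=\int \mu_y(A)\,\mathrm{d}p(y)$, I need $y\mapsto\mu_y$ to be measurable, i.e., a Markov kernel. I would try two routes.

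\emph{Route (a), measurable selection.} The set $\{(y,\mu):\mu\in\triangle(X),\ \mathrm{bar}(\mu)=y\}$ is closed in $X\times\triangle(X)$. The maximal measures form a Borel set in the compact metrizable setting. A Kuratowski–Ryll-Nardzewski or von Neumann selection then yields a universally measurable kernel, which suffices for integration against $p$.

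\emph{Route (b), avoiding selection.} Consider the set $M=\{\nu\in\triangle(X):\nu\geq_{\mathrm{cvx}}p\}$. It is nonempty and weakly compact, by Prokhorov and closedness of the convex-order inequalities. Take a maximal element of $M$ in the Choquet order, which exists by Zorn's lemma and compactness. Then argue, as in the proof of the Choquet–Bishop–de Leeuw theorem, that a maximal measure is concentrated on $\mathrm{ext}(X)$: in the metrizable case, maximality forces $\int f\,\mathrm{d}\nu=\int \hat f\,\mathrm{d}\nu$ for a strictly convex continuous $f$, where $\hat f$ is the upper concave envelope, and $\{f=\hat f\}\subseteq\mathrm{ext}(X)$. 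Route (b) is cleaner, and I would pursue it first.

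\textbf{Step 3: conclude.} With $p^*$ supported on $\mathrm{ext}(X)$ and $p^*\geq_{\mathrm{cvx}}p$, the convex-order inequality follows for every continuous convex $\phi$. In Route (a), $\int\phi\,\mathrm{d}p^*=\int\!\int\phi\,\mathrm{d}\mu_y\,\mathrm{d}p(y)\ge\int\phi\,\mathrm{d}p$; in Route (b), it holds by construction. Each map $y\mapsto\lVert x-y\rVert$ is continuous and convex. Lemma \ref{lem}, or directly Proposition \ref{prop:affinecoremeasure}, then gives $p^*\geq_{\dd}p$, completing the argument.
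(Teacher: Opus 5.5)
Your argument is correct, and your route (a) is essentially the paper's. The paper derives the proposition from Lemma \ref{lem:forextremepoints}. That lemma takes the Choquet--Cartier dilation theorem (Theorem \ref{thm:Cartier}) as a black box to get a kernel $K$ with $\int y\,\mathrm{d}K(x)(y)=x$ and $K(x)(\mathrm{ext}(X))=1$ $p$-a.e. It then sets $q=\int K(x)\,\mathrm{d}p(x)$ and passes from $q\geq_{\mathrm{cvx}}p$ to $q\geq_{\dd}p$ via Lemma \ref{lem}. Your pointwise Choquet representation plus selection reconstructs that theorem, and it makes explicit the measurability issue that the citation hides.

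One small correction to route (a): Kuratowski--Ryll-Nardzewski needs closed values. The set of measures with barycenter $y$ concentrated on $\mathrm{ext}(X)$ is not closed unless $\mathrm{ext}(X)$ is. The right tool is Jankov--von Neumann applied to the Borel graph, and the universally measurable selection it gives is enough.

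Your preferred route (b) is genuinely different and, for this statement, more economical. Since $M$ is an upper set of the Choquet order, a maximal element of $M$ is a maximal measure outright. The metrizable Choquet--Bishop--de Leeuw argument then places it on $\mathrm{ext}(X)$. That argument uses a measure dominating $\nu$ in the Choquet order whose $f$-integral equals $\int\hat f\,\mathrm{d}\nu$, together with a strictly convex continuous $f$, which exists because $X$ is compact metrizable. No kernel or selection theorem is needed.

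What route (b) gives up is the fibrewise structure. The paper keeps the kernel because Lemma \ref{lem:forextremepoints} also shows $q>_{\dd}p$ whenever $K(x)\neq\delta_x$ on a set of positive $p$-measure. It does this by applying Jensen's inequality to each $K(x)$ and then using a countable dense set of benchmarks. That strict version is what Proposition \ref{prop:extremepointsuniqueness} needs. From route (b) alone you only know $p^*\geq_{\mathrm{cvx}}p$ and $p^*\neq p$, and upgrading this to a strict inequality for some $\lVert x-\cdot\rVert$ is exactly where the dilation is used.
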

\noindent The proposition says that any lottery can be ``complexified'' by spreading its mass toward the extreme points of the outcome space. Since $\dd$ is induced by a norm, convex order dominance implies dominance in the affine core. Hence, it is enough to replace each outcome by a lottery over extreme outcomes with the same barycenter. The resulting lottery is a mean-preserving spread of the original one and is therefore weakly more complex according to $\geq_{\dd}$.

This result also gives existence of maximally complex lotteries. Since $X$ is compact, $\DX$ is compact in the weak topology; moreover, the affine-core relation $\geq_{\dd}$ is continuous. Therefore, maximal elements of $(\DX,\geq_{\dd})$ exist.\footnote{We recall here the definition of maximal elements. Given a binary relation $\succsim$ on $\DX$, possibly incomplete, and a subset $A\subseteq \DX$, we denote by $\textnormal{MAX}(A,\succsim)$ $\succsim$-maximal elements in $A$ i.e.,
\begin{align*}
\textnormal{MAX}(A,\succsim)&=\left\lbrace p\in A:\not\exists q\in A\ \textnormal{s.t.}\ q \succ p\right\rbrace.
\end{align*} 
}

Proposition \ref{prop:extremepoints} implies that the search for such maximal elements can be restricted to lotteries supported on the extreme points of $X$. Also the converse holds: every maximally complex lottery must be concentrated on extreme outcomes. 

\begin{proposition}\label{prop:extremepointsuniqueness}
Suppose $X$ is a nonempty, compact, convex subset of a normed space, and suppose $\dd$ is induced by a continuous norm. If $p\in \mathrm{MAX}(\DX,\geq_\dd)$, then $p(\mathrm{ext}(X))=1.$
\end{proposition}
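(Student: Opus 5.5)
The plan is to argue by contradiction-free ``rigidity'': start from a maximal $p$, build the spread $p^*$ of Proposition \ref{prop:extremepoints}, use maximality to get equality of all distance integrals, and then show that this equality forces $p^*=p$. Since $p^*$ is concentrated on $\mathrm{ext}(X)$, so is $p$.

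\textbf{Step 1: the spread as a kernel.} Fix $p\in\mathrm{MAX}(\DX,\geq_\dd)$. As in the proof of Proposition \ref{prop:extremepoints}, I will take a measurable family $(\mu_y)_{y\in X}$ of lotteries. Each $\mu_y$ is concentrated on $\mathrm{ext}(X)$, has barycenter $y$, and satisfies $\mu_y=\delta_y$ whenever $y\in\mathrm{ext}(X)$. These are Choquet representing measures; in finite dimension Carath\'eodory's theorem gives them with finite support. I then set $p^*=\int\mu_y\,\mathrm{d}p(y)$. By Jensen's inequality applied to the convex map $z\mapsto\lVert x-z\rVert$, for every $x\in X$ and every $y$,
\[
h_y(x):=\int\lVert x-z\rVert\,\mathrm{d}\mu_y(z)-\lVert x-y\rVert\ \geq 0 .
\]
Integrating against $p$ gives $p^*\geq_\dd p$ via Proposition \ref{prop:affinecoremeasure}. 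This is also the content of Lemma \ref{lem}.

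\textbf{Step 2: maximality gives equality.} Since $p$ is maximal, $p^*\succ p$ is impossible in the asymmetric part of $\geq_\dd$. Together with $p^*\geq_\dd p$ this yields $p\geq_\dd p^*$. By Proposition \ref{prop:affinecoremeasure}, the function $H(x):=\int h_y(x)\,\mathrm{d}p(y)$ therefore vanishes for every $x\in X$.

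\textbf{Step 3: rigidity.} I will integrate $H$ against $p$ itself. This gives $\iint h_y(x)\,\mathrm{d}p(x)\,\mathrm{d}p(y)=0$. The integrand is nonnegative and jointly measurable, so by Fubini there is a $p$-full set $G$ of $y$ such that $h_y(x)=0$ for $p$-a.e. $x$. For fixed $y$, the map $h_y$ is continuous in $x$, since it is a difference of 1-Lipschitz functions. It follows that $h_y\equiv0$ on $\supp(p)$ for every $y\in G$. Since $p(\supp p)=1$, the set $G\cap\supp(p)$ still has full $p$-measure. For $y$ in this set, evaluating at $x=y$ gives $\int\lVert y-z\rVert\,\mathrm{d}\mu_y(z)=h_y(y)=0$, hence $\mu_y=\delta_y$. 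But $\mu_y$ is concentrated on $\mathrm{ext}(X)$, so $y\in\mathrm{ext}(X)$. Therefore $p$-a.e. $y$ is an extreme point. One also needs $\mathrm{ext}(X)$ to be Borel; it is $G_\delta$ in a compact metrizable convex set, so this holds. The conclusion is $p(\mathrm{ext}(X))=1$, and in fact $p^*=p$.

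The main obstacle I expect is Step 1's measurability: selecting the representing measures $\mu_y$ measurably in $y$, with $\mu_y$ concentrated on the extreme points. In finite dimension, I would first try a measurable selection of Carath\'eodory decompositions using the Kuratowski--Ryll-Nardzewski theorem applied to the closed correspondence of representing measures on $\overline{\mathrm{ext}(X)}$. If the construction used in Proposition \ref{prop:extremepoints} already provides such a kernel, I would simply reuse it. Once the kernel is in hand, Steps 2--3 are routine. The key trick is testing the equality at $x=y$, where $\lVert y-\cdot\rVert$ has its kink, so Jensen's inequality is strict unless $\mu_y=\delta_y$.
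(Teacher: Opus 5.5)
Your argument is correct and is essentially the paper's proof. The paper also builds $p^*$ from the Choquet--Cartier dilation kernel (Theorem \ref{thm:Cartier} and Lemma \ref{lem:forextremepoints}) and uses the strict Jensen gap at the benchmark equal to the dilated point; the only difference is the last step. The paper argues by contradiction and picks a single strictly separating benchmark $z_m$ from a countable dense set, using continuity of $g_z(x)$ in the benchmark $z$. You instead turn maximality into equality of all benchmark integrals, average over benchmarks drawn from $p$, and use continuity together with $\supp(p)$ to evaluate at $x=y$.

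For Step 1, reuse that kernel rather than your fallback of selecting representing measures on $\overline{\mathrm{ext}(X)}$. When $\mathrm{ext}(X)$ is not closed (already possible in $\mathbb{R}^3$, and the statement covers general normed spaces), such measures can charge non-extreme limit points, e.g.\ $\mu_y=\delta_y$ for $y\in\overline{\mathrm{ext}(X)}\setminus\mathrm{ext}(X)$. Step 3 would then only give $p(\overline{\mathrm{ext}(X)})=1$.
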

\noindent The idea of the proof is as follows. If $p$ assigns positive probability to a nonextreme outcome, then that outcome can be expressed as a nontrivial convex combination of two distinct outcomes in $X$. Replacing this outcome by the corresponding lottery preserves its barycenter and therefore generates a mean-preserving spread. Convexity of the norm implies that the dilation weakly increases expected distance from every benchmark. If the dilation is nontrivial on a set of positive probability, we show that the increase is strict for at least one benchmark (the details are in the Appendix).

Taken together, Propositions \ref{prop:extremepoints} and \ref{prop:extremepointsuniqueness} show that, under norm-based dissimilarity, maximal complexity is attained at the boundary of the outcome space.

\section{Choice under risk with complexity}\label{Section:CAEU}

We now use the complexity index $C_{\dd}$ to study choice under risk. The primitive object in this section is a preference relation $\succsim$ on $\DX$, interpreted as the decision maker's ranking of lotteries. The objective is to incorporate complexity concerns into an otherwise standard expected-utility criterion.

We say that preferences  $\succsim$ admit a \textit{Complexity-Adjusted Expected Utility representation} (CAEU) if they are represented by a function $V:\DX\to \R$ defined by
\[
V:p\mapsto\mathbb E_p[u]-\lambda C_{\dd}(p).
\]
for some continuous utility function on outcomes $u$, $\lambda\in \R$, and continuous metric $\dd$ on $X$. The first term is standard expected utility of the lottery. The second term adjusts expected utility by the perceived complexity of the lottery. The parameter $\lambda$ captures the decision maker's attitude toward complexity. If $\lambda>0$, complexity is costly: holding expected utility fixed, the decision maker prefers simpler lotteries. If $\lambda<0$, complexity is preferred. If $\lambda=0$, the representation reduces to expected utility.

Because $C_{\dd}$ is concave and expected utility is affine in probabilities, the curvature of $V$ depends on the sign of $\lambda$. When $\lambda\geq0$, the term $-\lambda C_{\dd}$ is convex, and hence $V$ is convex. In this case, the decision maker dislikes randomization over lotteries: if $p\sim q$, then mixtures of $p$ and $q$ are weakly worse than the original lotteries. By contrast, when $\lambda\leq0$, the complexity term is concave, and hence $V$ is concave; in that case, the decision maker exhibits a preference for randomization.\footnote{Here preference for randomization refers to the convexity of preferences: if $p\sim q$, then $\alpha p+(1-\alpha)q\succsim p$ for all $\alpha\in[0,1]$. Dislike for randomzation will be referred to as \textit{concavity of preferences}, defined analogously.}

When $\lambda\geq0$, complexity aversion can generate a stronger form of dislike for randomization, which we call \textit{preference for concentration}. This property is stated for convex outcome spaces. Suppose $X$ is a convex subset of a vector space. A preference relation $\succsim$ exhibits preference for concentration if, for all $x,y\in X$, $p,q\in\DX$, and $\alpha\in[0,1]$,
\[
\left(\delta_x\sim p\ \textnormal{and}\ \delta_y\sim q\right)
\Longrightarrow
\delta_{\alpha x+(1-\alpha)y}
\succsim
\alpha p+(1-\alpha)q.
\]
The property compares two ways of combining equally valued objects. On the right-hand side, the decision maker randomizes between the lotteries $p$ and $q$. On the left-hand side, each lottery is first replaced by an indifferent degenerate outcome, and these degenerate outcomes are then averaged into a single certain outcome. Preference for concentration requires the concentrated alternative to be weakly preferred.

\begin{proposition}\label{prop:prefforconcent}
Suppose that $X$ is a nonempty, compact, and convex subset of a vector space. If $\succsim$ admits a CAEU representation with $u$ concave, $\dd$ a continuous metric, and $\lambda\geq 0$, then $\succsim$ exhibits preference for concentration.
\end{proposition}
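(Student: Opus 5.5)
We must show: if $\delta_x\sim p$ and $\delta_y\sim q$, then $V(\delta_{\alpha x+(1-\alpha)y})\geq V(\alpha p+(1-\alpha)q)$. The plan is to bound both sides separately through the value $\alpha u(x)+(1-\alpha)u(y)$. Since $V(\delta_z)=u(z)-\lambda C_\dd(\delta_z)=u(z)$ (degenerate lotteries have zero complexity), the indifferences give
\[
u(x)=\mathbb{E}_p[u]-\lambda C_\dd(p),\qquad u(y)=\mathbb{E}_q[u]-\lambda C_\dd(q).
\]

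For the left-hand side, concavity of $u$ on the convex set $X$ (note $\alpha x+(1-\alpha)y\in X$ by convexity) gives
\[
V(\delta_{\alpha x+(1-\alpha)y})=u(\alpha x+(1-\alpha)y)\geq \alpha u(x)+(1-\alpha)u(y).
\]

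For the right-hand side, we use that $C_\dd$ is concave, being a pointwise minimum over $z\in X$ of the affine maps $p\mapsto\int\dd(z,w)\,\mathrm{d}p(w)$. Concretely, pick $z$ a $\dd$-median of $\alpha p+(1-\alpha)q$. Then
\[
C_\dd(\alpha p+(1-\alpha)q)=\alpha\int \dd(z,\cdot)\,\mathrm{d}p+(1-\alpha)\int\dd(z,\cdot)\,\mathrm{d}q\geq \alpha C_\dd(p)+(1-\alpha)C_\dd(q).
\]
Since $\lambda\geq0$ and expected utility is affine in the lottery, this yields
\[
V(\alpha p+(1-\alpha)q)\leq \alpha\big(\mathbb{E}_p[u]-\lambda C_\dd(p)\big)+(1-\alpha)\big(\mathbb{E}_q[u]-\lambda C_\dd(q)\big)=\alpha u(x)+(1-\alpha)u(y).
\]

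Chaining the two inequalities gives $V(\delta_{\alpha x+(1-\alpha)y})\geq V(\alpha p+(1-\alpha)q)$, which is the desired preference for concentration. No step is a real obstacle. The only points to handle carefully are that the concavity of $C_\dd$ must be verified directly from the definition as a minimum of affine functionals, and that $\lambda\geq 0$ is exactly what turns this concavity into the convexity of $V$ needed for the upper bound.
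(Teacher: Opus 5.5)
Your proposal is correct and follows the paper's proof essentially line by line: $V(\delta_z)=u(z)$ together with concavity of $u$ bounds the left side below by $\alpha u(x)+(1-\alpha)u(y)$, and concavity of $C_\dd$ (which the paper proves in Lemma~\ref{basic properties} the same way you verify it) combined with $\lambda\geq 0$ bounds $V(\alpha p+(1-\alpha)q)$ above by the same quantity. The two bounds chain to give preference for concentration.
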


\noindent Preference for concentration goes beyond ordinary dislike for randomization. Indeed, concavity of preferences compares mixtures of indifferent lotteries with the lotteries themselves. Preference for concentration instead compares a mixture of lotteries with a degenerate lottery obtained by first replacing each component with an indifferent certainty equivalent and then averaging these certainty equivalents. Hence, the property captures the idea that the agent dislikes not only risk in outcomes, but also the additional complexity generated by keeping random alternatives unresolved.

\subsection{Monotonicity}\label{sect}

We now study when CAEU preferences are consistent with standard stochastic dominance requirements. A lottery that dominates another in the usual stochastic sense may also be more complex, and a sufficiently complexity-sensitive decision maker may fail to rank it better. The purpose of this subsection is to characterize exactly when such failures cannot occur.

Following \cite{Machina_EU_no_indep} and the more general approach of \cite{Stoch_dom_no_indep}, we study monotonicity through the local utilities of CAEU preferences. The relevant object is the affine core of the preference relation.

Given a binary relation $\succsim$ on $\DX$, its \textit{affine core}, $\succsim^*$, is defined by
$$
p\succsim^* q
\Longleftrightarrow
\forall \alpha\in(0,1],\ \forall \ell\in\DX,\quad
\alpha p+(1-\alpha)\ell
\succsim
\alpha q+(1-\alpha)\ell.
$$
Thus, $p\succsim^*q$ means that $p$ remains weakly preferred to $q$ after both lotteries are mixed with any common lottery. The affine core extracts the part of $\succsim$ that behaves linearly in mixtures. It is well known that, when $\succsim$ is a preorder, $\succsim^*$ is itself a preorder and is the largest affine subrelation of $\succsim$ \citep[Proposition 3.3]{cerreia2018rational}.\footnote{The affine core was introduced in the context of risky choice by \cite{cerreia_JMP}. In decision under ambiguity, related notions appear in \cite{Differentiating_ambiguity} and \cite{Objective_GMMS}.}

The next proposition characterizes the affine core of CAEU preferences. It shows that the relevant local utilities are obtained by subtracting from $u$ the distance from each possible benchmark outcome.

\begin{proposition}\label{prop:affinecoreCAEU}
Suppose $\succsim$ admits a CAEU representation with utility $u$, $\lambda\in \R$, and a continuous metric $\dd$ on $X$. Then, for all $p,q\in\DX$,
$$
p\succsim^* q
\Longleftrightarrow
\forall x\in X,\
\int \left[u(y)-\lambda\dd(x,y)\right]\mathrm{d}p(y)
\geq
\int \left[u(y)-\lambda\dd(x,y)\right]\mathrm{d}q(y).
$$
\end{proposition}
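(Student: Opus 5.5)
The plan is to rewrite $V$ as an envelope of affine functionals, indexed by the benchmark outcomes, and then to test the affine core against mixtures with a suitable degenerate lottery.

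For each $x\in X$, define the affine, weakly continuous map
\[
g_x(p)=\int \left[u(y)-\lambda\dd(x,y)\right]\mathrm{d}p(y).
\]
Since $C_\dd(p)=\min_{x}\int\dd(x,y)\mathrm{d}p(y)$, we have $V(p)=\max_{x\in X}g_x(p)$ when $\lambda\geq 0$, and $V(p)=\min_{x\in X}g_x(p)$ when $\lambda<0$. The case $\lambda=0$ is expected utility, for which the affine core is $\succsim$ itself, and the statement is immediate. The claim therefore says that the affine core of an upper or lower envelope of the family $\{g_x\}$ is unanimity over that family. This mirrors \Cref{prop:affinecoremeasure}, and I would follow the same scheme.

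For ($\Leftarrow$): if $g_x(p)\geq g_x(q)$ for every $x$, then affinity gives $g_x(\alpha p+(1-\alpha)\ell)\geq g_x(\alpha q+(1-\alpha)\ell)$ for every $x$, $\alpha\in(0,1]$ and $\ell\in\DX$. Taking the maximum (respectively minimum) over $x$ preserves the inequality, so $\alpha p+(1-\alpha)\ell\succsim\alpha q+(1-\alpha)\ell$, which is $p\succsim^* q$.

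For ($\Rightarrow$), argue by contraposition. Suppose $g_{x_0}(p)<g_{x_0}(q)$ for some $x_0$. Take $\ell=\delta_{x_0}$ and write $p_\alpha=\alpha p+(1-\alpha)\delta_{x_0}$, and similarly $q_\alpha$. Using $g_x(\delta_{x_0})=u(x_0)-\lambda\dd(x,x_0)$ and setting $t=(1-\alpha)/\alpha$, for $\lambda>0$ we get
\[
V(p_\alpha)=(1-\alpha)u(x_0)+\alpha\max_{x\in X}\left[g_x(p)-t\lambda\dd(x,x_0)\right].
\]
Choosing $x=x_0$ in the maximum for $q_\alpha$ gives $V(q_\alpha)\geq(1-\alpha)u(x_0)+\alpha g_{x_0}(q)$. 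It therefore suffices to show
\[
h(t)=\max_x\left[g_x(p)-t\lambda\dd(x,x_0)\right]\to g_{x_0}(p)\quad\text{as } t\to\infty.
\]
Once this holds, for $\alpha$ small we get $V(p_\alpha)<V(q_\alpha)$, so $p\not\succsim^*q$. The case $\lambda<0$ is symmetric: $V(p_\alpha)\leq(1-\alpha)u(x_0)+\alpha g_{x_0}(p)$, while
\[
V(q_\alpha)=(1-\alpha)u(x_0)+\alpha\min_x\left[g_x(q)+t|\lambda|\dd(x,x_0)\right],
\]
and the inner minimum tends to $g_{x_0}(q)$.

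The main obstacle is this limit, a penalty-method argument. The bound $h(t)\geq g_{x_0}(p)$ is trivial. For the reverse bound, let $x_t$ be a maximizer, which exists by compactness and continuity of $x\mapsto g_x(p)$. Because $g_x(p)$ is bounded, $t\lambda\dd(x_t,x_0)$ stays bounded, so $\dd(x_t,x_0)\to 0$. Continuity of $\dd$ together with compactness of $X$ under $\rho$ then gives $x_t\to x_0$ in $\rho$. Hence $\limsup_t h(t)\leq\limsup_t g_{x_t}(p)=g_{x_0}(p)$, using continuity of $x\mapsto g_x(p)$, which follows from dominated convergence since $\dd$ is continuous and bounded. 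The one delicate point is passing from $\dd(x_t,x_0)\to0$ to $\rho$-convergence; I would handle it by extracting a $\rho$-convergent subsequence and using that $\dd$ is a metric, so its limit point must equal $x_0$.
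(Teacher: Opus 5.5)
Your proof is correct, and the ($\Leftarrow$) direction is the same as the paper's: each local-utility integral is affine, and the upper envelope (for $\lambda\geq0$) or lower envelope (for $\lambda<0$) preserves the inequalities.

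For ($\Rightarrow$) the paper also tests against $\ell=\delta_x$, but it argues directly and takes no limit. By Lemma~\ref{basic properties} (item 6), $C_\dd(\alpha p+(1-\alpha)\delta_x)=\alpha\int\dd(x,y)\,\mathrm{d}p(y)$ for every $\alpha\in(0,1/2]$. So $V$ is affine along such mixtures. The inequality $V(\alpha p+(1-\alpha)\delta_x)\geq V(\alpha q+(1-\alpha)\delta_x)$ then becomes the local-utility inequality after cancelling $(1-\alpha)u(x)$ and dividing by $\alpha$.

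In your notation, that identity says the penalty is exact rather than asymptotic. For $t\geq1$ the triangle inequality gives $\int\dd(x,y)\,\mathrm{d}p(y)+t\,\dd(x,x_0)\geq\int\dd(x_0,y)\,\mathrm{d}p(y)$. Hence $h(t)=g_{x_0}(p)$ as soon as $\alpha\leq1/2$, and likewise the inner minimum for $q$ equals $g_{x_0}(q)$ when $\lambda<0$. With a genuine metric, the maximizer extraction and the passage to $\rho$-convergence are therefore unnecessary.

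Your route buys generality in exchange. It never uses symmetry or the triangle inequality, only continuity of the dissimilarity on the compact space and the fact that vanishing at $(z,x_0)$ forces $z=x_0$. The same argument therefore identifies the local utilities $u(y)-\lambda\Delta(x,y)$ for the non-metric dissimilarities $\Delta$ of the final section, provided $\Delta$ vanishes only on the diagonal. There the exact identity of Lemma~\ref{basic properties} can fail, and your compactness step is genuinely needed.
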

\noindent Following the terminology of \cite{Machina_EU_no_indep}, we call the functions
$$
\left\lbrace v_x:y\mapsto u(y)-\lambda\dd(x,y):x\in X\right\rbrace
$$
the \textit{local utilities} of CAEU preferences. Proposition \ref{prop:affinecoreCAEU} is useful because it reduces questions about stochastic dominance for CAEU preferences to questions about the family of local utilities. In particular, CAEU preferences are consistent with a given integral stochastic order whenever all their local utilities belong to the corresponding test-function class.

To state this point generally, recall that a binary relation $\geq_{\mathcal W}$ on $\DX$ is an \textit{integral stochastic order} if there exists a family $\mathcal W$ of continuous functions on $X$ such that
$$
p\geq_{\mathcal W}q
\Longleftrightarrow
\forall w\in\mathcal W,\quad
\int w\mathrm{d}p\geq\int w\mathrm{d}q.
$$
A preference relation $\succsim$ is \textit{consistent with} $\geq_{\mathcal W}$ if
$$
p\geq_{\mathcal W}q
\Longrightarrow
p\succsim q
$$
for all $p,q\in\DX$. Proposition \ref{prop:affinecoreCAEU} implies that CAEU preferences are consistent with $\geq_{\mathcal W}$ whenever each local utility $v_x$ belongs to the closed convex conic hull of $\mathcal W$, augmented with constant functions. The first and second stochastic dominance orders are obtained by choosing $\mathcal W$ to be the class of increasing functions, or the class of increasing concave functions, respectively (whenever concavity and monotonicity can be properly defined).

We first consider first-order stochastic dominance. Suppose $X$ is endowed with a partial order $\succeq$. First-order stochastic dominance is the integral stochastic order generated by increasing functions.\footnote{That is, functions $v:X\to \R$ such that $v(x)\geq v(y)$ whenever $x\succeq y$.}

\begin{corollary}\label{prop:FSDCAEU}
Suppose $\succsim$ admits a CAEU representation with utility $u$, $\lambda\in \R$, and a continuous metric $\dd$ and $X$ is endowed with a partial order $\succeq$. Then, the following are equivalent:
\begin{enumerate}
\item $\succsim$ is consistent with first-order stochastic dominance;
\item for all $x,y\in X$,
$$
x\succeq y
\Longrightarrow
u(x)-u(y)\geq |\lambda|\dd(x,y).
$$
\end{enumerate}
In particular, if $\succsim$ is consistent with first-order stochastic dominance, then $u$ is increasing.
\end{corollary}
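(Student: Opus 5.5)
The plan is to reduce everything to \Cref{prop:affinecoreCAEU}, using two facts. First, the affine core $\succsim^*$ is contained in $\succsim$: take $\alpha=1$ in its definition. Second, first-order stochastic dominance is itself an affine relation. Throughout, write $v_z(y)=u(y)-\lambda\dd(z,y)$ for the local utilities. These are continuous, since $u$ and $\dd$ are.

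\textbf{(2)$\Rightarrow$(1).} I will show that every local utility $v_z$ is increasing with respect to $\succeq$. Fix $z\in X$ and $x\succeq y$. By the triangle inequality, $|\dd(z,x)-\dd(z,y)|\leq \dd(x,y)$. Hence
\[
v_z(x)-v_z(y)=u(x)-u(y)-\lambda\bigl(\dd(z,x)-\dd(z,y)\bigr)\geq |\lambda|\dd(x,y)-|\lambda|\dd(x,y)=0 .
\]
So each $v_z$ is a continuous increasing function. Now let $p$ dominate $q$ in the first-order sense. Then $\int v_z\,\mathrm{d}p\geq \int v_z\,\mathrm{d}q$ for every $z\in X$. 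By \Cref{prop:affinecoreCAEU} this gives $p\succsim^* q$, and taking $\alpha=1$ yields $p\succsim q$.

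\textbf{(1)$\Rightarrow$(2).} Fix $x\succeq y$. Since every increasing $w$ satisfies $w(x)\geq w(y)$, $\delta_x$ first-order dominates $\delta_y$. Because integrals are linear in the measure, for every $\ell\in\DX$ and $\alpha\in(0,1]$,
\[
\int w\,\mathrm{d}\bigl(\alpha\delta_x+(1-\alpha)\ell\bigr)-\int w\,\mathrm{d}\bigl(\alpha\delta_y+(1-\alpha)\ell\bigr)=\alpha\bigl(w(x)-w(y)\bigr)\geq 0 .
\]
Thus $\alpha\delta_x+(1-\alpha)\ell$ dominates $\alpha\delta_y+(1-\alpha)\ell$, and consistency gives $\alpha\delta_x+(1-\alpha)\ell\succsim\alpha\delta_y+(1-\alpha)\ell$. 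This says exactly that $\delta_x\succsim^*\delta_y$.

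By \Cref{prop:affinecoreCAEU}, it follows that for every benchmark $z\in X$,
\[
u(x)-\lambda\dd(z,x)\geq u(y)-\lambda\dd(z,y).
\]
The key step is to choose the benchmark according to the sign of $\lambda$.
\begin{itemize}
\item If $\lambda\geq 0$, take $z=y$. This gives $u(x)-\lambda\dd(x,y)\geq u(y)$, i.e.\ $u(x)-u(y)\geq|\lambda|\dd(x,y)$.
\item If $\lambda<0$, take $z=x$. This gives $u(x)\geq u(y)-\lambda\dd(x,y)=u(y)+|\lambda|\dd(x,y)$, the same inequality.
\end{itemize}
The final claim is then immediate: $|\lambda|\dd(x,y)\geq 0$, so $x\succeq y$ implies $u(x)\geq u(y)$, and $u$ is increasing.

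There is no serious obstacle. The only point needing care is the necessity direction: consistency with dominance only delivers $\delta_x\succsim\delta_y$, which is too weak on its own. It must be upgraded to the affine-core statement $\delta_x\succsim^*\delta_y$, using that dominance is preserved under common mixtures. Only after that can \Cref{prop:affinecoreCAEU} be applied with the sign-dependent choice of benchmark $z$.
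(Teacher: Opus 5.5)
Your proof is correct and follows the paper's argument. Both directions run through Proposition \ref{prop:affinecoreCAEU}: sufficiency uses the triangle inequality to show that each local utility $v_z$ is increasing, and necessity evaluates $\delta_x\succsim^*\delta_y$ at the benchmarks $z=x$ and $z=y$. The differences are cosmetic: you check mixture-invariance of dominance directly instead of citing that $\succsim^*$ is the largest affine subrelation of $\succsim$, and you choose the benchmark by the sign of $\lambda$ instead of using both benchmarks and taking the maximum.
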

\noindent This corollary shows that monotonicity of $u$ is not sufficient for CAEU preferences to respect first-order stochastic dominance. The utility gain from moving to a better outcome must be large enough to dominate the maximal complexity effect that such a move can generate. The strength of this requirement depends on $|\lambda|$. If the decision maker is highly sensitive to complexity, then even small changes in complexity can interfere with dominance comparisons. Conversely, if the decision maker is quite insensitive to complexity, then FSD-monotonicity is easier to satisfy. This provides a simple explanation for why complexity can generate violations of stochastic dominance. 

We next consider second-order stochastic dominance. Suppose $X$ is a compact convex subset of a vector space and is endowed with a partial order. Second-order stochastic dominance is the integral stochastic order generated by increasing and concave functions. Hence, by Proposition \ref{prop:affinecoreCAEU}, CAEU preferences are consistent with second-order stochastic dominance exactly when every local utility is increasing and concave.

\begin{corollary}\label{prop:SSDCAEU}
Suppose $\succsim$ admits a CAEU representation with utility $u$, $\lambda\in \R$, and a continuous metric $\dd$ and $X$ is a compact and convex subset of a normed space endowed with a partial order. Then, the following are equivalent:
\begin{enumerate}
\item $\succsim$ is consistent with second-order stochastic dominance;
\item for every $x\in X$, the function
$$
y\mapsto u(y)-\lambda\dd(x,y)
$$
is increasing and concave.
\end{enumerate}
Moreover, if $\succsim$ is consistent with second-order stochastic dominance, then $u$ is increasing. In addition, if $\dd$ is induced by a norm,
\begin{enumerate}
\item if $\lambda<0$, then $u$ is concave;
\item if $u$ is convex, then $\lambda\geq0$.
\end{enumerate}
\end{corollary}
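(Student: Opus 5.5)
The core of the argument is Proposition \ref{prop:affinecoreCAEU}, which identifies the affine core $\succsim^*$ with the integral order generated by the local utilities $v_x:y\mapsto u(y)-\lambda\dd(x,y)$. We will also use the standard fact that a preorder $\succsim$ is consistent with an \emph{affine} integral stochastic order $\geq_{\mathcal W}$ if and only if $\geq_{\mathcal W}\subseteq\succsim^*$. Indeed, SSD is preserved under common mixtures, so $p\geq_{\mathrm{SSD}}q$ implies $\alpha p+(1-\alpha)\ell\geq_{\mathrm{SSD}}\alpha q+(1-\alpha)\ell$. Consistency then gives the mixed comparison in $\succsim$ for every $\alpha$ and $\ell$, which is exactly $p\succsim^* q$. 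The converse inclusion is immediate by taking $\alpha=1$.

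\textbf{Equivalence.} For $(2)\Rightarrow(1)$, suppose $p\geq_{\mathrm{SSD}}q$. Each $v_x$ is increasing and concave, so $\int v_x\,\mathrm{d}p\geq\int v_x\,\mathrm{d}q$ for every $x$. Proposition \ref{prop:affinecoreCAEU} then yields $p\succsim^*q$, hence $p\succsim q$.

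For $(1)\Rightarrow(2)$, consistency gives $\geq_{\mathrm{SSD}}\subseteq\succsim^*$, so each $v_x$ is monotone with respect to $\geq_{\mathrm{SSD}}$ in the integral sense. First take binary comparisons $\delta_a\geq_{\mathrm{SSD}}\delta_b$ for $a\succeq b$; these should give $v_x(a)\geq v_x(b)$ for every $x$. Then take mean-preserving contractions $\delta_{\beta a+(1-\beta)b}\geq_{\mathrm{SSD}}\beta\delta_a+(1-\beta)\delta_b$; these should give concavity of $v_x$.

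\textbf{Main obstacle.} Passing from the \emph{family} condition ``for all $x$'' in Proposition \ref{prop:affinecoreCAEU} to a statement about each \emph{individual} $v_x$ is delicate. The affine core only says that $\int v_x\,\mathrm{d}p\geq\int v_x\,\mathrm{d}q$ for all $x$ is equivalent to $p\succsim^*q$. What rescues us is that SSD comparisons imply $\succsim^*$, and $\succsim^*$ implies the inequality for every $x$ simultaneously. This is exactly the forward direction of Proposition \ref{prop:affinecoreCAEU}, so the obstacle dissolves once that direction is applied carefully to the specific test pairs above. The test pairs themselves are legitimate SSD comparisons because $\delta_a$ dominates $\delta_b$ under increasing functions, and a point mass at the mean dominates the spread under concave functions; both classes are contained in the increasing concave class.

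\textbf{Consequences.} Monotonicity of $u$ follows from $v_x$ increasing with $x=y$ fixed on comparable pairs. For $a\succeq b$, take $x=b$, which gives $u(a)-\lambda\dd(b,a)\geq u(b)$. Take also $x=a$, which gives $u(a)\geq u(b)-\lambda\dd(a,b)$. Whichever sign $\lambda$ has, one of these two inequalities yields $u(a)\geq u(b)$.

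For the norm case, write $u=v_x+\lambda\lVert x-\cdot\rVert$. If $\lambda<0$, then $\lambda\lVert x-\cdot\rVert$ is concave, so $u$ is a sum of concave functions and hence concave. If $u$ is convex and, by contradiction, $\lambda<0$, then $u$ is also concave and thus affine. In that case $v_x=u-\lambda\lVert x-\cdot\rVert$ is affine plus a nonzero positive multiple of a norm. Choosing $y\neq x$ and points $x\pm t(y-x)$ in $X$ (or the segment from $x$ to $y$ with $x$ interior to it) exhibits a strict convex kink at $x$, contradicting concavity of $v_x$. Hence $\lambda\geq0$. Handling points $x$ on the boundary of $X$ needs some care: I would pick $x$ as the midpoint of a nondegenerate segment in $X$, which exists since $X$ has at least two points and is convex.
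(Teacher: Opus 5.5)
Your proposal is correct and follows essentially the paper's route. The paper notes $\geq_{\mathrm{FSD}}\cup\geq_{\mathrm{cve}}\subseteq\geq_{\mathrm{SSD}}$ and invokes Corollaries~\ref{prop:FSDCAEU} and~\ref{coro:CAEUstrongrisk}, whose proofs push exactly your test pairs (point masses for monotonicity, a barycenter against a binary spread for concavity) through the affine core and Proposition~\ref{prop:affinecoreCAEU}. For the norm-induced consequences, your decomposition $u=v_x+\lambda\lVert x-\cdot\rVert$ and kink argument is a harmless variant of the paper's direct midpoint computation, and it is also valid.
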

\noindent The second-order condition highlights the interaction between risk attitude and complexity attitude. Suppose, in particular, that $\dd$ is induced by a norm. Then consistency with second-order stochastic dominance requires every local utility
$$
y\mapsto u(y)-\lambda\|x-y\|
$$
to be increasing and concave. If $\lambda>0$, the complexity penalty subtracts a convex distance function from $u$. This makes concavity easier to satisfy, but makes monotonicity more demanding. If $\lambda<0$, the agent values complexity; the local utilities then add a convex distance term to $u$, so second-order dominance can be respected only if $u$ is sufficiently concave to offset this additional convexity. Thus, under norm-induced dissimilarity, consistency with second-order stochastic dominance imposes joint restrictions on utility curvature and complexity attitude.

This reveals a useful distinction between primitive utility curvature and effective risk attitude. In expected utility, risk attitude is governed by the curvature of $u$. Under CAEU, by contrast, risk attitudes are determined by the curvature of the local utilities. Complexity attitude therefore becomes one component of the decision maker's effective attitude toward risk: complexity aversion can reinforce risk aversion, while complexity seeking attitudes can weaken it.

\subsection{Risk aversion}\label{sect}

Suppose that $X$ is a convex and compact subset of a normed space. The \textit{concave order} is the binary relation $\geq_{\mathrm{cve}}$ on $\DX$ defined by
\[
p\geq_{\mathrm{cve}}q
\Longleftrightarrow
\int \varphi\mathrm{d}p\geq \int \varphi\mathrm{d}q
\quad
\text{for every continuous concave function }
\varphi:X\to\mathbb R.
\]
A preference relation $\succsim$ satisfies \textit{strong risk aversion} if
\[
p\geq_{\mathrm{cve}}q
\Longrightarrow
p\succsim q
\]
for all $p,q\in\DX$. Thus, strongly risk-averse preferences are consistent with every comparison made by all concave utility functions. For CAEU preferences, strong risk aversion can be characterized through the local utilities.

\begin{corollary}\label{coro:CAEUstrongrisk}
Suppose $\succsim$ admits a CAEU representation with utility $u$, $\lambda\in \R$, and a continuous metric $\dd$ and $X$ is a convex and compact subset of a normed space. Then, the following are equivalent:
\begin{enumerate}
\item $\succsim$ exhibits strong risk aversion,
\item for every $x\in X$, the function
\[
y\mapsto u(y)-\lambda\dd(x,y)
\]
is concave.
\end{enumerate}
Moreover, if $\dd$ is induced by a norm:
\begin{enumerate}
\item if $\lambda<0$ and $\succsim$ exhibits strong risk aversion, then $u$ is concave,
\item if $u$ is convex and $\succsim$ exhibits strong risk aversion, then $\lambda\geq0$.
\end{enumerate}
\end{corollary}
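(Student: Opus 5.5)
The plan is to reduce strong risk aversion to a statement about the affine core, using Proposition \ref{prop:affinecoreCAEU}, just as in Corollaries \ref{prop:FSDCAEU} and \ref{prop:SSDCAEU}. The concave order $\geq_{\mathrm{cve}}$ is affine: mixing both $p$ and $q$ with a common $\ell$ preserves every integral inequality. So strong risk aversion is the inclusion $\geq_{\mathrm{cve}}\subseteq\succsim$, and this is equivalent to $\geq_{\mathrm{cve}}\subseteq\succsim^*$. Indeed, if $p\geq_{\mathrm{cve}}q$, then $\alpha p+(1-\alpha)\ell\geq_{\mathrm{cve}}\alpha q+(1-\alpha)\ell$, which gives $p\succsim^*q$. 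The converse holds because $\succsim^*\subseteq\succsim$ (take $\alpha=1$). By Proposition \ref{prop:affinecoreCAEU}, the condition $\geq_{\mathrm{cve}}\subseteq\succsim^*$ says that $p\geq_{\mathrm{cve}}q$ implies $\int v_x\,\mathrm{d}p\geq\int v_x\,\mathrm{d}q$ for every local utility $v_x=u-\lambda\dd(x,\cdot)$.

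For (2)$\Rightarrow$(1), each $v_x$ is continuous and concave, so it is a test function of the concave order, and the implication is immediate.

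For (1)$\Rightarrow$(2), fix $x$ and test the dominance relation on two-point spreads. Take $y,z\in X$ and $\alpha\in[0,1]$, and set $w=\alpha y+(1-\alpha)z$. Then $\delta_w\geq_{\mathrm{cve}}\alpha\delta_y+(1-\alpha)\delta_z$ by the definition of concavity (Jensen). Hence $\delta_w\succsim^*\alpha\delta_y+(1-\alpha)\delta_z$, and Proposition \ref{prop:affinecoreCAEU} applied with the benchmark $x$ gives $v_x(w)\geq\alpha v_x(y)+(1-\alpha)v_x(z)$. So $v_x$ is concave. No closure or cone argument is needed, because midpoint-type spreads already pin down concavity. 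This is the only step with real content: one must pass through the affine core rather than $\succsim$ itself, since $\succsim$ alone yields an inequality only for the minimizing benchmark.

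For the norm-induced addenda, suppose $\dd=\lVert\cdot\rVert$. If $\lambda<0$, then $u(y)=v_x(y)+\lambda\lVert x-y\rVert$ is the sum of the concave $v_x$ and the concave function $\lambda\lVert x-\cdot\rVert$ (a nonpositive multiple of a convex function), so $u$ is concave. If $u$ is convex and $\lambda<0$, the first addendum shows $u$ is also concave, hence affine. Then $v_x=u+|\lambda|\lVert x-\cdot\rVert$ would be concave only if $\lVert x-\cdot\rVert$ were affine on $X$. But $y\mapsto\lVert x-y\rVert$ has a kink at $y=x$: along any segment through $x$ with $x$ in its relative interior (available since $X$ is convex with at least two points, taking $x$ as a midpoint), it is strictly convex at $x$. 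This contradicts concavity, so $\lambda\geq0$. The main care point is this last nondegeneracy step, which requires choosing $x$ in the relative interior of a segment in $X$.
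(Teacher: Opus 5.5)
Your proof is correct and follows the paper's route for the equivalence: you pass from strong risk aversion to $\geq_{\mathrm{cve}}\subseteq\succsim^*$ using that the concave order is affine, and you obtain concavity of each local utility from the two-point comparisons $\delta_{\alpha y+(1-\alpha)z}\geq_{\mathrm{cve}}\alpha\delta_y+(1-\alpha)\delta_z$ through Proposition \ref{prop:affinecoreCAEU}. For the addenda the paper is more direct: it tests concavity of the local utility with benchmark $w=\alpha y+(1-\alpha)z$ at the point $w$ itself, which gives $u(w)-[\alpha u(y)+(1-\alpha)u(z)]\geq-\lambda[\alpha\dd(y,w)+(1-\alpha)\dd(z,w)]$ and yields both claims using only $\dd\geq 0$ and $\dd(y,w)>0$ (so convexity of the norm is not actually needed), whereas your decomposition $u=v_x+\lambda\lVert x-\cdot\rVert$ and your kink argument reach the same conclusions by using the norm.
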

\noindent Corollary \ref{coro:CAEUstrongrisk} shows that, under CAEU, risk aversion is not governed by the curvature of $u$ alone. As for stochastic dominance orders, the relevant objects are the local utilities. Suppose that $\dd$ is induced by a norm. If $\lambda>0$, the complexity penalty subtracts a convex distance function from $u$, thereby contributing to concavity. Complexity aversion can therefore reinforce risk aversion. If $\lambda<0$, the agent values complexity, and the local utilities add a convex distance term to $u$. In that case, strong risk aversion can hold only if the curvature of $u$ is sufficiently concave to offset the convexity introduced by complexity seeking. Hence, for CAEU preferences, complexity attitude is one component of the decision maker's effective attitude toward risk.

A weaker notion of risk aversion compares each lottery with its barycenter. Suppose $X$ is a compact convex subset of a finite-dimensional vector space. A preference relation $\succsim$ satisfies \textit{weak risk aversion} if
\[
\delta_{\mu(p)}\succsim p
\]
for every $p\in\DX$, where
\[
\mu(p)=\int y\mathrm{d}p(y)
\]
is the barycenter of $p$. Under CAEU, this condition becomes
\[
u(\mu(p))-\mathbb E_p[u]\geq -\lambda C_{\dd}(p)
\]
for every $p\in\DX$. Thus, weak risk aversion may arise either because $u$ is concave, because complexity is costly, or because the two effects jointly dominate the expected-utility gain from risk. In particular, if $u$ is concave and $\lambda\geq0$, then CAEU preferences satisfy weak risk aversion. It is easy to notice that if $u$ is affine and $\dd$ is induced by a norm, then weak risk aversion holds if and only if $\lambda\geq 0$.

\section{Final remarks and future research}

As a final \textit{coda} of the paper, we discuss some possible extensions and alternative approaches we did not develop in this paper.

\subsection{Extensions to different choice problems}

\textbf{Choice under uncertainty.} In choice under uncertainty a decision maker chooses among alternatives called: \textit{acts}. Given a finite set of states $\Omega$ and a set of outcomes $X$, an \textit{act} is a map $f:\Omega\to X$. The set of acts is denoted by $\mathbf{F}$. By analogy with the approach developed in this paper, one could consider as the \textit{simplest} acts, the degenerate ones, that is, the constant acts, namely those acts that deliver the same outcome in every state. Given a distance $\dd$ on $\mathbf{F}$, then it is possible to replicate the analysis provided in this paper using $\dd$ to assess the complexity of acts. The choice of the metric can be based on the primitives $\Omega$ and $X$ are endowed with. If for instance $X$ has a metric $d$ and there is a fully-supported\footnote{The full-support hypothesis is necessary if one wants to work with metrics, otherwise it can be dispensed accepting to work with semimetrics.} probability measure $\mu$ on $\Omega$, then one could endow $\mathbf{F}$ with the following distance:
\[
\dd_\mu:(f,g)\mapsto \int d(f(\omega),g(\omega))\mathrm{d}\mu(\omega).
\]
According to this distance, the induced complexity measure would declare that an act is complex when its state-contingent outcomes are far, on average, from the closest constant act.

Other possibilities could exploit the multiple priors of a decision maker under ambiguity. For instance, if such a decision maker has fully-supported multiple priors, whose (compact and convex) set is denoted by $M$, then 
\[
\dd_M:(f,g)\mapsto \sup\limits_{\mu\in M}\int d(f(\omega),g(\omega))\mathrm{d}\mu(\omega)
\]
could be interpreted as the worst-case scenario distance for the decision maker. The corresponding complexity measure would evaluate how far an act is from being constant under the most unfavorable prior in $M$. This illustrates how the same construction can incorporate ambiguity directly into the measurement of complexity.
\par\medskip
\noindent \textbf{Menu choice.} 
The same idea can be applied to menus. If the choice alternatives are finite menus $A\subseteq X$, the natural simplicity benchmarks are singleton menus. A simple dissimilarity function is obtained from symmetric difference:
$$
\dd^\triangle(A,B)=|A\triangle B|.
$$
For nonempty finite menus,
$$
\min_{x\in X}\dd^\triangle(A,\{x\})=|A|-1.
$$
This recovers the idea that menu complexity increases with the number of available alternatives, while assigning zero complexity to singleton menus. This echoes the idea of thinking aversion in \cite{Ortoleva2013}.
\par\medskip
If the structure of $X$ is endowed with a metric $\dd$, then one could provide a more sophisticated measure. Indeed, one may want to take under consideration how different are the alternatives within a menu. As a consequence a natural choice is the measure
$$
\min_{x\in X}
\sum_{y\in A}\dd(y,x)
$$
that evaluates how far the menu is from its closest singleton approximation. Unlike cardinality, it distinguishes menus with the same number of elements according to the dispersion of their alternatives. In this sense, the measure can be interpreted as a cost of contemplating a menu whose elements cannot be well summarized by a single alternative.\footnote{For compact metric spaces, analogous constructions can be developed on the space $\mathcal K(X)$ of nonempty compact subsets of $X$, using, for example, Hausdorff distance or Wasserstein distances between probability measures supported on menus.}
\par\medskip

\noindent \textbf{Intertemporal choice.} Analogous considerations could be provided in the complexity evaluation of consumption streams. In particular, constant streams of outcomes can be used as \textit{simplicity benchmarks}. Suppose $(X,\dd)$ is a compact metric space and the agent contemplates as alternatives intertemporal streams of outcomes 
\[
\mathbf{C}=\left\lbrace (x_t)_{t\geq 1}:\forall t\geq 1,\ x_t\in X \right\rbrace.
\]
A stream of outcome will be denoted by $\mathbf{x}$. The agent's level of patience may yield a certain discount rate $\beta\in (0,1)$. This discount factor allows to define the following distance:
\[
\dd^\beta:(\mathbf{x},\mathbf{y})\mapsto \sum_{t\geq 1}\beta^t\dd(x_t,y_t)
\]
that is well-defined since $X$ is compact. The associated measure of complexity is 
\[
C_{\dd^\beta}:\mathbf{x}\mapsto \min\limits_{x\in X}\sum_{t\geq 1}\beta^t\dd(x_t,x).
\]
This measure evaluates the distance of each stream of outcomes from its closest constant stream of outcomes. The generality of $X$ allows for many important discrete intertemporal decision problems. The continuous case can be discussed analogously employing exponential discounting (and Borel measurable streams):
\[
\dd^\gamma:(\mathbf{x},\mathbf{y})\mapsto \int_0^\infty e^{-\gamma t}\dd(x_t,y_t)\mathrm{d}t 
\]
where $\gamma\in (0,\infty)$.\footnote{In this case $\dd^\gamma$ may not be a metric, rather just a semimetric.}




\subsection{Beyond metric dissimilarity}

Representing perceived dissimilarity among outcomes by a metric is a substantive restriction. It is particularly important for our analysis as it yields
\[
\begin{aligned}
C_{\dd}\left(\frac{1}{2}\delta_x+\frac{1}{2}\delta_y\right)=
\frac{1}{2}\min_{z\in X}
\left\{\dd(z,x)+\dd(z,y)\right\}=
\frac{1}{2}\dd(x,y).
\end{aligned}
\]
Hence, each endpoint is a best degenerate proxy for the binary lottery, and the complexity of the lottery directly reveals the perceived distance between its outcomes. This link is central to both the identification of $\dd$ and the cardinal uniqueness of $C_{\dd}$.

Axiom \ref{mx9} provides the behavioral counterpart of this property. Starting from a 50-50 lottery, assigning additional probability to an outcome already in its support cannot be more complex than assigning that probability to an arbitrary third outcome. The axiom therefore rules out the possibility that a third outcome provides a strictly better way to simplify a 50-50 lottery than either of its endpoints. In this sense, the metric structure is not imposed directly on a primitive dissimilarity function; it is recovered from the decision maker's complexity comparisons.

Suppose now that $\dd$ is replaced by a continuous function $\Delta:X\times X\to [0,\infty)$ such that $\Delta(x,x)=0$ that need not satisfy the triangle inequality, and define
\[
C_{\Delta}(p)
=
\min_{z\in X}
\int_X \Delta(z,y)\mathrm{d}p(y).
\]
For a 50-50 lottery,
\[
C_{\Delta}\left(\frac{1}{2}\delta_x+\frac{1}{2}\delta_y\right)
=
\frac{1}{2}\min_{z\in X}
\left\{\Delta(z,x)+\Delta(z,y)\right\}.
\]
If the triangle inequality fails, an outcome outside the support may be a better common proxy for $x$ and $y$ than either endpoint. Binary-lottery complexity then reveals the dissimilarity of the lottery from the best common proxy, rather than the primitive dissimilarity $\Delta(x,y)$. The best-approximation interpretation therefore remains meaningful, but the underlying pairwise dissimilarities can no longer be recovered from the complexity ranking, and the identification and uniqueness results need not follow. This distinction separates the core benchmark-based idea of the paper from the stronger consequences obtained under metric dissimilarity. Nonetheless, non-metric dissimilarities may be appropriate when the salient features of outcomes vary across comparisons, as emphasized by \cite{Tversky1977} and \cite{TverskyGati1982}, but their recovery would require additional restrictions or richer data.

\subsubsection*{$\dd$ as difficulty to compare\footnote{I thank Pietro Ortoleva for suggesting this alternative approach.}}

Going beyond the metric approach, one possibility is to replace $\dd$ with a function measuring how difficult outcomes are to compare. While $\dd(x,y)$ represents the perceived distance between $x$ and $y$, complexity could instead depend on the difficulty of comparing them. The two notions need not coincide: distant outcomes may be easy to compare, whereas nearby outcomes may be difficult to rank. A possibility is to consider the following
\[
\tau_u(x,y)=\frac{\rho(x,y)}{\eta+|u(x)-u(y)|}.
\]
where $\eta>0$, $u$ and $\rho$ are respectively a (normalized and cardinal) utility function and a metric over outcomes. A large utility difference provides a decisive basis for comparison and therefore reduces $\tau_u(x,y)$, whereas, holding the utility difference fixed, a large value of $\rho(x,y)$ increases comparison difficulty. The parameter $\eta$ keeps the measure finite when the decision maker is indifferent between the outcomes. Although $\tau_u(x,x)=0$ and $\tau_u$ is symmetric, it need not satisfy the triangle inequality.

Given a generic (continuous) measure of comparison difficulty among outcomes $\tau$, the corresponding complexity measure $C_\tau$ would be
\[
C_\tau(p)=\min_{x\in X}\int_X\tau(x,y)\mathrm{d}p(y).
\]
In this case $C_\tau$ measures the average difficulty of comparing the outcomes of $p$ with the degenerate lottery that best facilitates comparison. This differs conceptually from metric dispersion: the best proxy is selected because it makes the outcomes easier to compare, rather than because it is the closest.

\bibliographystyle{abbrvnat}
\bibliography{biblio_cleaned}

\section{Appendix}
\addcontentsline{toc}{section}{Appendices}
\renewcommand{\thesubsection}{\Alph{subsection}}

\subsection{Additional properties}

\begin{lemma}\label{basic properties}
If $\dd$ is a continuous metric on $X$, then
\begin{enumerate}
\item $C_\dd$ is concave.
\item $C_\dd$ is 1-Lipschitz with respect to the 1-Wasserstein distance induced by $\dd$, i.e., for all $p,q\in \DX$,
\[
\lvert C_\dd(p)-C_\dd(q)|\leq W_1^\dd(p,q).
\]
\item $C_\dd$ is continuous with respect to the weak convergence topology.
\item $C_\dd(\delta_x)=0$ for all $x\in X$.
\item $C_\dd(p)>0$ for all nondegenerate lotteries $p\in \DX$.

\item For all $\alpha\in [0,1/2]$, $p\in \DX$, and $x\in X$,
\[
C_\dd(\alpha p+(1-\alpha)\delta_x)=\alpha \int \dd(x,y)\mathrm{d}p(y).
\]
\item For all $\alpha\in [0,1]$ and $x,y\in X$,
\[
C_\dd(\alpha \delta_x+(1-\alpha)\delta_y)=\min\lbrace\alpha,1-\alpha\rbrace\dd(x,y).
\]
\item For all $\alpha\in [0,1]$ and $p\in \DX$,
\[
x\in \arg\min_{x\in X}\int \dd(x,y)\mathrm{d}p(y)\Longrightarrow C_\dd(\alpha p+(1-\alpha)\delta_x)=\alpha C_\dd(p).
\]
\end{enumerate}
\end{lemma}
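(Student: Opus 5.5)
The plan is to derive all eight items from the representation $C_\dd(p)=\min_{x\in X}f_x(p)$, where $f_x(p)=\int \dd(x,y)\,\mathrm{d}p(y)$. Each $f_x$ is affine in $p$.

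\emph{Items 1, 4, 5.} Concavity (item 1) holds because a pointwise minimum of affine functions is concave. For item 4, evaluating at $x$ gives $f_x(\delta_x)=0$. For item 5, suppose $C_\dd(p)=0$. Attainment of the minimum gives some $x$ with $f_x(p)=0$. Since $\dd(x,\cdot)\ge 0$ is continuous and vanishes only at $x$, this forces $p(\{x\})=1$, so $p$ is degenerate.

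\emph{Items 2 and 3.} For the Lipschitz bound, fix $x$ and use $|\dd(x,y)-\dd(x,y')|\le \dd(y,y')$. Integrating against any coupling of $p$ and $q$ and taking the infimum gives $|f_x(p)-f_x(q)|\le W^\dd_1(p,q)$ for every $x$. A minimum over $x$ of functions that are uniformly $1$-Lipschitz is again $1$-Lipschitz, which proves item 2. For continuity (item 3) I would not go through item 2. Instead, note that $X$ is compact and $\dd$ is $\rho$-continuous, so $\dd$ is uniformly continuous on $X\times X$. Hence $\{\dd(x,\cdot):x\in X\}$ is a uniformly bounded, equicontinuous family. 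Weak convergence $p_n\to p$ then gives $\sup_x|f_x(p_n)-f_x(p)|\to 0$, for instance by a finite $\varepsilon$-net argument in $x$. Therefore $C_\dd(p_n)\to C_\dd(p)$.

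\emph{Item 6 (the main step).} Let $q=\alpha p+(1-\alpha)\delta_x$ with $\alpha\le 1/2$. For any $z$,
\[
f_z(q)=\alpha f_z(p)+(1-\alpha)\dd(z,x)\ge \alpha\bigl(f_z(p)+\dd(z,x)\bigr)\ge \alpha f_x(p),
\]
using $1-\alpha\ge\alpha$ and the triangle inequality $\dd(x,y)\le \dd(x,z)+\dd(z,y)$ integrated against $p$. Equality is attained at $z=x$, since $f_x(q)=\alpha f_x(p)$. This gives the identity. The only delicate point is where $\alpha\le 1/2$ enters: it is exactly what lets the $(1-\alpha)$ weight on $\dd(z,x)$ dominate the $\alpha$-weighted triangle loss.

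\emph{Items 7 and 8.} Item 7 follows from item 6 with $p=\delta_y$ when $\alpha\le 1/2$, giving $\alpha\,\dd(x,y)$. When $\alpha>1/2$, swap the roles of $x$ and $y$ and apply item 6 with weight $1-\alpha$; together these give $\min\{\alpha,1-\alpha\}\dd(x,y)$. For item 8, $x$ is now a minimizer for $p$ and $\alpha\in[0,1]$ is arbitrary. Here we have $f_z(q)=\alpha f_z(p)+(1-\alpha)\dd(z,x)\ge \alpha f_z(p)\ge \alpha f_x(p)=\alpha C_\dd(p)$. Equality holds at $z=x$, which proves item 8.
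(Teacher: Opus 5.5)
Your proposal is correct and follows the paper's proof essentially item by item: concavity as a minimum of the affine maps $p\mapsto\int\dd(x,y)\,\mathrm{d}p(y)$, the triangle inequality combined with $1-\alpha\ge\alpha$ for item 6, and evaluation at the median for item 8. In item 8 you use minimality of $x$ directly instead of invoking concavity, but this is the same inequality. The only real deviation is item 3, where you prove weak continuity directly from the equicontinuity of $\{\dd(x,\cdot):x\in X\}$ rather than deducing it from the $W^1_\dd$-Lipschitz bound as the paper does. This has the small advantage of not relying on the fact, left unstated in the paper, that $W^1_\dd$ metrizes weak convergence when $\dd$ is a continuous metric on the compact space $(X,\rho)$.
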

\begin{proof}
\begin{enumerate}
\item Let $p,q\in \DX$ and $\alpha\in [0,1]$. Since each $W_\dd^1(\cdot,\delta_x)$ is affine, it follows that
\begin{align*}
C_\dd(\alpha p+(1-\alpha)q)&=\min\limits_{x\in X}\left[\alpha W_\dd^1(p,\delta_x)+(1-\alpha)W_\dd^1(q,\delta_x)\right]\\
&\geq \alpha \min\limits_{x\in X}W^1_\dd(p,\delta_x)+(1-\alpha) \min\limits_{y\in X}W^1_\dd(q,\delta_y)\\
&=\alpha C_\dd(p)+(1-\alpha)C_\dd(q).
\end{align*}
\item Let $p,q\in \DX$ and $x^*_q\in \textnormal{argmin}_{x\in X}\int\dd(x,y)\mathrm{d}q(y)$. Then, it follows that
\begin{align*}
C_\dd(p)-C_\dd(q)&=\min\limits_{x\in X}\int\dd(x,y)\mathrm{d}p(y)-\min\limits_{x\in X}\int\dd(x,y)\mathrm{d}q(y)\\
&\leq \int\dd(x^*_q,y)\mathrm{d}p(y)-\int\dd(x^*_q,y)\mathrm{d}q(y)\\
&\leq \sup_{x\in X} \left\lbrace W^1_\dd(p,\delta_x)-W^1_\dd(q,\delta_x)\right\rbrace\leq W^1_\dd(p,q).
\end{align*}
Interchanging $p$ and $q$ we get that $C_\dd$ is 1-Lipschitz with respect to $W^1_\dd$.
\item Since $\dd$ is continuous with respect to $\rho$ and $C_\dd$ is 1-Lipschitz with respect to $W^1_\dd$ the claim follows.
\item Since $C_\dd\geq 0$, it follows that $0\leq C_\dd(\delta_x)\leq W_\dd^1(\delta_x,\delta_x)=0$ for all $x\in X$.
\item Suppose $p$ is not degenerate. Since $W^1_\dd$ is a metric, we have that $W^1_\dd(p,\delta_x)>0$ for all $x\in X$, and hence, by compactness of $X$ and continuity of $W^1_\dd(p,\cdot)$, it follows that $C_\dd(p)=\min_{x\in X}W^1_\dd(p,x)>0$.
\item Since $\alpha\leq 1/2$, we have that
\begin{align*}
C_\dd(\alpha p+(1-\alpha)\delta_y)&=\min\limits_{z\in X}\left\lbrace\alpha \left[W^1_\dd(p,\delta_z)+\dd(z,y)\right]+(1-2\alpha)\dd(z,y)\right\rbrace\\
&\geq \alpha W^1_\dd(p,\delta_y)+(1-2\alpha)\min\limits_{z\in X}\dd(z,y)\\
&\geq \alpha W^1_\dd(p,\delta_y)\\
&\geq \min\limits_{z\in X}\alpha W^1_\dd(p,\delta_z)+(1-\alpha)\dd(y,z)\\
&=C_{\dd}(\alpha p+(1-\alpha)\delta_y).
\end{align*}
Therefore, $
C_\dd(\alpha p+(1-\alpha)\delta_y)=\alpha\int \dd(x,y)\mathrm{d}p(x).$
\item Suppose $\alpha\in [0,1]$ and $x,y\in X$. Then, either $\alpha\geq 1/2$ or $1-\alpha\geq 1/2$. Thus, the claim follows by the previous point.
\item For all $p\in \DX$ and $\alpha\in [0,1]$, we have that
\begin{align*}
C_\dd(\alpha p+(1-\alpha)\delta_{x_p})&=\min\limits_{x\in X}\alpha W^1_\dd(p,\delta_{x})+(1-\alpha)\dd(x_p,x)\\
&\leq \alpha W^1_\dd(p,\delta_{x_p})+(1-\alpha)\dd(x_p,x_p)=\alpha C_\dd(p).
\end{align*}
By concavity of $C_\dd$, we have
\[
C_\dd(\alpha p+(1-\alpha)\delta_{x_p})\geq \alpha C_\dd(p)+(1-\alpha)C_\dd(\delta_{x_p})=\alpha C_\dd(p).
\]
The two inequalities yield the claim.
\end{enumerate}
\end{proof}

\subsection{Proof of Theorem \ref{axiomatic_characterization}}

Fix a function $C:\DX\to [0,\infty)$ and define the correspondence $A_C:\DX\rightrightarrows X$, 
\[
A_C(p)=\arg\min\limits_{x\in X}C\left(\frac{p+\delta_x}{2}\right).
\]
We prove Theorem \ref{axiomatic_characterization} in several steps.

\begin{axiomm}\label{ax1}
$C$ is continuous with respect to the weak convergence topology.
\end{axiomm}

\begin{axiomm}\label{ax2}
For all $x\in X$, $C(\delta_x)=0$.
\end{axiomm}

\begin{axiomm}\label{ax3}
For all nondegenerate $p\in \DX$, $C(p)>0$.
\end{axiomm}

\begin{axiomm}\label{ax4}
$C$ is concave.
\end{axiomm}

\begin{axiomm}\label{ax5}
If $p,q\in \DX$ and $x\in A_C(p)\cap A_C(q)$, then, for all $\alpha\in [0,1]$,
\[
x\in A_C(\alpha p+(1-\alpha)q)\ \textnormal{and}\ C(\alpha p+(1-\alpha)q)\leq \alpha C(p)+(1-\alpha)C(q).
\]
\end{axiomm}

\begin{axiomm}\label{ax6}
For all $x,y,z\in X$ and $\alpha\in [0,1]$, we have
\[
C\left(\alpha\frac{\delta_x+\delta_y}{2}+(1-\alpha)\delta_y\right)\leq C\left(\alpha\frac{\delta_x+\delta_y}{2}+(1-\alpha)\delta_z\right).
\]
\end{axiomm}

\begin{proposition}\label{thm1}
Let $(X,\rho)$ be a compact metric space and $C:\bigtriangleup(X)\to [0,\infty)$ a function. The following are equivalent:
\begin{enumerate}
\item $C$ satisfies axioms \ref{ax1}-\ref{ax6}.
\item $C$ admits a best approximation with metric $\dd$ continuous with respect to $\rho$.
\end{enumerate}
\end{proposition}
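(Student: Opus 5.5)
Necessity is mostly contained in Lemma \ref{basic properties}: items 3, 4, 5 and 1 give axioms \ref{ax1}--\ref{ax4}. For axiom \ref{ax5}, write $f_p(x)=\int\dd(x,y)\mathrm{d}p(y)$. By item 6 with $\alpha=1/2$, $C_\dd\bigl((p+\delta_x)/2\bigr)=\tfrac12 f_p(x)$, so $A_{C_\dd}(p)=\arg\min f_p$. If $x$ minimizes both $f_p$ and $f_q$, then it minimizes $f_{\alpha p+(1-\alpha)q}=\alpha f_p+(1-\alpha)f_q$. Hence $C_\dd(\alpha p+(1-\alpha)q)=\alpha C_\dd(p)+(1-\alpha)C_\dd(q)$, which gives both parts of axiom \ref{ax5}. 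For axiom \ref{ax6}, let $r=(\delta_x+\delta_y)/2$. The lottery on the left is $\tfrac{\alpha}{2}\delta_x+(1-\tfrac{\alpha}{2})\delta_y$, and by item 7 its value is $\tfrac{\alpha}{2}\dd(x,y)$. For the right-hand side, the triangle inequality shows that every $w$ satisfies
\[
\tfrac{\alpha}{2}(\dd(w,x)+\dd(w,y))+(1-\alpha)\dd(w,z)\ge \tfrac{\alpha}{2}\dd(x,y).
\]

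For sufficiency, define $\dd(x,y)=2C\bigl((\delta_x+\delta_y)/2\bigr)$. Symmetry is immediate. Axioms \ref{ax2} and \ref{ax3} give $\dd(x,y)=0$ if and only if $x=y$, and axiom \ref{ax1} gives continuity. The goal is then to show $C(p)=\min_z\int\dd(z,y)\mathrm{d}p(y)$.

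The key step is to show that $x\in A_C(p)$ implies $C\bigl(\alpha p+(1-\alpha)\delta_x\bigr)=\alpha C(p)$. First, $x\in A_C(\delta_x)$ by axiom \ref{ax2}. If also $x\in A_C(p)$, axiom \ref{ax5} applied to $p$ and $\delta_x$ gives $\le$, and concavity together with $C(\delta_x)=0$ gives $\ge$. Next, take a two-point lottery $\beta\delta_x+(1-\beta)\delta_y$ with $\beta\ge1/2$. Axiom \ref{ax6} with $z$ ranging over $X$ should show that $x$ is a best proxy for $(\delta_x+\delta_y)/2$; a symmetric argument gives the same for $y$. Radial linearity then yields $C=\min\{\beta,1-\beta\}\dd(x,y)$. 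The triangle inequality for $\dd$ would come from axiom \ref{ax6} at $\alpha=2/3$. The left side is the two-point lottery $\tfrac13\delta_x+\tfrac23\delta_y$, with value $\tfrac13\dd(x,y)$. The right side is the uniform lottery $\tfrac13(\delta_x+\delta_y+\delta_z)$. Concavity together with the radial identity should give $C$ of this lottery at most $\tfrac13(\dd(z,x)+\dd(z,y))$, using the facts that $z$ is a best proxy for the relevant lotteries and that $C\bigl(\tfrac23\cdot\tfrac{\delta_x+\delta_y}{2}+\tfrac13\delta_z\bigr)$ is controlled. This yields $\dd(x,y)\le\dd(x,z)+\dd(z,y)$.

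The final step is to extend from two-point lotteries to all of $\DX$, and I expect it to be the main obstacle. The plan is to fix $p$, pick $x\in A_C(p)$, and show $C(p)=\int\dd(x,y)\mathrm{d}p(y)$ and that this value is minimal. Radial linearity gives $C\bigl((p+\delta_x)/2\bigr)=\tfrac12 C(p)$. Minimality of $x$ in $A_C(p)$ then says that $z\mapsto C\bigl((p+\delta_z)/2\bigr)$ is minimized at $x$. For finitely supported $p=\sum_i p_i\delta_{y_i}$, write $(p+\delta_x)/2$ as a mixture of lotteries $(\delta_{y_i}+\delta_x)/2$. Each of these has $x$ as a common best proxy, provided one can show that $x$ is a proxy shared by all the components. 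Axiom \ref{ax5} then gives linearity across the mixture, so $C\bigl((p+\delta_x)/2\bigr)=\sum_i p_i\,\tfrac12\dd(x,y_i)$.

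Checking the common-proxy condition is the delicate point. I would first try an induction on support size, using axiom \ref{ax5}'s invariance of $A_C$ under mixing, combined with the identity $C\bigl((\delta_x+\delta_z)/2\bigr)=\tfrac12\dd(x,z)$. Concavity supplies the reverse inequality $C\bigl((p+\delta_z)/2\bigr)\ge\sum_i p_i C\bigl((\delta_{y_i}+\delta_z)/2\bigr)$, and this shows that the minimum over $z$ matches $\min_z\tfrac12\int\dd(z,\cdot)\mathrm{d}p$. General $p$ then follows by weak density of finitely supported lotteries and the continuity in axiom \ref{ax1}.
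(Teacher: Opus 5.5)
Your necessity argument is correct and essentially matches the paper's. Your sufficiency skeleton also mirrors the paper: define $\dd$ from 50-50 lotteries, prove the radial identity, use Axiom \ref{ax6} to make endpoints best proxies, split $\tfrac{p+\delta_x}{2}$ into two-point pieces, and finish by density. There is, however, a genuine gap. In the two places where you need an \emph{upper} bound on $C$ away from a known best proxy, namely the minimality of $x$ and the triangle inequality, you invoke concavity, which only gives lower bounds.

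In the final step you have $\tfrac12\int\dd(x,\cdot)\,\mathrm{d}p=C\bigl(\tfrac{p+\delta_x}{2}\bigr)\le C\bigl(\tfrac{p+\delta_z}{2}\bigr)$. By concavity you also have $C\bigl(\tfrac{p+\delta_z}{2}\bigr)\ge\tfrac12\int\dd(z,\cdot)\,\mathrm{d}p$. Both inequalities bound $C\bigl(\tfrac{p+\delta_z}{2}\bigr)$ from below, so nothing compares $\int\dd(x,\cdot)\,\mathrm{d}p$ with $\int\dd(z,\cdot)\,\mathrm{d}p$, and minimality of $x$ is not proved. The triangle-inequality step has the same defect. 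The bound $C\bigl(\tfrac13(\delta_x+\delta_y+\delta_z)\bigr)\le\tfrac13\bigl(\dd(z,x)+\dd(z,y)\bigr)$ cannot come from concavity. The radial identity applied to $\tfrac23\cdot\tfrac{\delta_x+\delta_y}{2}+\tfrac13\delta_z$ would need $z\in A_C\bigl(\tfrac{\delta_x+\delta_y}{2}\bigr)$, which in the intended model holds only when $z$ lies metrically between $x$ and $y$.

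The missing observation is one you already state but do not use. Axiom \ref{ax6} at $\alpha=1/2$ makes \emph{every} $z\in X$ a best proxy of $\tfrac{\delta_y+\delta_z}{2}$ for every $y$, whether or not $z\in A_C(p)$. Take a simple $p=\sum_i p_i\delta_{y_i}$ and an arbitrary $z$. Iterating Axiom \ref{ax5} (its first part keeps $z$ a proxy of the partial mixtures), together with Axiom \ref{ax4}, gives the exact identity $C\bigl(\tfrac{p+\delta_z}{2}\bigr)=\sum_i p_i C\bigl(\tfrac{\delta_{y_i}+\delta_z}{2}\bigr)=\tfrac12\int\dd(z,\cdot)\,\mathrm{d}p$. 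Then $x\in A_C(p)$ yields $\int\dd(x,\cdot)\,\mathrm{d}p=2C\bigl(\tfrac{p+\delta_x}{2}\bigr)\le 2C\bigl(\tfrac{p+\delta_z}{2}\bigr)=\int\dd(z,\cdot)\,\mathrm{d}p$. This is the paper's Claims \ref{claim2} and \ref{claim3}. It also makes the common-proxy condition you call delicate immediate, with no induction on support size.

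The triangle inequality then follows in either of two ways:
\begin{itemize}
\item As in the paper: compare $\delta_x$ with $\delta_y$ as proxies for $\tfrac{\delta_x+\delta_z}{2}$, and write $\tfrac{\delta_x+2\delta_y+\delta_z}{4}=\tfrac12\cdot\tfrac{\delta_x+\delta_y}{2}+\tfrac12\cdot\tfrac{\delta_y+\delta_z}{2}$ with common proxy $y$.
\item Directly from the representation on simple lotteries, which never used the triangle inequality: $\dd(x,z)=\min_w\bigl[\dd(w,x)+\dd(w,z)\bigr]\le\dd(x,y)+\dd(y,z)$.
\end{itemize}
Your $\alpha=2/3$ route can be repaired by the same device. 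For the uniform $u$ on $\{x,y,z\}$, pick $w\in A_C(u)$. Then $C(u)=2C\bigl(\tfrac{u+\delta_w}{2}\bigr)\le 2C\bigl(\tfrac{u+\delta_z}{2}\bigr)=\tfrac13\bigl(\dd(x,z)+\dd(y,z)\bigr)$. The last equality holds because $z$ is a common proxy of $\tfrac{\delta_x+\delta_z}{2}$, $\tfrac{\delta_y+\delta_z}{2}$ and $\delta_z$.

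The density extension is fine as sketched, since $p\mapsto\min_z\int\dd(z,\cdot)\,\mathrm{d}p$ is weakly continuous once $\dd$ is continuous on the compact space $X\times X$.
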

\begin{proof}
We start showing sufficiency. We divide the proof in several claims.
Define $\dd:X\times X\to [0,\infty)$ as
\[
\dd:(x,y)\mapsto 2C\left(\frac{\delta_x+\delta_y}{2}\right).
\]
First notice that axioms \ref{ax2} and \ref{ax3} yield that, for all $x,y\in X$,
\[
C\left(\frac{\delta_x+\delta_y}{2}\right)=0 \Longrightarrow x=y.
\]
\begin{claim}\label{claim1}
$\dd$ is a metric and it is continuous in the product topology over $X\times X$ induced by $\rho$.
\end{claim}
\begin{proof}[Proof of Claim \ref{claim1}]
It is trivial to see that $\dd(x,y)=\dd(y,x)$ for all $x,y\in X$. Moreover, by axiom \ref{ax2} and \ref{ax3}, we have that $\dd(x,y)=0$ if and only if $x=y$. Now we show that $\dd$ satisfies the triangle inequality. To this end, let $x,y,z\in X$. By axioms \ref{ax2}, \ref{ax3}, and \ref{ax6} we have that:
\begin{align}
&x\in A_C(\delta_x)\cap A_C\left(\frac{\delta_x+\delta_y}{2}\right)\cap A_C\left(\frac{\delta_x+\delta_z}{2}\right),\\
& y\in A_C(\delta_y)\cap A_C\left(\frac{\delta_x+\delta_y}{2}\right)\cap A_C\left(\frac{\delta_y+\delta_z}{2}\right),\\
& z\in A_C(\delta_z)\cap A_C\left(\frac{\delta_x+\delta_z}{2}\right)\cap A_C\left(\frac{\delta_y+\delta_z}{2}\right).
\end{align}
Therefore, it follows, by axioms \ref{ax2}, \ref{ax4}, and \ref{ax5}, that
\begin{equation}\label{equation1}
C\left(\frac{\delta_x+\frac{\delta_x+\delta_z}{2}}{2}\right)=\frac{1}{2}C\left(\frac{\delta_x+\delta_z}{2}\right)
\end{equation}
and hence, by axiom \ref{ax6}, we have that
\begin{equation}\label{equation2}
\frac{1}{2}C\left(\frac{\delta_x+\delta_z}{2}\right)=C\left(\frac{\delta_x+\frac{\delta_x+\delta_z}{2}}{2}\right)\leq C\left(\frac{\delta_y+\frac{\delta_x+\delta_z}{2}}{2}\right).
\end{equation}
Now by (3), \ref{ax4}, and \ref{ax5}, we have that:
\begin{align*}
C\left(\frac{\delta_y+\frac{\delta_x+\delta_z}{2}}{2}\right)&=C\left(\frac{\delta_x+2\delta_y+\delta_z}{4}\right)\\
&=C\left(\frac{1}{2}\frac{\delta_x+\delta_y}{2}+\frac{1}{2}\frac{\delta_y+\delta_z}{2}\right)\\
&=\frac{1}{2}C\left(\frac{\delta_x+\delta_y}{2}\right)+\frac{1}{2}C\left(\frac{\delta_y+\delta_z}{2}\right).
\end{align*}
To conclude, this last equation together with \eqref{equation2} yield the triangle inequality:
\[
\dd(x,z)=2C\left(\frac{\delta_x+\delta_z}{2}\right)\leq 2C\left(\frac{\delta_x+\delta_y}{2}\right)+2C\left(\frac{\delta_y+\delta_z}{2}\right)\leq \dd(x,y)+\dd(y,z).
\]
Thus, $\dd$ is a metric. Passing to continuity, if $\rho(x_n,x)\to 0$, then, by \ref{ax1}, $C((\delta_{x_n}+\delta_x)/2)\to C(\delta_x)=0$ and hence, $\dd(x_n,x)\to 0$. Now suppose that $(x_n,y_n)\to (x,y)$. Then, $\rho(x_n,x)\to 0$ and $\rho(y_n,y)\to 0$ and hence
\[
\dd(x_n,y_n)-\dd(x,y)\leq \dd(x_n,x)+\dd(x,y)+\dd(y_n,y)-\dd(x,y)\to 0
\]
and analogously,
\[
\dd(x,y)-\dd(x_n,y_n)\leq \dd(x,x_n)+\dd(x_n,y_n)+\dd(y_n,y)-\dd(x_n,y_n)\to 0
\]
proving that $\dd(x_n,y_n)\to \dd(x,y)$. Thus $\dd$ is continuous over $X\times X$.
\end{proof}

Now denote by $\bigtriangleup^0(X)$ the set of simple probability measures.
\begin{claim}\label{claim2}
If $p\in \bigtriangleup^0(X)$ and $x\in A_C(p)$, then
\[
C(p)=\int \dd(x,y)\mathrm{d}p(y).
\]
\end{claim}
\begin{proof}[Proof of Claim \ref{claim2}]
Let $p\in \bigtriangleup^0(X)$ and assume without loss of generality that 
\[
p=\sum_{i=1}^n\alpha_i\delta_{x_i}.
\]
for some $n\geq 1$, $(\alpha_i)\in \mathbb{R}^n_+$ with $\sum_{i=1}^n\alpha_i=1$, and $x_1,x_2,\ldots,x_n$ in $X$. If $x\in A_C(p)$, then by \ref{ax2} $x\in A_C(p)\cap A_C(\delta_x)$. By axioms \ref{ax4} and \ref{ax5}, it follows that:
\[
C\left(\frac{p+\delta_x}{2}\right)=\frac{1}{2}C(p).
\]
Moreover, 
\[
\frac{p+\delta_x}{2}=\sum_{i=1}^n\alpha_i \frac{\delta_{x_i}+\delta_x}{2}
\]
and, by axiom \ref{ax6}, for all $i\in [n]$, we have that $x\in A_C((\delta_{x_i}+\delta_{x})/2)$. By axioms \ref{ax4} and \ref{ax5}, this implies that:
\[
\frac{1}{2}C(p)=C\left(\frac{p+\delta_x}{2}\right)=C\left(\sum_{i=1}^n\alpha_i \frac{\delta_{x_i}+\delta_x}{2}\right)=\sum_{i=1}^n\alpha_i C\left(\frac{\delta_{x_i}+\delta_x}{2}\right)
\]
and hence,
\[
C(p)=\sum_{i=1}^n\alpha_i \dd(x,x_i)=\int \dd(x,y)\mathrm{d}p(y)
\]
that proves the claim.
\end{proof}

\begin{claim}\label{claim3}
For all $p\in \bigtriangleup^0(X)$,
\[
A_C(p)\subseteq \arg\min\limits_{x\in X}\int \dd(x,y)\mathrm{d}p(y).
\]
\end{claim}
\begin{proof}[Proof of Claim \ref{claim3}]
Let $p\in \bigtriangleup^0(X)$ and assume without loss of generality that $p=\sum_{i=1}^n\alpha_i \delta_{x_i}$. Let $x\in A_C(p)$, by definition, we have that:
\[
C\left(\frac{p+\delta_x}{2}\right)\leq C\left(\frac{p+\delta_y}{2}\right) 
\]
for all $y\in X$. Therefore, by the same steps used in the proof of Claim \ref{claim2} and since, for all $i\in [n]$, $y\in A_C((\delta_{x_i}+\delta_y)/2)$ we have that 
\[
C\left(\frac{p+\delta_x}{2}\right)=\frac{1}{2}C(p)=\frac{1}{2}\sum_{i=1}^n\alpha_i\dd(x_i,x)\ \textnormal{and}\ C\left(\frac{p+\delta_y}{2}\right)=\frac{1}{2}\sum_{i=1}^n\alpha_i \dd(x_i,y)
\]
and hence,
\[
\int \dd(x,z)\mathrm{d}p(z)=\sum_{i=1}^n\alpha_i\dd(x_i,x)=C(p)\leq \sum_{i=1}^n\alpha_i \dd(x_i,y)=\int \dd(y,z)\mathrm{d}p(z).
\]
By the arbitrariness of $y\in X$, it follows that $x\in \arg\min\limits_{x\in X}\int \dd(x,y)\mathrm{d}p(y)$. Proving the claim. 
\end{proof}
Claims \ref{claim2} and \ref{claim3} yield:
\[
C(p)=\min\limits_{x\in X}\int \dd(x,y)\mathrm{d}p(y)
\]
for all $p\in \bigtriangleup^0(X)$. Indeed, let $p\in \bigtriangleup^0(X)$. Let $x\in A_C(p)$, connecting our previous two claims, we have that:
\[
C(p)=\int \dd(x,y)\mathrm{d}p(y)\leq \int \dd(z,y)\mathrm{d}p(y)
\]
for all $z\in X$. To conclude, we now need to extend these results to all probability measures over $X$. 
\begin{claim}\label{claim4}
For all $p\in \DX$,
\[
C(p)=\min\limits_{z\in X}\int \dd(z,y)\mathrm{d}p(y).
\]
\end{claim}
\begin{proof}[Proof of Claim \ref{claim4}]
Suppose $p\in \DX$. Then, by Theorem 15.10 in \cite{AliBorder2006}, there exists a sequence $(p_n)$ in $\bigtriangleup^0(X)$ such that $(p_n)$ weakly converges to $p$. For all $n\geq 1$, let
\[
x_n\in \arg\min_{x\in X}\int \dd(x,y)\mathrm{d}p_n(y).
\]
By compactness, $(x_n)$ admits a subsequence $(x_{n_k})$ converging to some $x\in X$. By \ref{ax1} we have that $C(p_{n_k})\to C(p)$. Moreover,
\begin{align*}
&\left\lvert \int \dd(x_{n_k},y)\mathrm{d}p_{n_k}(y)-\int \dd(x,y)\mathrm{d}p(y) \right\rvert\leq \\
&\left\lvert \int \dd(x_{n_k},y)\mathrm{d}p_{n_k}(y)-\int \dd(x,y)\mathrm{d}p_{n_k}(y) \right\rvert+\left\lvert \int \dd(x,y)\mathrm{d}p_{n_k}(y)-\int \dd(x,y)\mathrm{d}p(y) \right\rvert
\end{align*}
and hence, since $(p_{n_k})$ weakly converges to $p$ and, by Claim \ref{claim1}, $\dd$ is continuous we have that:
\[
C(p_{n_k})=\int \dd(x_{n_k},y)\mathrm{d}p_{n_k}(y)\to \int \dd(x,y)\mathrm{d}p(y)
\]
which implies $C(p)=\int \dd(x,y)\mathrm{d}p(y)$. Now suppose that $z\in X$, then,
\[
\int \dd(y,z)\mathrm{d}p_n(y)\to \int \dd(y,z)\mathrm{d}p(y)
\]
and hence
\[
\int \dd(y,z)\mathrm{d}p(y)=\lim\limits_{n\to \infty}\int \dd(y,z)\mathrm{d}p_n(y)\geq \lim\limits_{n\to \infty}C(p_n)=C(p).
\]
Thus, $x\in \arg\min\limits_{z\in X}\int \dd(z,y)\mathrm{d}p(y)$ and $C(p)=\min\limits_{z\in X}\int \dd(z,y)\mathrm{d}p(y)$ proving the claim.
\end{proof}
\noindent Thus, the proof of the sufficiency is completed.
\par\medskip
Now we show necessity. Since $\dd$ is continuous with respect to the product topology induced by $\rho$, it follows that $C$ is continuous in the topology of weak convergence.\footnote{The formal argument is completely analogous to that of Claim \ref{claim4} and therefore it is omitted.} It is trivial to see that $C(\delta_x)=0$ for all $x\in X$. Thus axioms \ref{ax1} and \ref{ax2} hold. Let $p\in \DX$ be nondegenerate. For some $x_p\in X$, we have $C(p)=\int \dd(x_p,y)\mathrm{d}p(y)=W^1_\dd(\delta_{x_p},p)$. Since $\delta_{x_p}$ is degenerate and $W^1_\dd$ is a metric, we have $W^1_\dd(\delta_{x_p},p)>0$ and hence Axiom \ref{ax3} holds. To prove concavity, let $\alpha\in [0,1]$ and $p,q\in \DX$. Then, it follows that:
\begin{align*}
C(\alpha p+(1-\alpha)q)&=\min\limits_{x\in X}\alpha \int \dd(x,y)\mathrm{d}p(y)+(1-\alpha)\int \dd(x,y)\mathrm{d}q(y)\\
&\geq \alpha \min\limits_{x\in X} \int \dd(x,y)\mathrm{d}p(y)+(1-\alpha)\min\limits_{x\in X} \int \dd(x,y)\mathrm{d}q(y)\\
&=\alpha C(p)+(1-\alpha)C(q).
\end{align*}
Thus, \ref{ax4} holds. Passing to \ref{ax5}, suppose that $x\in A_C(p)\cap A_C(q)$ and $\alpha\in [0,1]$. First notice that:
\[
\frac{1}{2}C(p)=C\left(\frac{p+\delta_x}{2}\right)
\]
and also
\[\frac{1}{2}\int \dd(x,y)\mathrm{d}p(y)=C\left(\frac{p+\delta_x}{2}\right)\leq C\left(\frac{p+\delta_z}{2}\right)=\frac{1}{2}\int \dd(z,y)\mathrm{d}p(y)
\]
for all $z\in X$ and hence $C(p)=\int \dd(x,y)\mathrm{d}p(y)$. The same holds for $q$. Since $C$ is concave, we have that:
\begin{align*}
\alpha C(p)+(1-\alpha)C(q)
&=\alpha \int \dd(x,y)\mathrm{d}p(y)+(1-\alpha)\int \dd(x,y)\mathrm{d}q(y)\\
&=\int \dd(x,y)\mathrm{d}\left(\alpha p+(1-\alpha)q\right)(y)\geq C(\alpha p+(1-\alpha)q)
\end{align*}
thus \ref{ax5} holds. To conclude, let $x,y,z\in X$ and $\alpha\in [0,1]$, then for all $w \in X$,
\begin{align*}
\int \dd(v,w)\mathrm{d}\left(\alpha\frac{\delta_x+\delta_y}{2}+(1-\alpha)\delta_z\right)(v)&\geq \frac{\alpha}{2}\left[\dd(x,w)+\dd(y,w)\right]\\
&\geq \frac{\alpha}{2}\dd(x,y)\\
&=C_\dd\left(\alpha\frac{\delta_x+\delta_y}{2}+(1-\alpha)\delta_y\right)
\end{align*}
where the last inequality follows observing that $y\in A_{C_\dd}(\alpha(\delta_x+\delta_y)/2+(1-\alpha)\delta_y)$. Thus, by arbitrariness of $w\in X$,
\begin{align*}
C_\dd\left(\alpha\frac{\delta_x+\delta_y}{2}+(1-\alpha)\delta_z\right)\geq C_\dd\left(\alpha\frac{\delta_x+\delta_y}{2}+(1-\alpha)\delta_y\right).
\end{align*}
Thus, \ref{ax6} holds.
\end{proof}

\noindent For the second step, we sharpen the representation. Next are the characterizing properties. 

\begin{axiomq}\label{qx1}
$C$ is continuous with respect to the weak convergence topology.
\end{axiomq}

\begin{axiomq}\label{qx2}
For all $x\in X$, $C(\delta_x)=0$.
\end{axiomq}

\begin{axiomq}\label{qx3}
For all nondegenerate $p\in \DX$, $C(p)>0$.
\end{axiomq}

\begin{axiomq}\label{qx4}
$C$ is quasiconcave.
\end{axiomq}

\begin{axiomq}\label{qx7}
If $p,q\in \DX$ and $x\in A_C(p)\cap A_C(q)$, then, for all $\alpha\in [0,1]$,
\[
x\in A_C(\alpha p+(1-\alpha)q).
\]
\end{axiomq}

\begin{axiomq}\label{qx5}
If $p,q\in \DX$ and $x\in A_C(p)\cap A_C(q)$, then, for all $\alpha\in [0,1]$,
\[
C(\alpha p+(1-\alpha)q)\leq \max\left\lbrace C(p),C(q)\right\rbrace.
\]
\end{axiomq}

\begin{axiomq}\label{qx6}
If $p\in \DX$ and $x\in A_C(p)$, then, for all $\alpha\in [0,1]$,
\[
C(\alpha p+(1-\alpha)\delta_x)=\alpha C(p).
\]
\end{axiomq}

\begin{axiomq}\label{qx8}
For all $x,y,z\in X$, we have
\[
C\left(\frac{1}{2}\frac{\delta_x+\delta_y}{2}+\frac{\delta_y}{2}\right)\leq C\left(\frac{1}{2}\frac{\delta_x+\delta_y}{2}+\frac{\delta_z}{2}\right).
\]
\end{axiomq}

\begin{proposition}\label{quasi}
Let $C:\DX\to \R_+$. The following are equivalent:
\begin{enumerate}
\item $C$ satisfies axioms \ref{qx1}-\ref{qx8}.
\item $C$ is a best approximation function for some continuous metric $\dd$.
\end{enumerate}
\end{proposition}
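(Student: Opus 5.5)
Necessity is routine, so the plan is to reduce sufficiency to Proposition \ref{thm1}: show that axioms \ref{qx1}--\ref{qx8} imply axioms \ref{ax1}--\ref{ax6}, then apply that proposition.

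For necessity, assume $C=C_\dd$. Continuity and positivity (\ref{qx1}--\ref{qx3}) follow from Lemma \ref{basic properties}, and concavity implies quasiconcavity, giving \ref{qx4}. For \ref{qx7} and \ref{qx5}, reuse the computation from the necessity part of Proposition \ref{thm1}. If $x\in A_C(p)\cap A_C(q)$, Lemma \ref{basic properties}(6) shows that $x$ minimizes $z\mapsto\int\dd(z,y)\mathrm{d}p(y)$, and likewise for $q$. By linearity, $x$ then minimizes the same map for $\alpha p+(1-\alpha)q$, which gives \ref{qx7}. It also gives
\[
C(\alpha p+(1-\alpha)q)=\alpha C(p)+(1-\alpha)C(q)\le\max\{C(p),C(q)\},
\]
which is \ref{qx5}. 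Axiom \ref{qx6} is Lemma \ref{basic properties}(8), and \ref{qx8} is the case $\alpha=1/2$ of the last computation in the proof of Proposition \ref{thm1}.

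For sufficiency, \ref{ax1}--\ref{ax3} are immediate. The crux is recovering linearity along common-proxy segments, namely the inequality in \ref{ax5}. Fix $x\in A_C(p)\cap A_C(q)$ and suppose both $C(p),C(q)>0$; the case where one vanishes is handled directly by \ref{qx6}. By \ref{qx6}, rescale by mixing with $\delta_x$: set $p'=\beta p+(1-\beta)\delta_x$ with $\beta=C(q)/C(p)$ when $C(p)\ge C(q)$. Then $C(p')=C(q)$, and $x\in A_C(p')$ by \ref{qx7}, since $x\in A_C(\delta_x)$ (quasiconcavity and \ref{qx2} should give this). Now apply \ref{qx5} to $p',q$ with common proxy $x$, which gives $C(\gamma p'+(1-\gamma)q)\le C(q)$. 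Writing $\alpha p+(1-\alpha)q$ as $\lambda(\gamma p'+(1-\gamma)q)+(1-\lambda)\delta_x$ for suitable $\gamma,\lambda$ and applying \ref{qx6} once more, I expect to obtain $C(\alpha p+(1-\alpha)q)\le\lambda C(q)$. An explicit computation should identify $\lambda C(q)$ with $\alpha C(p)+(1-\alpha)C(q)$, which is the inequality in \ref{ax5}; the membership part of \ref{ax5} is \ref{qx7}.

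Concavity \ref{ax4} is the step I expect to be the main obstacle, since quasiconcavity is weaker. I would avoid proving it globally. Instead I would check exactly where \ref{ax4} is invoked in the proof of Proposition \ref{thm1}: it is used only to get equalities $C(\text{mixture})=\text{average}$ along common-proxy segments, such as equation \eqref{equation1} and Claim \ref{claim2}. So it suffices to prove the reverse inequality along common-proxy segments. For that, I would use the same rescaling trick together with quasiconcavity \ref{qx4} applied to the equally complex pair $p',q$, giving $C(\gamma p'+(1-\gamma)q)\ge C(q)$. Combined with the upper bound above, this makes $C$ affine on such segments. Concavity of $C$ itself then follows a posteriori from the representation, since the equality version suffices to run Claims \ref{claim1}--\ref{claim4}.

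Finally, \ref{ax6} must be extended from $\alpha=1/2$ (\ref{qx8}) to all $\alpha$. The plan is to first obtain $\dd$ and the representation on binary and ternary lotteries via the argument of Claim \ref{claim1}, which uses only the $\alpha=1/2$ instance together with the affine-on-segments property. Then \ref{ax6} for general $\alpha$ follows from the triangle inequality, exactly as in the necessity computation.
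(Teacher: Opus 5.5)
Your route is the paper's. You rescale $p$ toward the common proxy, $p'=\beta p+(1-\beta)\delta_x$ with $\beta=C(q)/C(p)$. Then \ref{qx4} and \ref{qx5} make $C$ equal to $s:=C(q)$ on the segment between $p'$ and $q$, and \ref{qx7} and \ref{qx6} transfer this to $r:=\alpha p+(1-\alpha)q$. Finally you rerun the proof of Proposition \ref{thm1}, where concavity is only ever used as affinity along common-proxy segments.

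However, the transfer step is inverted as you state it. Let $r'=\gamma p'+(1-\gamma)q$. For general $p,q$ you cannot write $r=\lambda r'+(1-\lambda)\delta_x$ with $\lambda\in[0,1]$. The reason is that $r'$ already carries mass $\gamma(1-\beta)$ on $x$, so matching coefficients forces $\lambda=1/(1-\gamma(1-\beta))\geq 1$. Your target shows the same thing: you want $\lambda C(q)=m:=\alpha C(p)+(1-\alpha)C(q)$, and $m>C(q)$ whenever $\alpha>0$ and $C(p)>C(q)$, so $\lambda=m/s>1$. For such $\lambda$ the expression is an affine extrapolation, not a mixture. Axiom \ref{qx6} only contracts a lottery toward one of its own proxies, so it says nothing about this expression. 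In particular, it cannot be applied to $r'$ to bound $C(r)$.

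The fix is to read the identity the other way round,
\[
\gamma p'+(1-\gamma)q=\frac{s}{m}\,r+\Bigl(1-\frac{s}{m}\Bigr)\delta_x,\qquad \gamma=\frac{\alpha C(p)}{m},
\]
and to apply \ref{qx6} to $r$ itself. This requires $x\in A_C(r)$, which is exactly where \ref{qx7} is needed, not merely for the membership part of \ref{ax5}. It gives $C(r')=\frac{s}{m}C(r)$. Then $C(r')\le s$ from \ref{qx5} and $C(r')\ge s$ from \ref{qx4} yield $C(r)=m$ in one stroke; this is the paper's argument.

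Two smaller points. First, $x\in A_C(\delta_x)$ follows from \ref{qx2} and $C\geq 0$ alone; quasiconcavity is not needed. Second, your extension of \ref{ax6} to all $\alpha$ is unnecessary. The proof of Proposition \ref{thm1} invokes \ref{ax6} only at $\alpha=1/2$, to obtain memberships such as $x\in A_C((\delta_x+\delta_y)/2)$, and that case is exactly \ref{qx8}. Moreover, the representation on ternary lotteries that your detour relies on comes from Claims \ref{claim2}--\ref{claim3}, not Claim \ref{claim1}. Once those claims are run, the full representation is already in hand.
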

\begin{proof}
Necessity of axioms \ref{qx1}-\ref{qx5} and \ref{qx8} is trivial in light of Proposition \ref{thm1}. Now we show that axiom \ref{qx6} is also necessary. Let $p\in \DX$, $x\in A_C(p)$, and $\alpha\in [0,1]$, by Proposition \ref{thm1} $C$ must satisfy \ref{ax4} and \ref{ax5},
\begin{align*}
C(\alpha p+(1-\alpha)\delta_x)=\alpha C(p)+(1-\alpha)C(\delta_x)=\alpha C(p)
\end{align*}

Passing to suffuciency, by inspection of the proof of Proposition \ref{thm1}, it is enough to show that \ref{qx1}-\ref{qx8} imply that for all $p,q\in \DX$ and $\alpha\in [0,1]$,
\begin{equation}\label{eqquasiquasi}
x\in A_C(p)\cap A_C(q)\Longrightarrow C(\alpha p+(1-\alpha)q)=\alpha C(p)+(1-\alpha)C(q).
\end{equation}
To this end let $p,q\in \DX$, $\alpha\in [0,1]$, and $x\in A_C(p)\cap A_C(q)$. If $C(p)=C(q)$, then there is nothing to show by axioms \ref{qx4} and \ref{qx5}. If $\alpha=0$ or $\alpha=1$, again the conclusion is immediate, and hence we assume $\alpha\in (0,1)$. Let $t=C(p)$ and $s=C(q)$ and assume, without loss of generality, that $t>s$ and let $\beta=s/t\in [0,1)$. Suppose first that $s=0$. Then, by axioms \ref{qx2}, \ref{qx3}, and the fact that $x\in A_C(q)$, we must have that $q=\delta_x$. Thus, by axiom \ref{qx6}, we have
\[
C(\alpha p+(1-\alpha)q)=C(\alpha p+(1-\alpha)\delta_x)=\alpha C(p)=\alpha C(p)+(1-\alpha )C(q)
\]
proving the claim. Therefore, we can also assume that $s>0$. Now let $p_\beta=\beta p+(1-\beta)\delta_x$. By axiom \ref{qx6}, we have $C(p_\beta)=\beta C(p)=s$. Therefore, $C(p_\beta)=C(q)$. Moreover, $x\in A_C(p_\beta)\cap A_C(q)$, and hence, by axioms \ref{qx4} and \ref{qx5}, it follows that:
\[
C(\theta p_\beta+(1-\theta)q)=s
\]
for all $\theta\in [0,1]$. In particular, now let $m=\alpha C(p)+(1-\alpha)C(q)$ and $\theta=\alpha t/m$. Then, we have that:
\begin{align*}
\theta p_\beta+(1-\theta)q&=\theta \beta p+\theta(1-\beta)\delta_x+(1-\theta)q\\
&=\frac{\alpha t}{m}\frac{s}{t}p+\frac{\alpha t}{m}\frac{t-s}{t}\delta_x+\frac{m-\alpha t}{m}q\\
&=\frac{\alpha s}{m}p+\frac{\alpha (t-s)}{m}\delta_x+\frac{(1-\alpha)s}{m}q\\
&=\frac{s}{m}(\alpha p+(1-\alpha)q)+\alpha\frac{t-s}{m}\delta_x\\
&=\frac{s}{m}(\alpha p+(1-\alpha)q)+\left(1-\frac{s}{m}\right)\delta_x.
\end{align*}
By axiom \ref{qx7}, we have that $x\in A_C(\alpha p+(1-\alpha)q)$. Then, by the last equality and axiom \ref{qx6}, we have that:
\begin{align*}
C(q)&=s=C(\theta p_\beta+(1-\theta)q)\\
&=C\left(\frac{s}{m}(\alpha p+(1-\alpha)q)+\left(1-\frac{s}{m}\right)\delta_x\right)\\
&=\frac{s}{m}C(\alpha p+(1-\alpha)q).
\end{align*}
Thus, since $s>0$, it follows that $m=C(\alpha p+(1-\alpha)q)$ and hence \eqref{eqquasiquasi} holds.
\end{proof}

Before finally giving the proof of Theorem \ref{thm1}, we need the following lemma.\footnote{The result is folklore, but I could not find any reference.}
\begin{lemma}\label{lemma_qc}
Let $S$ be a convex subset of a real vector space, and let
$f:S\to \mathbb{R}$ be continuous. Suppose that for all $x,y\in S$
and all $\alpha\in[0,1]$,
\[f(x)=f(y)
    \Longrightarrow
    f(\alpha x+(1-\alpha)y)\leq f(x).\]
Then, for all $x,y\in S$ and all $\alpha\in[0,1]$, $f(\alpha x+(1-\alpha)y)\leq\max\{f(x),f(y)\}.$
\end{lemma}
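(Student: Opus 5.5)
The plan is to argue by contradiction along the segment. Fix $x,y\in S$, assume without loss of generality $f(x)\geq f(y)$, and define $g:[0,1]\to\mathbb{R}$ by $g(\alpha)=f(\alpha x+(1-\alpha)y)$. The map $\alpha\mapsto \alpha x+(1-\alpha)y$ is affine, so the hypothesis transfers to $g$: whenever $g(a)=g(b)$, we have $g\le g(a)$ on the interval between $a$ and $b$. I will treat $g$ as continuous, taking continuity of $f$ along segments as the intended meaning. This holds in every application in the paper, since mixtures of lotteries depend weakly continuously on the weight. The goal is to show $g(\alpha)\leq g(1)=\max\{f(x),f(y)\}$ for all $\alpha\in[0,1]$.

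Suppose instead that some $\alpha_0\in(0,1)$ has $g(\alpha_0)>g(1)\geq g(0)$. The key step is to produce two points with equal values of $g$ on either side of a point where $g$ is strictly larger than that common value. Choose a level $c$ with $g(1)<c<g(\alpha_0)$. Since $g(0)<c<g(\alpha_0)$ and $g(1)<c<g(\alpha_0)$, the intermediate value theorem gives $a\in(0,\alpha_0)$ and $b\in(\alpha_0,1)$ with $g(a)=g(b)=c$.

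Applying the transferred hypothesis to the points $a x+(1-a)y$ and $b x+(1-b)y$ then yields $g(\alpha_0)\le c$. This contradicts $c<g(\alpha_0)$, so the claim follows.

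The only delicate step is the continuity and intermediate value argument. In particular, the level $c$ must be chosen strictly between $g(1)$ and $g(\alpha_0)$, so that crossings exist on both sides of $\alpha_0$; the case $g(0)<g(1)$ is covered automatically by this choice. Everything else is bookkeeping with the affine parametrization.
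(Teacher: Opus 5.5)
Your proof is correct and follows essentially the same route as the paper: restrict $f$ to the segment, use the intermediate value theorem on both sides of a point where $f$ exceeds $\max\{f(x),f(y)\}$ to find two equal-valued points straddling it, and apply the hypothesis to reach a contradiction. The only differences are cosmetic. The paper takes the common level to be exactly $M=\max\{g(0),g(1)\}$, whereas you take a level $c$ strictly between $g(1)$ and $g(\alpha_0)$. Your remark that continuity of $f$ is needed only along segments makes explicit what the paper uses implicitly.
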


\begin{proof}
Fix $x,y\in S$. Since $S$ is convex, the line segment joining
$x$ and $y$ is contained in $S$. Define $g:[0,1]\to \mathbb{R}$ as
\[
g:t\mapsto f((1-t)x+ty).
\]
By continuity of $f$, the function $g$ is continuous on $[0,1]$. Let $M= \max\{g(0),g(1)\}=\max\{f(x),f(y)\}.$ We claim that $g(t)\leq M$ for every $t\in[0,1]$. Suppose, for contradiction, that there exists $t_0\in[0,1]$ such that $g(t_0)>M.$ Since $g(0)\leq M$ and $g(1)\leq M$, we must have $t_0\in(0,1)$.
By continuity of $g$, in particular by the intermediate value theorem applied on $[0,t_0]$ and $[t_0,1]$, there exist $a,b\in[0,1]$ with
\[
    a<t_0<b
    \quad\text{and}\quad
    g(a)=g(b)=M.
\]
Set
\[u := (1-a)x+ay, \qquad v := (1-b)x+by.
\]
Then $u,v\in S$, and
\[
    f(u)=g(a)=M=g(b)=f(v).
\]
Since $t_0\in(a,b)$, there exists $\lambda\in[0,1]$ such that
\[
    (1-t_0)x+t_0y
    =
    \lambda u+(1-\lambda)v.
\]
By the hypothesis applied to $u$ and $v$, we obtain
\[
    f\bigl(\lambda u+(1-\lambda)v\bigr)
    \leq f(u)=M.
\]
Therefore,
\[
    g(t_0)
    =
    f((1-t_0)x+t_0y)
    =
    f\bigl(\lambda u+(1-\lambda)v\bigr)
    \leq M,
\]
which contradicts $g(t_0)>M$. Hence $g(t)\leq M$ for all
$t\in[0,1]$. Finally, for any $\alpha\in[0,1]$, taking $t=1-\alpha$ gives
\[
    f(\alpha x+(1-\alpha)y)
    =
    g(1-\alpha)
    \leq
    M
    =
    \max\{f(x),f(y)\}.
\]
This proves the claim.
\end{proof}

\begin{lemma}\label{redundancymx8}
Suppose $\succsim_{\mathbf{c}}$ is a binary relation that satisfies axioms \ref{mx1}, \ref{mx3}, \ref{mx5}, \ref{mx6}, and \ref{mx7}. For all nondegenerate $p\in \DX$, $x\in A_{\succi}(p)$, and $\alpha,\beta\in (0,1)$,
\[
\alpha>\beta \Longrightarrow \alpha p+(1-\alpha)\delta_x\succ_{\mathbf{c}}\beta p+(1-\beta)\delta_x.
\]
\end{lemma}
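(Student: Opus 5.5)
The plan is to work along the segment $p_t=t p+(1-t)\delta_x$, $t\in[0,1]$. I would first show that $\succsim_{\mathbf c}$ is weakly increasing in $t$ along it. Then I would upgrade to strict monotonicity by a self-similarity argument: Axiom \ref{mx7} replicates any indifference at geometrically smaller scales, and continuity (Axiom \ref{mx1}) pushes this down to $\delta_x$, which contradicts Axiom \ref{mx3}.

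\emph{Step 1 (common proxy).} Note that $x\in A_{\succi}(\delta_x)$. Indeed, for every $y$, Axiom \ref{mx3} gives $\frac{\delta_x+\delta_y}{2}\succsim_{\mathbf c}\delta_x=\frac{\delta_x+\delta_x}{2}$. Since $x\in A_{\succi}(p)\cap A_{\succi}(\delta_x)$, Axiom \ref{mx5} then gives $x\in A_{\succi}(p_t)$ for every $t\in[0,1]$.

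\emph{Step 2 (weak monotonicity).} Fix $0<\beta<\alpha<1$ and write $p_\beta=\frac{\beta}{\alpha}p_\alpha+\bigl(1-\frac{\beta}{\alpha}\bigr)\delta_x$. The lotteries $p_\alpha$ and $\delta_x$ share the proxy $x$ by Step 1, and $p_\alpha\succsim_{\mathbf c}\delta_x$ by Axiom \ref{mx3}. Axiom \ref{mx6} therefore yields $p_\alpha\succsim_{\mathbf c}p_\beta$.

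\emph{Step 3 (strictness).} Suppose, toward a contradiction, that $p_\beta\succsim_{\mathbf c}p_\alpha$, so that $p_\alpha\cci p_\beta$ by Step 2, and set $r=\beta/\alpha\in(0,1)$. Both $p_\alpha$ and $p_\beta$ have $x$ as a best proxy. Applying Axiom \ref{mx7} with mixing weight $r$ (the axiom allows the same $x$ to serve both lotteries) gives $r p_\alpha+(1-r)\delta_x\cci r p_\beta+(1-r)\delta_x$. Since $r p_\alpha+(1-r)\delta_x=p_{r\alpha}=p_\beta$ and $r p_\beta+(1-r)\delta_x=p_{r\beta}=p_{r^2\alpha}$, this reads $p_{r\alpha}\cci p_{r^2\alpha}$. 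Iterating the same argument (all points keep $x$ as a proxy by Step 1) and using transitivity, I obtain $p_\alpha\cci p_{r^n\alpha}$ for all $n$. As $n\to\infty$, $p_{r^n\alpha}\to\delta_x$ weakly. Because the set $\{q:q\succsim_{\mathbf c}p_\alpha\}$ is closed by Axiom \ref{mx1}, this gives $\delta_x\succsim_{\mathbf c}p_\alpha$. On the other hand, $p_\alpha$ is nondegenerate: if $\alpha p+(1-\alpha)\delta_x=\delta_z$ with $\alpha>0$, then $p(X\setminus\{z\})=0$, so $p=\delta_z$, contradicting that $p$ is nondegenerate. Axiom \ref{mx3} then gives $p_\alpha\succ_{\mathbf c}\delta_x$, a contradiction.

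The main obstacle is Step 3, namely extracting strictness without Axiom \ref{mx4}. The geometric rescaling via Axiom \ref{mx7} combined with continuity is the key device. The remaining care is to check that Axiom \ref{mx7} is applied with the proxy $x$ legitimately in $A_{\succi}$ of each rescaled lottery, which Step 1 guarantees.
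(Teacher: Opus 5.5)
Your proof is correct and follows essentially the same route as the paper. The paper also uses Axiom \ref{mx5} to keep $x$ as a common proxy along $p_t$, Axiom \ref{mx6} (with $p_\beta=\frac{\beta}{\alpha}p_\alpha+(1-\frac{\beta}{\alpha})\delta_x$) for weak monotonicity, and iterated Axiom \ref{mx7} with ratio $\beta/\alpha$ plus continuity to force $p_\alpha\cci\delta_x$, contradicting Axiom \ref{mx3}; you additionally make explicit two details the paper leaves implicit, namely $x\in A_{\succi}(\delta_x)$ via Axiom \ref{mx3} and the nondegeneracy of $p_\alpha$.
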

\begin{proof}
Let $p\in \DX$ be nondegenerate and $x\in  A_{\succi}(p)$. Define $p_t=t p+(1-t)\delta_x$ for all $t\in [0,1]$ and let $\alpha,\beta\in (0,1)$ with $\alpha>\beta$. Then,
\begin{align*}
p_\beta& =\beta p+(1-\beta)\delta_x=\frac{\beta \alpha}{\alpha}p+(1-\beta)\delta_x\\
&=\frac{\beta}{\alpha}(\alpha p+(1-\alpha)\delta_x-(1-\alpha)\delta_x)+(1-\beta)\delta_x\\
&=\frac{\beta}{\alpha}p_\alpha +\left(1-\beta - \frac{\beta(1-\alpha)}{\alpha}\right)\delta_x=\frac{\beta}{\alpha}p_\alpha+\left(1-\frac{\beta}{\alpha}\right)\delta_x.
\end{align*}
By axiom \ref{mx5}, we have $x\in A_{\succi}(p_t)$ for all $t\in [0,1]$. Therefore, $x\in A_{\succi}(p_\alpha)\cap A_{\succi}(p_\beta)\cap A_{\succi} (\delta_x)$ and hence, by axiom \ref{mx6}, we have $p_\alpha\succi p_\beta$. Suppose now that $p_\alpha\cci p_\beta$. Then, by axiom \ref{mx7}, we have
\[
\forall t\in [0,1],\ t p_\alpha +(1-t)\delta_x\cci t p_\beta +(1-t)\delta_x.
\]
Now let $\theta=\beta/\alpha$. Then, we have
\begin{align*}
p_\alpha \cci p_\beta=p_{\theta\alpha} \cci p_{\theta\beta} = p_{\theta^2\alpha}
\end{align*}
and hence, by repeating the argument, $p_\alpha\cci p_{\theta^k\alpha}$ for all $k\geq 1$. Letting $k\to \infty$, by axiom \ref{mx1}, we get $p_\alpha \cci \delta_x$ that contradicts axiom \ref{mx3} since $p_\alpha$ is nondegenerate. Thus, $p_\alpha \succ_{\mathbf{c}}p_\beta$.
\end{proof}

\begin{proof}[Proof of Theorem \ref{axiomatic_characterization}]
By axiom \ref{mx1} Debreu representation Theorem yields that there exists $D:\DX\to \R$ such that $D$ represents $\succi$. By Axiom \ref{mx3}, we have that $D(p)\geq D(\delta_x)$ for all $p\in \DX$ and $x\in X$. As a consequence, we have that $D(\delta_x)=D(\delta_y)$ for all $x,y\in X$ and without loss of generality we can set $D(\delta_x)=0$ for all $x\in X$. Moreover, by compactness of $\DX$ and the fact that $X$ is not a singleton, $D$ attains a maximum $M>0$ at some $p^*\in \DX$. By axiom \ref{mx4} we have that $D$ is quasiconcave.\footnote{See Lemma 56 in \cite{cerreia2011uncertainty}} Clearly, we have that $A_D(p):=\arg\min_{x\in X}D((p+\delta_x)/2)=A_{\succi}(p)$ for all $p\in \DX$. While by axiom \ref{mx5}, we have that for all $p,q\in \DX$ and $\alpha\in [0,1]$ if $x\in A_{D}(p)\cap A_D(q)$, then
\[
x\in A_D(\alpha p+(1-\alpha)q).
\]
In addition, by axiom \ref{mx6}, we have that if $x\in A_{D}(p)\cap A_{D}(q)$ and $p\cci q$, then $D(\alpha p+(1-\alpha)q)\leq D(p)$. Consider the following set:
\[
A(x)=\left\lbrace p\in \DX:x\in A_D(p) \right\rbrace.
\]
If $p,q\in A(x)$, then $x\in A_D(p)\cap A_D(q)$ and hence $x\in A_D(\alpha p+(1-\alpha)q)$ that yields $\alpha p+(1-\alpha)q\in A(x)$ for all $\alpha\in [0,1]$. Therefore, $A(x)$ is convex. Moreover, let $D'=D|_{A(x)}$. We have that if $D'(p)=D'(q)$, then $x\in A_D(p)\cap A_D(q)$ and $D(p)=D(q)$ implying in turn that $D'(\alpha p+(1-\alpha)q)\leq D'(p)$. By Lemma \ref{lemma_qc}, it follows that $D'$ is quasiconvex. Thus, for all $p,q\in \DX$ with $x\in A_D(p)\cap A_D(q)$ we have
\[
D(\alpha p+(1-\alpha)q)\leq \max\left\lbrace D(p),D(q)\right\rbrace
\]
for all $\alpha\in [0,1]$.
\par\medskip
Moreover, by axiom \ref{mx9}, we have that for all $x,y,z\in X$,
\[
D\left(\frac{1}{2}\frac{\delta_x+\delta_y}{2}+\frac{1}{2}\delta_z\right)\geq D\left(\frac{1}{2}\frac{\delta_x+\delta_y}{2}+\frac{1}{2}\delta_y\right).
\]
Now define the following. Fix $x^*\in A_D(p^*)$. For all $\alpha\in [0,1]$, $r_\alpha=\alpha p^*+(1-\alpha)\delta_{x^*}$, and the function:
\begin{align*}
\varphi&:[0,1]\to [0,M]\\
&\alpha \mapsto D(r_\alpha)
\end{align*}
By continuity of $D$, $\varphi$ is continuous and also $\varphi(0)=0$ and $\varphi(1)=M$. Moreover, by Lemma \ref{redundancymx8}, $\varphi$ is strictly increasing. Letting $\psi:=\varphi^{-1}$, we define $C=\psi\circ D$. Since $\psi$ is strictly increasing and continuous, we have that $C$ is continuous, $A_C=A_D$, and satisfies all the properties aforementioned about $D$. In addition, $C$ satisfies proportionality (that is, axiom \ref{qx6} in Proposition \ref{quasi}). To prove this let $p\in \DX$, $x\in A_C(p)$, and $\alpha\in [0,1]$. We aim to show that:
\[
C(\alpha p+(1-\alpha)\delta_x)=\alpha C(p).
\]
First let:
\[
\beta=\psi(D(p))=C(p).
\]
Then, $D(p)=\varphi(\beta)=D(r_\beta)$. Thus, $p\cci r_\beta$. Since, $x^*\in A_D(p^*)\cap A_D(\delta_{x^*})$, it follows that $x^*\in A_D(r_\beta)$. By axiom \ref{mx7}, it follows that:
\begin{align*}
D\left(\alpha p+(1-\alpha)\delta_x\right)&=D\left(\alpha r_\beta+(1-\alpha)\delta_{x^*}\right)\\
&=D\left(\alpha\beta p^*+(1-\alpha\beta)\delta_{x^*}\right)=D(r_{\alpha\beta})
\end{align*}
and hence,
\[
C\left(\alpha p+(1-\alpha)\delta_x\right)=\psi(D(r_{\alpha\beta}))=\alpha\beta=\alpha C(p).
\]
Therefore, $C$ satisfies all the axioms of Proposition \ref{quasi} and hence the proof of sufficiency is completed. 

Let us pass to the necessity. By the representation we have that $C_\dd$ satisfies axioms \ref{qx1}-\ref{qx8} reported above Proposition \ref{quasi}. In particular, by \ref{qx1}, we have that $\succsim_\mathbf{c}$ induced by $C_\dd$ satisfies axiom \ref{mx1}. By axioms \ref{qx2} and \ref{qx3}, it follows that $\succsim_{\mathbf{c}}$ satisfies axiom \ref{mx3}. By quasiconcavity, axiom \ref{qx4}, we have that $\succsim_{\mathbf{c}}$ satisfies axiom \ref{mx4}. By axioms \ref{qx7} and \ref{qx5}, it follows that $\succsim_{\mathbf{c}}$ satisfies satisfies axioms \ref{mx5} and \ref{mx6}. By axiom \ref{qx6}, we have that $\succsim_{\mathbf{c}}$ satisfies axiom \ref{mx7}. To show that $\succsim_{\mathbf{c}}$ satisfies axiom \ref{mx9}, suppose that $x,y,z\in X$ and $\alpha\in [0,1]$, then for all $w \in X$,
\begin{align*}
\int \dd(v,w)\mathrm{d}\left(\alpha\frac{\delta_x+\delta_y}{2}+(1-\alpha)\delta_z\right)(v)&\geq \frac{\alpha}{2}\left[\dd(x,w)+\dd(y,w)\right]\\
&\geq \frac{\alpha}{2}\dd(x,y)\\
&=C_\dd\left(\alpha\frac{\delta_x+\delta_y}{2}+(1-\alpha)\delta_y\right)
\end{align*}
where the last inequality follows observing that $y\in A_{C_\dd}(\alpha(\delta_x+\delta_y)/2+(1-\alpha)\delta_y)$. Thus, by arbitrariness of $w\in X$,
\begin{align*}
C_\dd\left(\alpha\frac{\delta_x+\delta_y}{2}+(1-\alpha)\delta_z\right)\geq C_\dd\left(\alpha\frac{\delta_x+\delta_y}{2}+(1-\alpha)\delta_y\right)
\end{align*}
proving that $\succsim_{\mathbf{c}}$ satisfies axiom \ref{mx9}.
\end{proof}

\subsection{Proofs of Section \ref{sect:measure_properties} and Proposition \ref{prop:measurecardinaluniqueness}}

\begin{proof}[Proof of Proposition \ref{prop:measurecardinaluniqueness}]
If $\dd_1=\kappa \dd_2$, then $C_{\dd_1}=\kappa C_{\dd_2}$ and the claim follows. Conversely, suppose that $C_{\dd_1}$ and $C_{\dd_2}$ are ordinally equivalent. Then,
\[
C_{\dd_1}(p)\geq C_{\dd_1}(q)\Longleftrightarrow C_{\dd_2}(p)\geq C_{\dd_2}(q)
\]
for all $p,q\in \DX$. Then, there exists a strictly increasing function $\phi:\R\to \R$ such that $C_{\dd_1}=\phi\circ C_{\dd_2}$. By Lemma \ref{basic properties}, for all $x,y\in X$ and $\alpha\in [0,1/2]$, we have
\begin{equation}\label{eq:proportionality}
\alpha\dd_1(x,y)=C_{\dd_1}(\alpha \delta_y+(1-\alpha)\delta_x)=\phi\left(C_{\dd_2}(\alpha \delta_y+(1-\alpha)\delta_x)\right)=\phi(\alpha \dd_2(x,y)).
\end{equation}
Now let $y_1,y_2,x_1,x_2\in X$ with $\dd_2(x_1,y_1)>0$ and $\dd_2(x_2,y_2)>0$. Then, define
\[
\alpha_1=\frac{\dd_2(x_2,y_2)}{2(\dd_2(x_1,y_1)+\dd_2(x_2,y_2))}\ \textnormal{and}\ \alpha_2=\frac{\dd_2(x_1,y_1)}{2(\dd_2(x_1,y_1)+\dd_2(x_2,y_2))}
\]
we have that $\alpha_1\dd_2(x_1,y_1)=\alpha_2\dd_2(x_2,y_2)$ and hence, by \eqref{eq:proportionality}, it follows that
\[
\frac{\dd_1(x_1,y_1)}{\dd_2(x_1,y_1)}=\frac{\phi(\alpha_1 \dd_2(x_1,y_1))}{\alpha_1\dd_2(x_1,y_1)}=\frac{\phi(\alpha_2 \dd_2(x_2,y_2))}{\alpha_2\dd_2(x_2,y_2)}=\frac{\dd_1(x_2,y_2)}{\dd_2(x_2,y_2)}.
\]
Therefore, set $\kappa=\dd_1(x_2,y_2)/\dd_2(x_2,y_2)$, by the arbitrariness of $x_1,y_1\in X$, it follows that $\dd_1=\kappa \dd_2$.
\end{proof}

\begin{proof}[Proof of Proposition \ref{prop:medianprescontr}]
Let $p\in \DX$, $\dd$ a continuous metric on $X$, and let $T$ be a $\dd$-median preserving contraction for $p$. Let $x_p$ be in $\arg\min_{x\in X} \int \dd(x,y)\mathrm{d}p(y)$ be a fixed point for $T$. Then,
\begin{align*}
C_\dd(p)&=\int \dd(x_p,y)\mathrm{d}p(y)\\
&\geq \int \dd(T(x_p),T(y))\mathrm{d}p(y)\\
&=\int \dd(x_p,T(y))\mathrm{d}p(y)\\
&=\int \dd(x_p,y)\mathrm{d}T_{\#}p(y)\geq C_\dd(T_{\#}p)
\end{align*}
proving the claim.
\end{proof}

\subsubsection*{Proofs of Section \ref{sect:maximal}}

The following lemma is \textit{folklore}, but we report it here with the proof for completeness.
\begin{lemma}\label{lemma_median}
Suppose that $X=[a,b]$ and $\dd$ is induced by the absolute value metric. For all $p\in \DX$, there exists $\textnormal{med}(p)$ such that $p([a,\textnormal{med}(p)])\geq 1/2$ and $p([\textnormal{med}(p),b])\geq 1/2$ and
\[
C_\dd(p)=\int |\textnormal{med}(p)-y|\mathrm{d}p(y).
\]
\end{lemma}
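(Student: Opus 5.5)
The plan is to define the median explicitly through the distribution function and then show, by a direct perturbation argument, that it minimizes the mean absolute deviation. Let $F(t)=p([a,t])$ for $t\in[a,b]$; $F$ is nondecreasing, right-continuous, and $F(b)=1$. Set
\[
m=\inf\{t\in[a,b]:F(t)\geq 1/2\},
\]
which is well defined since $F(b)=1$. By right-continuity, $F(m)\geq 1/2$, so $p([a,m])\geq 1/2$. For every $t<m$ we have $F(t)<1/2$, so $p([a,m))=\lim_{t\uparrow m}F(t)\leq 1/2$, and hence $p([m,b])\geq 1/2$; if $m=a$ this is trivial. We take $\textnormal{med}(p)=m$.

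Next, we show that $m$ attains the minimum defining $C_\dd(p)$. Set $g(x)=\int|x-y|\mathrm{d}p(y)$ and fix $x>m$. Splitting the integral into the regions $y\leq m$, $m<y<x$, and $y\geq x$ gives:
\begin{itemize}
\item for $y\leq m$, $|x-y|-|m-y|=x-m$;
\item for $y\geq x$, $|x-y|-|m-y|=-(x-m)$;
\item for $m<y<x$, $|x-y|-|m-y|\geq -(x-m)$.
\end{itemize}
Therefore
\[
g(x)-g(m)\geq (x-m)\bigl[p([a,m])-p((m,b])\bigr]=(x-m)\bigl[2F(m)-1\bigr]\geq 0.
\]
The case $x<m$ is symmetric. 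There we use $p([m,b])\geq 1/2$: for $y\geq m$ the difference equals $m-x$, and for $y<m$ it is at least $-(m-x)$. This yields
\[
g(x)-g(m)\geq (m-x)\bigl[p([m,b])-p([a,m))\bigr]\geq 0.
\]
Consequently $g(m)=\min_{x\in[a,b]}g(x)=C_\dd(p)$, which is the stated identity.

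The only delicate point is the handling of atoms at $m$. This is why we use the closed interval $[a,m]$ in the first inequality and the half-open interval $[a,m)$ in the second, relying on right-continuity of $F$ and the left limit respectively. Everything else is elementary. Existence of a minimizer is in any case guaranteed by the compactness argument in Section~\ref{sect:measurement}, but the argument above identifies it concretely as the median.
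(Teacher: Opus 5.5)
Your proof is correct. The construction $\textnormal{med}(p)=\inf\{t:F(t)\geq 1/2\}$ and the check of the two half-mass conditions (via right-continuity and the left limit of $F$) are the same as in the paper.

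Where you differ is the minimality step. The paper notes that $x\mapsto\int|x-y|\,\mathrm{d}p(y)$ is convex, with one-sided derivatives $2p([a,x])-1$ and $2p([a,x))-1$; it states these formulas without proof. It then uses the first-order condition to conclude that $m$ is a minimizer if and only if $p([a,m])\geq 1/2$ and $p([m,b])\geq 1/2$.

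You instead bound $g(x)-g(m)$ directly by splitting the integral into three regions. This amounts to proving by hand the two subgradient inequalities at $m$:
\begin{itemize}
\item $g(x)\geq g(m)+(2F(m)-1)(x-m)$ for $x>m$;
\item $g(x)\geq g(m)+(2p([a,m))-1)(x-m)$ for $x<m$.
\end{itemize}

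Your route is fully self-contained and does not rely on the unproved derivative formulas. The paper's route is shorter once those formulas are granted, and it also gives the converse: every minimizer satisfies the half-mass conditions. That converse is not needed for the lemma itself, nor for its later use in the proof of Proposition \ref{lem:maxcompllott}, which only uses the half-mass properties of the specific $\textnormal{med}(p)$ you construct.
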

\begin{proof}
Let $p\in \DX$ and denote by $P$ its cumulative distribution function, i.e., $P:x\mapsto p([a,x])$,
\[
\textnormal{med}(p)=\inf\left\lbrace x\in [a,b]:P(x)\geq \frac{1}{2} \right\rbrace.
\]
Since $P(b)=1$, $\textnormal{med}(p)$ exists and $p([a,\textnormal{med}(p)])=P(\textnormal{med}(p))\geq 1/2$. Moreover, 
\[
p([a,\textnormal{med}(p)))=\sup\limits_{x<\textnormal{med}(p)}P(x)\leq \frac{1}{2}
\]
and hence $p([\textnormal{med}(p),b])=1-p([a,\textnormal{med}(p)))\geq 1/2$. To conclude, notice that $F_p:x\mapsto \int |x-y|\mathrm{d}p(y)$ is convex and
\[
F'_{p,+}(x)=2p([a,x])-1\ \textnormal{and}\ F'_{p,-}(x)=2p([a,x))-1.
\]
Thus, $m\in X$ minimizes $F_p$ if and only if
\[
2p([a,m])-1=F'_{p,+}(m)\geq 0\geq F'_{p,-}(m)=2p([a,m))-1
\]
and hence, if and only if $p([a,m])\geq 1/2$ and $p([m,b])\geq 1/2$. By substituting $m$ with $\textnormal{med}(p)$ these inequalities hold and hence we have that $\textnormal{med}(p)$ minimizes $F_p$.
\end{proof}

\begin{proof}[Proof of Proposition \ref{lem:maxcompllott}]
For all $p\in \bigtriangleup([a,b])$, denote by $\mathrm{med}(p)$ the median outcome of $p$. By Lemma \ref{lemma_median}, we have that
\begin{align*}
C_\dd(p)&=\int \lvert y-\mathrm{med}(p)\rvert \mathrm{d}p(y)\\
&=\int_{[a,\mathrm{med}(p)]} \lvert y-\mathrm{med}(p)\rvert \mathrm{d}p(y)+\int_{[\mathrm{med}(p),b]} \lvert y-\mathrm{med}(p)\rvert \mathrm{d}p(y)\\
&=\int_{[a,\mathrm{med}(p))} \lvert y-\mathrm{med}(p)\rvert \mathrm{d}p(y)+\int_{\{\mathrm{med}(p)\}}\lvert y-\mathrm{med}(p)\rvert \mathrm{d}p(y)\\
&+\int_{(\mathrm{med}(p),b]} \lvert y-\mathrm{med}(p)\rvert \mathrm{d}p(y)\\
&\leq \int_{[a,\mathrm{med}(p))} \lvert a-\mathrm{med}(p)\rvert \mathrm{d}p(y)+\int_{(\mathrm{med}(p),b]} \lvert b-\mathrm{med}(p)\rvert \mathrm{d}p(y)\\
&\leq \lvert a-\mathrm{med}(p)\rvert p([a,\mathrm{med}(p)))+\lvert b-\mathrm{med}(p)\rvert p((\mathrm{med}(p),b]) \\
&\leq \frac{\lvert a-\mathrm{med}(p)\rvert}{2}+\frac{\lvert b-\mathrm{med}(p)\rvert}{2}\\
&=\frac{b-a}{2}=C_\dd\left(\frac12 \delta_a+\frac12 \delta_b\right).
\end{align*}
Thus, $C_\dd(1/2\delta_a+1/2\delta_b)\geq C_\dd(p)$ for all $p\in \DX$. To conclude we also show that $1/2\delta_a+1/2\delta_b$ is the unique maximizer. To this end let $p\in \DX$ and suppose that
\[
C_\dd(p)=\frac{b-a}{2}.
\]
If $\textnormal{med}(p)=a$, then 
\begin{equation}\label{suicidioimminente}
\frac{b-a}{2}=C_\dd(p)=\int_{(a,b]} (y-a) \mathrm{d}p(y)\leq (b-a)p((a,b])\leq \frac{b-a}{2}
\end{equation}
and hence $p((a,b])=1/2$ that yields $p(\left\lbrace a \right\rbrace)=1/2$. Moreover, notice that \eqref{suicidioimminente} also implies
\[
\int_{(a,b]} (b-y)\mathrm{d}p(y)=\int_{(a,b]} (b-a)-(y-a)\mathrm{d}p(y)=(b-a)p((a,b])-\int_{(a,b]} (y-a) \mathrm{d}p(y)=0
\]
and hence, by the strict positivity of $y\mapsto b-y$ on $(a,b)$ we must have $p((a,b))=0$. Therefore, $p(\{b\})=1/2$. The proof for the case $\textnormal{med}(p)=b$ is analogous. Therefore, let assume that $\textnormal{med}(p)\in (a,b)$. First, notice that since $p$ is a maximum, the inequalities we used before yield
\begin{align}\label{align}
\frac{b-a}{2}&=C_\dd(p)\leq (\textnormal{med}(p)-a)p([a,\textnormal{med}(p)))+(b-\textnormal{med}(p))p((\textnormal{med}(p),b])\nonumber \\
&\leq \frac{(\textnormal{med}(p)-a)}{2}+\frac{(b-\textnormal{med}(p))}{2}=\frac{b-a}{2}
\end{align}
and hence, given that $\textnormal{med}(p)-a>0$ and $b-\textnormal{med}(p)>0$, we have
\[
p([a,\textnormal{med}(p)))=\frac{1}{2}=p((\textnormal{med}(p),b])\ \textnormal{and}\ p(\{\textnormal{med}(p)\})=0.
\]
Since we have that
\[
\int_{[a,\textnormal{med}(p))}(\textnormal{med}(p)-y)\mathrm{d}p(y)\leq (\textnormal{med}(p)-a)p([a,\textnormal{med}(p)))
\]
and
\[
\int_{(\textnormal{med}(p),b]}(y-\textnormal{med}(p))\mathrm{d}p(y)\leq (b-\textnormal{med}(p))p((\textnormal{med}(p),b]),
\]
equality \eqref{align} yields $(\textnormal{med}(p)-y)=(\textnormal{med}(p)-a)$ and $(b-\textnormal{med}(p))=(y-\textnormal{med}(p))$ $p$-almost surely for all $y\in [a,\textnormal{med}(p))$ and $(\textnormal{med}(p),b]$, respectively. This holds if and only if $p(\{a\})=p(\{b\})=1/2$.\footnote{For the doubftul reader, notice that $(\textnormal{med}(p)-y)=(\textnormal{med}(p)-a)$ $p$-almost surely for all $y\in [a,\textnormal{med}(p))$ leads to:
\begin{align*}
1=p(y=a|[a,\textnormal{med}(p)))=\frac{p(\left\lbrace a\right\rbrace)}{p([a,\textnormal{med}(p)))}=2p(\left\lbrace a\right\rbrace).
\end{align*}
The same steps work for $b$.}
\end{proof}

\begin{proof}[Proof of Proposition \ref{lem:uniformdiscrete}]
Suppose, without loss of generality, that $X=[n]$. Let $(\alpha_i)_{i=1}^n\in \bigtriangleup^{n-1}$ and denote by $p^{\alpha}\in \DX$ the lottery $p^\alpha=\sum_{i\in [n]}\alpha_i\delta_i$. Assume that $C_\dd(p^{\alpha})\geq C_\dd\left(\sum_{x\in X}\frac{1}{|X|}\delta_x\right)$. Then, for all $i\in [n]$
\[
\sum_{j\neq i}\alpha_j\geq C_\dd(p^\alpha)\geq C_\dd\left(\sum_{x\in X}\frac{1}{|X|}\delta_x\right)= \frac{n-1}{n}.
\]
Since
\[
1-\alpha_i=\sum_{j\neq i}\alpha_j\geq\frac{n-1}{n},
\]
we have $\alpha_i\leq 1/n$ for every $i\in[n]$. Given that
$\sum_{i=1}^n\alpha_i=1$, it follows that
$\alpha_i=1/n$ for every $i\in[n]$ and hence
\[
p^\alpha=\sum_{x\in X}\frac{1}{|X|}\delta_x.
\]
Thus, the claim follows.
\end{proof}

\begin{proof}[Proof of Proposition \ref{prop:extremeasure}]
It follows from Proposition \ref{prop:extremepoints} proved below.
\end{proof}

\begin{proof}[Proof of Proposition \ref{prop:maximalelementsnormedspaces}]
Since $X$ admits a finite radius $r^*(X)$, it is sufficient to show that $C_\dd$ attains it for some $p\in \DX$. To this end notice that $X,\DX$ are both convex and compact. Moreover, the map $F:X\times \DX\to [0,\infty)$ defined as
\[
F:(x,p)\mapsto \int \lVert x-y\rVert\mathrm{d}p(y)
\]
is such that:
\begin{itemize}
\item For all $p\in \DX$, $x\mapsto F(x,p)$ is convex and $1$-Lipschitz.
\item For all $x\in X$, $p\mapsto F(x,p)$ is affine and $1$-Lipschitz.
\end{itemize}
As a consequence by Sion minimax Theorem (\cite{SionMinimax}), we have that:
\[
\max\limits_{p\in \DX}C_\dd(p)=\max\limits_{p\in \DX}\min\limits_{x\in X}F(x,p)=\min\limits_{x\in X}\max\limits_{p\in \DX}F(x,p)
\]
since $F(x,\cdot)$ is affine and $\textnormal{ext}(\DX)=X$,\footnote{We recall that we identify $X$ a subset of $\DX$, under the usual identification $x\leftrightarrow \delta_x$.} it follows that:
\[
\max\limits_{p\in \DX}C_\dd(p)=\min\limits_{x\in X}\max\limits_{p\in \DX}F(x,p)=\min\limits_{x\in X}\max\limits_{y\in X}\lVert x-y\rVert=r^*(X).
\]
Thus, there exists $p^*\in \DX$ such that $C_\dd(p^*)=r^*(X)$.
\end{proof}

\begin{lemma}\label{lem:uniquechebyshevcenter}
Suppose $X$ is a compact, convex, nonempty subset of $\R^n$ endowed with a strictly convex norm.\footnote{We recall that a norm $\lVert \cdot \rVert$ is said to be \textit{strictly convex} if whenever $x\neq y$ and $\lVert x\rVert=\lVert y\rVert=1$, we have $\lVert x+y\rVert<2$. In $\R^n$, the Euclidean norm is an example of a strictly convex norm.} Then, $X$ has a unique Chebyshev center.
\end{lemma}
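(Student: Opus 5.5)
Existence and uniqueness are handled separately; existence is already recorded in the text preceding the Remark. The function $R(x)=\sup_{y\in X}\lVert x-y\rVert$ is finite, convex and $1$-Lipschitz on the compact set $X$. Indeed, it is a supremum of the convex $1$-Lipschitz maps $x\mapsto \lVert x-y\rVert$, and in finite dimensions every norm is continuous. Hence $R$ attains its minimum $r^*(X)$. So the task reduces to showing that the set of minimizers of $R$ has at most one element.

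For uniqueness I argue by contradiction. Suppose $x_1\neq x_2$ are both Chebyshev centers, and set $m=\frac{1}{2}(x_1+x_2)\in X$, which lies in $X$ by convexity. By compactness and continuity of $y\mapsto\lVert m-y\rVert$, choose $y\in X$ with $R(m)=\lVert m-y\rVert$. Write $a=x_1-y$ and $b=x_2-y$. Then $\lVert a\rVert\leq r^*(X)$, $\lVert b\rVert\leq r^*(X)$, and $a\neq b$. Also $\lVert m-y\rVert=\frac{1}{2}\lVert a+b\rVert\geq R(m)\ge r^*(X)$. The triangle inequality gives $\frac{1}{2}\lVert a+b\rVert\le \frac{1}{2}(\lVert a\rVert+\lVert b\rVert)\le r^*(X)$. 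Hence every inequality is an equality: $\lVert a\rVert=\lVert b\rVert=r^*(X)$ and $\lVert a+b\rVert=2r^*(X)$.

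The key step is to turn this into a contradiction with strict convexity, and the only care needed is the degenerate case. If $r^*(X)=0$, then $X$ is a single point, which contradicts $x_1\neq x_2$. If $r^*(X)>0$, put $u=a/r^*(X)$ and $v=b/r^*(X)$. These are distinct unit vectors with $\lVert u+v\rVert=2$, which contradicts the definition of a strictly convex norm given in the footnote. Therefore $x_1=x_2$, and the Chebyshev center is unique. I expect no real obstacle here. The only point to verify is that the equality chain forces both $\lVert a\rVert$ and $\lVert b\rVert$ to equal $r^*(X)$, and not merely their average. This follows because each is bounded by $r^*(X)$ while their sum equals $2r^*(X)$.
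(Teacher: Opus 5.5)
Your proof is correct and follows essentially the same route as the paper: take the midpoint of two distinct Chebyshev centers and use strict convexity of the norm to show that its radius would fall strictly below $r^*(X)$. The differences are cosmetic. You fix a farthest point $y$ from the midpoint first and run the equality chain there, whereas the paper shows strict inequality for every $y$ and then invokes attainment of the supremum. You also dispose of the degenerate case $r^*(X)=0$ explicitly, which the paper leaves implicit.
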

\begin{proof}
Define $R:X\to \bar{\R}$ as $R(x)=\sup_{y\in X}\lVert x-y\rVert$ for all $x\in X$. Since $R$ is Lipschitz and $X$ is compact, there exists $x^*\in X$ the minimizes $R$. Suppose there are two distinct minimizers $x^*$ and $z^*$. Then, we have that
\[
R(x^*)=R(z^*)=r^*(X).
\]
This implies that, for all $y\in X$, $\lVert x^*-y\rVert\leq r^*(X)$ and  $\lVert z^*-y\rVert\leq r^*(X)$,
\[
\left\lVert \frac{x^*+z^*}{2}-y \right\rVert\leq \frac{1}{2}\left\lVert (x^*-y)+(z^*-y)\right\rVert\leq \frac{\lVert x^*-y\rVert+\lVert z^*-y\rVert}{2}\leq r^*(X).
\]
Then, we have two cases. If $\lVert x^*-y\rVert<r^*(X)$ or $\lVert z^*-y\rVert<r^*(X)$, then
\[
\left\lVert \frac{x^*+z^*}{2}-y \right\rVert<r^*(X).
\]
If instead, $\lVert x^*-y\rVert=\lVert z^*-y\rVert=r^*(X)$, then also, by strict convexity of the norm,
\[
\left\lVert x^*-y+z^*-y \right\rVert<2r^*(X)
\]
and hence, again,
\[
\left\lVert \frac{x^*+z^*}{2}-y \right\rVert<r^*(X).
\]
By compactness and continuity it follows that $R\left((x^*+z^*)/2\right)\in \mathbb{R}$ and in particular $R\left((x^*+z^*)/2\right)<r^*(X)$. Thus, we have a contradiction of the minimality of $x^*$ and $z^*$. Therefore, $x^*=z^*$, proving the uniqueness of the Chebyshev center.
\end{proof}

\begin{lemma}\label{lem:differentiability-contact-median}
Let $X\subseteq\mathbb{R}^n$ be nonempty, compact, and convex, endowed with the Euclidean norm. Let $p\in\Delta(X)$, $x^*\in X$, and $r>0$. Define
\[
K(x^*,r)
=
\left\{
y\in X:\|x^*-y\|=r
\right\},
\]
and suppose that $
p(K(x^*,r))=1.$
Let $F_p:\mathbb{R}^n\to\mathbb{R}$ be defined by
\[
F_p:x\mapsto \int_X\|x-y\|\,\mathrm{d}p(y).
\]
Then:
\begin{enumerate}
\item $F_p$ is continuously differentiable on the open ball $B(x^*,r)$
\item For every $x\in B(x^*,r)$,
\[
\nabla F_p(x)
=
\int_X\frac{x-y}{\|x-y\|}\,\mathrm{d}p(y)\ \textnormal{and}\ \nabla F_p(x^*)
=
\frac{x^*-\int_X y\,\mathrm{d}p(y)}{r}.
\]
\item If $x^*\in\arg\min_{x\in X}F_p(x)$,
then
\[
\int_X y\,\mathrm{d}p(y)=x^*
\qquad\text{and}\qquad
\nabla F_p(x^*)=\mathbf{0}.
\]
\end{enumerate}
\end{lemma}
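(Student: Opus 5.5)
The plan is to work directly with the integrand $y\mapsto \|x-y\|$ and to use the support condition $p(K(x^*,r))=1$ to stay away from its only singularity.

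\emph{Step 1: differentiability on the ball.} For $x\in B(x^*,r)$ and $p$-almost every $y$ we have $\|x^*-y\|=r$. Hence
\[
\|x-y\|\geq r-\|x-x^*\|>0 .
\]
So on any closed ball $\bar B(x^*,r-\varepsilon)$, the integrand $x\mapsto\|x-y\|$ is smooth, uniformly in $y\in K(x^*,r)$. Its gradient $(x-y)/\|x-y\|$ has norm $1$ and is continuous in $x$. The Hessian $\bigl(I-u u^\top\bigr)/\|x-y\|$, with $u=(x-y)/\|x-y\|$, is bounded by $1/\varepsilon$ there.

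By dominated convergence (or the mean value theorem together with dominated convergence), we can differentiate under the integral sign. This gives
\[
\nabla F_p(x)=\int \frac{x-y}{\|x-y\|}\,\mathrm{d}p(y).
\]
Continuity of this gradient in $x$ follows again from dominated convergence, since the integrand is bounded by $1$ and continuous in $x$. This proves point 1 and the first formula in point 2.

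\emph{Step 2: the gradient at the centre.} Evaluating at $x=x^*$ and using $\|x^*-y\|=r$ for $p$-a.e.\ $y$, we get
\[
\nabla F_p(x^*)=\frac{1}{r}\int (x^*-y)\,\mathrm{d}p(y)=\frac{x^*-\int y\,\mathrm{d}p(y)}{r}.
\]

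\emph{Step 3: first-order condition.} Suppose $x^*$ minimizes $F_p$ over $X$. Since $F_p$ is convex and differentiable at $x^*$, the first-order condition reads
\[
\langle \nabla F_p(x^*),\,z-x^*\rangle\geq 0\quad\text{for all } z\in X.
\]
Take $z=\mu:=\int y\,\mathrm{d}p(y)$. This is legitimate because $X$ is convex and compact, so $\mu\in X$. The condition then gives
\[
\langle x^*-\mu,\,\mu-x^*\rangle/r\geq 0,
\]
that is, $-\|x^*-\mu\|^2\geq 0$. Hence $\mu=x^*$, and so $\nabla F_p(x^*)=\mathbf 0$ by Step 2.

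\emph{Main obstacle.} The delicate step is Step 1, namely justifying the interchange of differentiation and integration. The key is the uniform lower bound on $\|x-y\|$ supplied by the support condition. Once that bound is secured, the remaining steps are short. It is also essential that the minimization in Step 3 is over $X$ rather than over $\mathbb R^n$, which is why the variational inequality (rather than a bare vanishing-gradient condition) is used, with the barycenter as test point.
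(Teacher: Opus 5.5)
Your proposal is correct and follows essentially the same route as the paper: the support condition gives $\|x-y\|\geq r-\|x-x^*\|>0$ for $p$-a.e.\ $y$, dominated convergence yields the gradient formula and its continuity, and the variational inequality tested at the barycenter gives $-\|x^*-\mu\|^2/r\geq 0$. The only difference is cosmetic: the paper justifies differentiation under the integral by bounding the difference quotients by $1$ (the integrand is $1$-Lipschitz), so your Hessian estimate on $\bar B(x^*,r-\varepsilon)$ is not needed.
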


\begin{proof}
Fix $x\in B(x^*,r)$. Since $p(K(x^*,r))=1$, we have
\[
\|x^*-y\|=r
\]
for $p$-almost every $y\in X$. Hence, by the reverse triangle inequality,
\[
\|x-y\|
\geq
\|x^*-y\|-\|x-x^*\|
=
r-\|x-x^*\|
>
0
\]
for $p$-almost every $y\in X$. In particular, $x\neq y$ for $p$-almost every $y$. Now fix $y\neq x$ and define the function $h_y:z\mapsto \lVert z-y\rVert$. Clearly, $h_y$ is differentiable at $x$, with
\[
\nabla\|x-y\|=\frac{x-y}{\|x-y\|}.
\]
Let $e_i$ denote the $i$-th vector of the canonical basis of $\mathbb{R}^n$. For every $t\neq0$, the reverse triangle inequality gives
\[
\left|
\frac{\|x+te_i-y\|-\|x-y\|}{t}
\right|
\leq 1.
\]
Moreover, for $p$-almost every $y$,
\[
\lim_{t\to0}
\frac{\|x+te_i-y\|-\|x-y\|}{t}
=
\frac{x_i-y_i}{\|x-y\|}.
\]
The dominated convergence theorem therefore implies
\[
\frac{\partial F_p}{\partial x_i}(x)
=
\int_X
\frac{x_i-y_i}{\|x-y\|}\mathrm{d}p(y).
\]
Since this holds for every coordinate $i\in [n]$, it follows that
\[
\nabla F_p(x)
=
\int_X
\frac{x-y}{\|x-y\|}\mathrm{d}p(y).
\]

To establish continuity of the gradient, let $(x_n)$ be a sequence in $B(x^*,r)$ converging to some $x\in B(x^*,r)$. For $p$-almost every $y\in X$,
\[
\frac{x_n-y}{\|x_n-y\|}
\longrightarrow
\frac{x-y}{\|x-y\|}.
\]
Furthermore,
\[
\left\|
\frac{x_n-y}{\|x_n-y\|}
\right\|=1.
\]
Another application of the dominated convergence theorem gives
\[
\nabla F_p(x_n)\longrightarrow\nabla F_p(x).
\]
Thus, $F_p$ is continuously differentiable on $B(x^*,r)$. Evaluating the gradient at $x^*$ and using $\|x^*-y\|=r$
$p$-almost surely we obtain
\begin{align*}
\nabla F_p(x^*)
&=
\int_X
\frac{x^*-y}{\|x^*-y\|}\mathrm{d}p(y)=
\frac{1}{r}
\int_X(x^*-y)\mathrm{d}p(y)=
\frac{x^*-\int_Xy\mathrm{d}p(y)}{r}.
\end{align*}
Now suppose that $x^*$ minimizes $F_p$ over $X$. Since $X$ is convex and $F_p$ is differentiable at $x^*$, the first-order optimality condition gives
\[
\left\langle
\nabla F_p(x^*),x-x^*
\right\rangle
\geq0
\qquad
\text{for every }x\in X.
\]
Because $X$ is compact and convex, $\int_Xy\mathrm{d}p(y)\in X$. Thus,
\begin{align*}
0
&\leq
\left\langle
\nabla F_p(x^*),\int_Xy\mathrm{d}p(y)-x^*
\right\rangle\\
&=
\left\langle
\frac{x^*-\int_Xy\mathrm{d}p(y)}{r},
\int_Xy\mathrm{d}p(y)-x^*
\right\rangle\\
&=
-\frac{\|x^*-\int_Xy\mathrm{d}p(y)\|^2}{r}.
\end{align*}
Since $r>0$, this is possible only if $\left\lVert x^*-\int_Xy\mathrm{d}p(y)\right\rVert=0.$
Therefore,
\[
\int_Xy\mathrm{d}p(y)=x^*.
\]
Substituting this equality into the gradient formula gives $\nabla F_p(x^*)=\mathbf{0}.$
\end{proof}

\begin{proof}[Proof of Corollary \ref{coro:maxRnmeasure}]
By Proposition \ref{prop:maximalelementsnormedspaces}, we have that $C_\dd(p)=r^*(X)$ for all $p\in \arg\max_{q\in \DX}C_\dd(q)$. Then, we have that
\[
r^*(X)=C_\dd(p)\leq \int \lVert x^*-y \rVert \mathrm{d}p(y)\leq r^*(X).
\]
This implies that $p(C(x^*))=1$. Indeed, $y\mapsto r^*(X)-\lVert x^*-y \rVert$ is nonnegative, and hence, $r^*(X)-\int \lVert x^*-y \rVert \mathrm{d}p(y)=0$ implies that $ \lVert x^*-y \rVert=r^*(X)$ $p$-almost surely. Moreover, $x^*$ minimizes $F(\cdot,p):x\mapsto \int \lVert x-y \rVert \mathrm{d}p(y)$. By Lemma \ref{lem:differentiability-contact-median}, we have that
\[
0=\nabla F(x^*,p)=\int \frac{x^*-y}{\lVert x^*-y\rVert}\mathrm{d}
p(y)=\frac{x^*-\int y \mathrm{d}p(y)}{r^*(X)},
\]
then $\int y\mathrm{d}p(y)=x^*$.

For the converse $p(C(x^*))=1$ and $\int y \mathrm{d}p(y)=x^*$. Then,
\[
\nabla F(x^*,p)=\frac{x^*-\int y \mathrm{d}p(y)}{r^*(X)}=0.
\]
Since $F(\cdot,p)$ is convex, we have that $x^*$ is a (global) minimizer of $F(\cdot,p)$. Then, $C_\dd(p)=F(x^*,p)=r^*(X)$. To conclude we show that
\begin{equation}\label{eq:extremepointsRn}
C(x^*)\subseteq \textnormal{ext}(X).
\end{equation}
To prove this, suppose that $z\in C(x^*)$ and that there exist distinct $x,y\in X$ and $\alpha\in (0,1)$ such that $z=\alpha x+(1-\alpha)y$. By the convexity of the norm we have
\begin{align*}
r^*(X)&=\lVert z-x^*\rVert\\
&=\lVert \alpha (x-x^*)+(1-\alpha)(y-x^*)\rVert\\
&\leq \alpha \lVert x-x^*\rVert+(1-\alpha)\lVert y-x^*\rVert\leq r^*(X)
\end{align*}
and, since $\alpha\in (0,1)$, we have $\lVert x-x^*\rVert=\lVert y-x^*\rVert=r^*(X)$. Define
\[
u=\frac{x-x^*}{r^*(X)}\ \textnormal{and}\ v=\frac{y-x^*}{r^*(X)}.
\]
Then, $\lVert u\rVert=\lVert v\rVert=1$ and $u\neq v$. Thus, strict convexity of the Euclidean norm yields
\[
\lVert \alpha u+(1-\alpha)v \rVert<1
\]
and hence
\[
r^*(X)=\lVert z-x^*\rVert=r^*(X)\lVert \alpha u+(1-\alpha)v \rVert<r^*(X)
\]
which is impossible. Therefore, $z\in \textnormal{ext}(X)$, and hence \eqref{eq:extremepointsRn} holds.
\end{proof}

\begin{proof}[Proof of Corollary \ref{coro:maxsimplices}]
Denote by $x^*$ the Chebyshev center of $X$. Since the barycentric coordinates in a simplex are unique, by Corollary \ref{coro:maxRnmeasure}. it follows that 
\[
\arg\max_{p\in \DX}C_\dd(p)=\left\lbrace \sum_{v\in \textnormal{ext}(X)}\alpha_v^*\delta_{v}\right\rbrace
\]
where $(\alpha^*_v)_{v\in \textnormal{ext}(X)}$ are the barycentric coordinates of $x^*$
\end{proof}

\begin{lemma}\label{lem:stupidregularsimplices}
Suppose $X$ is a regular simplex in $\R^n$ endowed with the Euclidean norm. Then,
\[
\forall v\in \textnormal{ext}(X),\ \left\lVert v-\frac{\sum_{v\in \textnormal{ext}(X)}v}{\lvert \textnormal{ext}(X)\rvert} \right\rVert=m
\]
for some $m\geq 0$.
\end{lemma}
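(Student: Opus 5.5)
The lemma asserts that in a regular simplex the vertices are equidistant from the centroid $c=\frac{1}{k}\sum_{v\in \textnormal{ext}(X)}v$, where $k=\lvert \textnormal{ext}(X)\rvert$. The plan is to use the defining property of a regular simplex, namely that all pairwise distances between distinct vertices equal a common value $s\geq 0$, and to compute $\lVert v-c\rVert^2$ directly with the Euclidean inner product. The resulting expression will depend only on $k$ and $s$, not on the vertex $v$.

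Fix a vertex $v$ and write
\[
v-c=\frac{1}{k}\sum_{w\in \textnormal{ext}(X)}(v-w).
\]
Expanding the square gives
\[
\lVert v-c\rVert^2=\frac{1}{k^2}\sum_{w,w'}\langle v-w,v-w'\rangle.
\]
Polarization then lets us rewrite each term as
\[
\langle v-w,v-w'\rangle=\tfrac12\left(\lVert v-w\rVert^2+\lVert v-w'\rVert^2-\lVert w-w'\rVert^2\right).
\]

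Summing over all pairs $(w,w')$:
\begin{itemize}
\item the first two pieces each contribute $k\sum_{w}\lVert v-w\rVert^2=k(k-1)s^2$;
\item the last piece contributes $\sum_{w,w'}\lVert w-w'\rVert^2=k(k-1)s^2$.
\end{itemize}
Therefore
\[
\lVert v-c\rVert^2=\frac{1}{k^2}\cdot\frac12\left(2k(k-1)s^2-k(k-1)s^2\right)=\frac{(k-1)s^2}{2k}.
\]
This is independent of $v$, so we may take $m=s\sqrt{(k-1)/(2k)}$. The degenerate case of a single vertex ($k=1$) gives $m=0$ and is covered by the same formula.

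An alternative argument uses symmetry. Any permutation of the vertices extends to an isometry of the affine hull, because equal pairwise distances determine all inner products $\langle w-v_0,w'-v_0\rangle$. Such an isometry fixes $c$ and maps $v$ to any other vertex, so all distances $\lVert v-c\rVert$ coincide. The direct computation avoids having to construct the isometry, so I would use it.

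The only step that needs care is the double-sum bookkeeping: the diagonal terms $w=w'$ and the terms with $w=v$ contribute zero, and one must confirm that these do not disturb the counts $k(k-1)$. This is routine. The real substance is that the definition of a regular simplex means equal edge lengths, and that assumption is being taken as given.
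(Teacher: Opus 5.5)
Your proof is correct and follows essentially the same route as the paper's: both expand squared Euclidean norms through inner products and use the equal edge lengths together with the definition of the centroid. The paper expands $\sum_{j\neq i}\lVert h_i-h_j\rVert^2$ for the centered vectors $h_i=v_i-c$, uses $\sum_j h_j=\mathbf{0}$ to kill the cross term, and concludes only that $\lVert h_i\rVert^2$ does not depend on $i$; your polarization computation also gives the explicit value $m^2=(k-1)s^2/(2k)$.
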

\begin{proof}
Suppose $\textnormal{ext}(X)=\left\lbrace v_i:i\in [k]\right\rbrace$ and let $c=\sum_{v\in \textnormal{ext}(X)}v/\lvert \textnormal{ext}(X)\rvert$. Then, define $h_i=v_i-c$ for all $i\in [k]$. Then, $\sum_{i\in [k]}h_i=\mathbf{0}$ and, for all $i\in [k]$,
\begin{align*}
\sum_{j\neq i}\lVert h_i-h_j\rVert^2=\sum_{j\neq i}\lVert v_i-v_j\rVert^2=(k-1)\ell^2
\end{align*}
where $\ell=\lVert v_i-v_j\rVert$ for all $i\neq j$. Then, we have that
\begin{align*}
(k-1)\ell^2&=\sum_{j\neq i}\lVert h_i-h_j\rVert^2\\
&=\sum_{j\neq i}\left[\lVert h_i\rVert^2+\lVert h_j\rVert^2-2\langle h_i,h_j\rangle \right]\\
&=(k-1)\lVert h_i\rVert^2+\sum_{j\neq i}\lVert h_j\rVert^2-2\left\langle h_i,\sum_{j\neq i}h_j \right\rangle\\
&=(k-1)\lVert h_i\rVert^2+\sum_{j\neq i}\lVert h_j\rVert^2+2\left\langle h_i,h_i \right\rangle\\
&=k\lVert h_i\rVert^2+\sum_{j\in [k]}\lVert h_j\rVert^2.
\end{align*}
Then, 
\[
\lVert h_i\rVert^2=\frac{(k-1)\ell^2-\sum_{j\in [k]}\lVert h_j\rVert^2}{k}
\]
and hence, each $\lVert h_i\rVert$ is independent of $i\in [k]$, proving the claim.
\end{proof}

\begin{proof}[Proof of Corollary \ref{coro:maxregularsimplices}]
By Corollary \ref{coro:maxsimplices}, it is sufficient to show that the Chebyshev center $x^*$ of $X$ is the vector with barycentric coordinates $\left(1/\lvert\textnormal{ext}(X)\rvert\right)_{v\in \textnormal{ext}(X)}$. Let $\textnormal{ext}(X)=\left\lbrace v_i:i\in [k]\right\rbrace$ and denote by $c=1/k\sum_{i\in [k]}v_i$. First, notice that
\[
r^*(X)=\inf\limits_{y\in X}\sup\limits_{x\in X}\lVert x-y \rVert=\inf\limits_{y\in X}\max\limits_{i\in [k]}\lVert v_i-y \rVert\leq \max\limits_{i\in [k]}\lVert v_i-c\rVert=r
\]
where $r$ is independent of $i$ because $X$ is a regular simplex (Lemma \ref{lem:stupidregularsimplices}).
Now let $(\alpha_i)\in \bigtriangleup^{k-1}$ and $y\in X$, then
\begin{align*}
\max\limits_{i\in [k]}\lVert v_i-y \rVert^2&\geq \sum_{i\in [k]}\alpha_i\lVert v_i-y \rVert^2\\
&=\sum_{i\in [k]}\alpha_i\left\lVert \left(v_i-\sum_{j\in [k]}\alpha_jv_j\right)-\left(y-\sum_{j\in [k]}\alpha_jv_j\right) \right\rVert^2\\
&=\sum_{i\in [k]}\alpha_i\left[\left\lVert v_i-\sum_{j\in [k]}\alpha_jv_j\right\rVert^2+\left\lVert y-\sum_{j\in [k]}\alpha_jv_j \right\rVert^2\right]\\
&-2\sum_{i\in [k]}\alpha_i\left\langle v_i-\sum_{j\in [k]}\alpha_jv_j,y-\sum_{j\in [k]}\alpha_jv_j\right\rangle\\
&=\sum_{i\in [k]}\alpha_i\left[\left\lVert v_i-\sum_{j\in [k]}\alpha_jv_j\right\rVert^2+\left\lVert y-\sum_{j\in [k]}\alpha_jv_j \right\rVert^2\right].
\end{align*}
Taking the infimum on both sides with respect to $y$, it follows that
\[
(r^*(X))^2=\inf\limits_{y\in X}\max\limits_{i\in [k]}\lVert v_i-y \rVert^2\geq \sum_{i\in [k]}\alpha_i\left\lVert v_i-\sum_{j\in [k]}\alpha_jv_j\right\rVert^2.
\]
Choosing $\alpha_i=1/k$ for all $i\in [k]$, it follows that $r^*(X)\geq r$. Therefore, $r^*(X)=r$, and hence, $c$ is the Chebyshev center, concluding the proof.
\end{proof}

\subsubsection*{Proofs of Section \ref{sect:affinecore}}

\begin{proof}[Proof of Proposition \ref{prop:affinecoremeasure}]
For notational simplicity we define $\succcurlyeq_\dd$ as:
\[
p\succcurlyeq_\dd q\Longleftrightarrow \forall x\in X, \int \dd(x,y)\mathrm{d}p(y)\geq \int \dd(x,y)\mathrm{d}q(y).
\]
Clearly, $C_\dd$ is monotone with respect to $\succcurlyeq_\dd$. Suppose $p\succcurlyeq_\dd q$. Then, by affinity of $\succcurlyeq_\dd$ and monotonicity of $C_\dd$ in $\succcurlyeq_\dd$, for all $\alpha\in (0,1)$ and $\ell\in \DX$,
\[
C_\dd(\alpha p+(1-\alpha)\ell)\geq C_\dd(\alpha q+(1-\alpha)\ell).
\]
By the arbitrariness of $\alpha$ and $\ell$, it follows that $p \geq_\dd q$.
\par\medskip
Passing to the converse, suppose that $p\geq_\dd q$ and $x\in X$. Then, by Lemma \ref{basic properties}, we have that,
\[
\frac{1}{2}\int \dd(x,y)\mathrm{d}p(y)=C_\dd\left(\frac{1}{2}p+\frac{1}{2}\delta_x\right)\geq C_\dd\left(\frac{1}{2}q+\frac{1}{2}\delta_x\right)=\frac{1}{2}\int \dd(x,y)\mathrm{d}q(y)
\]
and hence, by the arbitrariness of $x\in X$,
\[
\forall x\in X,\ \int \dd(x,y)\mathrm{d}p(y)\geq \int \dd(x,y)\mathrm{d}q(y)
\]
that is $p\succcurlyeq_\dd q$. Therefore, $\geq_\dd=\succcurlyeq_\dd$.
\end{proof}

\begin{lemma}\label{lem:convolutionization}
Suppose $X=\prod_{i\in [k]}Z_i$ for some compact metric spaces $(Z_i,\dd_i)$ and let $\dd=\sum_{i\in [k]}\omega_i\dd_i$ for some $\omega_i>0$ and $k\geq 1$. Then, for all $p\in \DX$ with marginals on each $Z_i$ denote by $p^i$, we have
\[
\forall x\in X, \int_{X}\dd(x,y)\mathrm{d}p(y)=\sum_{i\in [k]}\omega_i\int_{Z_i} \dd_i(x_i,t)\mathrm{d}p^i(t)
\]
and also
\[
C_\dd(p)=\sum_{i\in [k]}\omega_i C_{\dd_i}(p^i).
\]
\end{lemma}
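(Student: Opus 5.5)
The first identity is a disintegration of the integral along the coordinates. For fixed $x\in X$ we have $\dd(x,y)=\sum_{i\in[k]}\omega_i\dd_i(x_i,y_i)$. Each map $y\mapsto \dd_i(x_i,y_i)$ is the composition of the continuous, bounded function $t\mapsto \dd_i(x_i,t)$ on $Z_i$ with the projection $\pi_i:X\to Z_i$. Linearity of the integral then gives
\[
\int_X \dd(x,y)\mathrm{d}p(y)=\sum_{i\in[k]}\omega_i\int_X \dd_i(x_i,\pi_i(y))\mathrm{d}p(y).
\]
Since $p^i=(\pi_i)_{\#}p$ by definition of the marginal, the change-of-variables formula for pushforwards turns each summand into $\omega_i\int_{Z_i}\dd_i(x_i,t)\mathrm{d}p^i(t)$. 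This proves the first claim. Boundedness, hence integrability, is automatic because each $\dd_i$ is continuous on the compact space $Z_i\times Z_i$.

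For the second identity, substitute the first one into the definition of $C_\dd$:
\[
C_\dd(p)=\min_{x\in X}\sum_{i\in[k]}\omega_i G_i(x_i),\qquad G_i(z)=\int_{Z_i}\dd_i(z,t)\mathrm{d}p^i(t).
\]
The key observation is that the objective is additively separable: the $i$-th summand depends only on $x_i$, and $X$ is the full product $\prod_i Z_i$ with no coupling constraint. Hence the minimum over the product equals the sum of the coordinatewise minima, because $\omega_i>0$. The inequality $\geq$ follows by bounding each term below by $\omega_i\min_{z\in Z_i}G_i(z)=\omega_iC_{\dd_i}(p^i)$. For $\leq$, choose for each $i$ a minimizer $z_i^*\in Z_i$ of $G_i$, which exists because $G_i$ is continuous on the compact $Z_i$, as in the existence argument following Definition \ref{def:best_approximation_complexity}. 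Evaluating at $x^*=(z_1^*,\dots,z_k^*)\in X$ then attains $\sum_i\omega_iC_{\dd_i}(p^i)$.

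There is essentially no serious obstacle. The only points to state carefully are that the product structure of $X$ allows the minimizers to be chosen independently, and that $\dd$ (a positive weighted sum of continuous metrics) is a continuous metric on $X$, so that $C_\dd$ is well defined in the sense of Definition \ref{def:best_approximation_complexity}. As a by-product, the argument shows that the $\dd$-medians of $p$ are exactly the tuples of $\dd_i$-medians of its marginals.
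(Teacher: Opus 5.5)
Your proof is correct and follows the same route as the paper: linearity of the integral together with the marginal (pushforward) identity gives the first formula, and the additive separability of the objective over the full product $X=\prod_{i\in[k]}Z_i$ turns the minimum of the weighted sum into the weighted sum of the coordinatewise minima. The paper states these two steps directly. You add detail it leaves implicit: the explicit $\geq$/$\leq$ argument, existence of the coordinatewise minimizers, and continuity of $\dd$.
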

\begin{proof}
The proof follows from the linearity of the integral and the form of the metric $\dd$, for all $x\in X$,
\[
\int_{X}\dd(x,y)\mathrm{d}p(y)=\int_X\sum_{i\in [k]}\omega_i\dd_i(x_i,y_i)\mathrm{d}p(y)=\sum_{i\in [k]}\omega_i\int_{Z_i} \dd_i(x_i,t)\mathrm{d}p^i(t)
\]
and
\begin{align*}
C_\dd(p)&=\min\limits_{(x_1,\ldots,x_k)\in X}\sum_{i\in [k]}\omega_i\int_{Z_i} \dd_i(x_i,t)\mathrm{d}p^i(t)=\min\limits_{x_1\in Z_1,\ldots,x_k\in Z_k}\sum_{i\in [k]}\omega_i\int_{Z_i} \dd_i(x_i,t)\mathrm{d}p^i(t)\\
&=\sum_{i\in [k]}\omega_i\min\limits_{x_i\in Z_i}\int \dd_i(x_i,t)\mathrm{d}p^i(t)=\sum_{i\in [k]}\omega_i C_{\dd_i}(p^i)
\end{align*}
for all $p\in \DX$.
\end{proof}

\begin{proposition}\label{prop:convexorder}
If $X=[a,b]$ and $\dd$ is induced by the absolute value metric, then
\[
p\geq_\dd q \Longleftrightarrow p\geq_{\textnormal{cvx}}q
\]
for all $p,q\in \DX$.
\end{proposition}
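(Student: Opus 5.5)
By Proposition \ref{prop:affinecoremeasure}, $p\geq_\dd q$ holds if and only if $\int|x-y|\,\mathrm{d}p(y)\geq\int|x-y|\,\mathrm{d}q(y)$ for every $x\in[a,b]$. The statement therefore reduces to a classical fact: on a compact interval, the convex order is generated by the family $\{|x-\cdot|:x\in[a,b]\}$. I would prove the two implications separately.

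\emph{From $\geq_{\mathrm{cvx}}$ to $\geq_\dd$.} This is immediate from Lemma \ref{lem}, or directly from the fact that $y\mapsto|x-y|$ is continuous and convex on $[a,b]$.

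\emph{From $\geq_\dd$ to $\geq_{\mathrm{cvx}}$.} This is where the work lies.

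First, I would show that $p$ and $q$ have the same mean. Evaluate the defining inequality at $x=a$ and at $x=b$. At $x=a$ we have $|a-y|=y-a$, so $\int y\,\mathrm{d}p\geq\int y\,\mathrm{d}q$. At $x=b$ we have $|b-y|=b-y$, so $\int y\,\mathrm{d}p\leq\int y\,\mathrm{d}q$. Hence the means coincide, and consequently $\int\ell\,\mathrm{d}p=\int\ell\,\mathrm{d}q$ for every affine $\ell$.

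Second, I would pass from the absolute values to the call functions $(y-x)^+$. Since $(y-x)^+=\tfrac12\big(|y-x|+(y-x)\big)$ and the affine parts integrate equally, we get $\int(y-x)^+\,\mathrm{d}p\geq\int(y-x)^+\,\mathrm{d}q$ for all $x\in[a,b]$.

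Third, I would approximate an arbitrary continuous convex $\phi$ on $[a,b]$. Fix a partition $a=t_0<t_1<\dots<t_m=b$ and let $\phi_m$ be the piecewise linear interpolant of $\phi$ at the nodes. Then
\[
\phi_m(y)=\phi(a)+s_1(y-a)+\sum_{j=1}^{m-1}(s_{j+1}-s_j)(y-t_j)^+,
\]
where $s_j$ is the slope of $\phi_m$ on $[t_{j-1},t_j]$. Convexity of $\phi$ gives $s_{j+1}\geq s_j$, so every coefficient of a kink is nonnegative. It follows that $\int\phi_m\,\mathrm{d}p\geq\int\phi_m\,\mathrm{d}q$. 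Because $\phi$ is uniformly continuous on $[a,b]$, $\phi_m\to\phi$ uniformly as the mesh goes to zero, and we may pass to the limit under both integrals to obtain $\int\phi\,\mathrm{d}p\geq\int\phi\,\mathrm{d}q$.

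The main obstacle is this approximation step, and in particular checking that the decomposition into an affine part plus a nonnegative combination of hinges is correct. This needs to be done carefully but is standard. Endpoint behaviour is harmless here, because $\phi$ is continuous on the closed interval and therefore bounded.
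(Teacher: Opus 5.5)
Your proof is correct and follows the paper's route. You reduce to the family $\{|x-\cdot|:x\in[a,b]\}$ via Proposition~\ref{prop:affinecoremeasure}, get $\geq_{\mathrm{cvx}}\subseteq\geq_\dd$ from Lemma~\ref{lem}, and obtain equal means by testing below and above the supports; your choice $x=a$, $x=b$ is the same device the paper uses. The only difference is at the last step: the paper cites Theorem 3.A.2 in \cite{ShakedShantikumar}, whereas you prove that characterization directly through the hinge decomposition of piecewise-linear interpolants. Your version has a small advantage: it handles functions that are convex only on $[a,b]$, such as $y\mapsto-\sqrt{y-a}$, which has no convex extension to $\mathbb R$, and the paper's definition of $\geq_{\mathrm{cvx}}$ requires exactly this class.
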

\begin{proof}
Since each $\dd(x,\cdot)$ is convex, by Lemma \ref{lem}, we have that $\geq_{\textnormal{cvx}}\subseteq \geq_\dd$. Passing to the only if, suppose that $p\geq_\dd q$.
Then, Proposition \ref{prop:affinecoremeasure} yields that for all $\ell\in [a,b]$,
\begin{equation}\label{diostronzo}
\int_{\supp(p)} \lvert x-\ell\rvert \mathrm{d}p(x) \geq \int_{\supp(q)} \lvert x-\ell\rvert \mathrm{d}q(x).
\end{equation}
Let $\ell \leq \inf(\supp(p)\cup\supp(q))$ and $\ell'\geq \sup(\supp(p)\cup\supp(q))$. Then \eqref{diostronzo} yields,
\[
-\ell+\int x\mathrm{d}p(x)\geq -\ell+\int x\mathrm{d}q(x)\ \textnormal{and}\ \ell'-\int x\mathrm{d}p(x)\geq \ell'-\int x\mathrm{d}q(x)
\]
Therefore,
\[
\int x\mathrm{d}p=\int x\mathrm{d}q
\]
and hence, by $p\geq_\dd q$, Theorem 3.A.2 in \cite{ShakedShantikumar} yields that $p\geq_{\textnormal{cvx}}q$. 
\end{proof}

Denote by $\DXint$ the set of integrable probability measures. That is,
\[
\DXint=\left\lbrace p\in \DX:\int \dd(x,y)\mathrm{d}(p\otimes p)(x,y)<\infty\right\rbrace.
\]
\begin{lemma}\label{lem:samemeangeneralR}
Suppose $p,q\in \DXint$ with $X=\R$ and $\dd$ is induced by the absolute value metric. If $p\geq_\dd q$, then $\mu(p)=\mu(q)$.
\end{lemma}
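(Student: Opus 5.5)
Since $X=\mathbb R$ is no longer compact, the argument used in Proposition~\ref{prop:convexorder} (evaluating at points beyond the supports) is unavailable when supports are unbounded. Instead we take limits $x\to\pm\infty$. By Proposition~\ref{prop:affinecoremeasure} (whose proof only uses $C_\dd\big(\tfrac12 p+\tfrac12\delta_x\big)=\tfrac12\int\dd(x,y)\,\mathrm{d}p(y)$ from Lemma~\ref{basic properties}, which is valid for integrable measures), the hypothesis $p\geq_\dd q$ gives
\[
\int |x-y|\,\mathrm{d}p(y)\geq \int |x-y|\,\mathrm{d}q(y)\quad\text{for all }x\in\mathbb R.
\]
Integrability ($p,q\in\DXint$) ensures that both sides are finite and that $\mu(p),\mu(q)$ exist. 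Indeed, $\int\!\!\int|x-y|\,\mathrm{d}p\,\mathrm{d}p<\infty$ forces $\int|x-y|\,\mathrm{d}p(y)<\infty$ for some, hence every, $x$ by the triangle inequality, so $\int |y|\,\mathrm{d}p<\infty$.

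The key step is the decomposition $|x-y|=(y-x)+2(x-y)^+$. This gives
\[
\int|x-y|\,\mathrm{d}p(y)=\mu(p)-x+2\int (x-y)^+\,\mathrm{d}p(y).
\]
The inequality therefore becomes
\[
\mu(p)-\mu(q)\geq 2\Big[\int (x-y)^+\,\mathrm{d}q(y)-\int (x-y)^+\,\mathrm{d}p(y)\Big].
\]
Let $x\to-\infty$. We have $(x-y)^+\le |y|\mathbf 1_{\{y< x\}}$ pointwise, and this bound tends to $0$ as $x\to-\infty$; since $\int|y|\,\mathrm{d}p<\infty$, dominated convergence sends both integrals to $0$. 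Hence $\mu(p)\ge\mu(q)$. Symmetrically, writing $|x-y|=(x-y)+2(y-x)^+$ and letting $x\to+\infty$, the tail integrals $\int (y-x)^+\,\mathrm{d}p$ and $\int (y-x)^+\,\mathrm{d}q$ vanish, which yields $x-\mu(p)\ge x-\mu(q)+o(1)$, i.e.\ $\mu(q)\ge\mu(p)$. Together these give $\mu(p)=\mu(q)$.

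The main point requiring care is justifying the vanishing of the tail terms, which is precisely where $p,q\in\DXint$ enters through the dominated convergence argument. A secondary point is verifying that Proposition~\ref{prop:affinecoremeasure}, proved for compact $X$, still applies here. For that, I would rerun its one-line converse argument, noting that the identity from Lemma~\ref{basic properties} for $\alpha=1/2$ uses only the triangle inequality and finiteness of the integrals, so it holds on $\DXint$ over $\mathbb R$ (with $\min$ replaced by $\inf$ where needed).
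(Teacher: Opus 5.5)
Your proof is correct and follows essentially the same route as the paper: your decomposition $|x-y|=(y-x)+2(x-y)^+$ is exactly the paper's $2(x\vee y)-x-y$, and the paper likewise sends the benchmark point to $\pm\infty$ and uses dominated convergence to isolate $\mu(p)-\mu(q)$. You also make explicit two points the paper uses without comment: the finiteness of $\mu(p),\mu(q)$ for $p,q\in\DXint$, and the validity on $\R$ of the integral characterization of $\geq_\dd$. One small fix: the pointwise bound $(x-y)^+\le|y|\mathbf 1_{\{y<x\}}$ holds only for $x\le 0$, which is harmless since you let $x\to-\infty$.
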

\begin{proof}
For all $t\in \R$ and $\ell\in \DXint$, we have that
\begin{align*}
\int\lvert x-t\rvert \mathrm{d}\ell(x)&=\int 2(x\vee t)-x-t\mathrm{d}\ell(x)\\
\int\lvert x-t\rvert \mathrm{d}\ell(x)&=\int x+t-2(x\wedge t) \mathrm{d}\ell(x)
\end{align*}
and hence, for all $t\in \R$
\begin{align*}
\int\lvert x-t\rvert \mathrm{d}(p-q)(x)&=\int 2(x\vee t)\mathrm{d}(p-q)(x)-(\mu(p)-\mu(q))\geq 0\\
\int\lvert x-t\rvert \mathrm{d}(p-q)(x)&=\mu(p)-\mu(q)-\int 2(x\wedge t) \mathrm{d}(p-q)(x)\geq 0.
\end{align*}
By dominated convergence theorem (applied separately to both integrals and later subtracted), we have that
\begin{align*}
0\leq \lim_{t\to -\infty}\int\lvert x-t\rvert \mathrm{d}(p-q)(x)&=\int \lim_{t\to -\infty}2(x\vee t)\mathrm{d}(p-q)(x)-(\mu(p)-\mu(q))\\
&=2(\mu(p)-\mu(q))-(\mu(p)-\mu(q))=\mu(p)-\mu(q)\\
0\leq \lim\limits_{t\to\infty}\int\lvert x-t\rvert \mathrm{d}(p-q)(x)&=\mu(p)-\mu(q)-\int \lim\limits_{t\to\infty}2(x\wedge t) \mathrm{d}(p-q)(x)\\
&=\mu(p)-\mu(q)-2(\mu(p)-\mu(q))=-(\mu(p)-\mu(q)).
\end{align*}
Thus, $\mu(p)=\mu(q)$.
\end{proof}

\begin{lemma}\label{lem:samemeanL1}
Suppose $X=\prod_{i\in [k]} I_i$ for some compact intervals $(I_i)_{i\in [k]}$ and $\dd$ is induced by the $L^1$-norm. If $p,q\in \DX$ and $p\geq_\dd q$, then $\mu(p)=\mu(q)$.
\end{lemma}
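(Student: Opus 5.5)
The plan is to reduce the statement to the one-dimensional case, coordinate by coordinate, using the additive separability of the $L^1$-distance. By \Cref{prop:affinecoremeasure}, the hypothesis $p\geq_\dd q$ means that for every $x\in X$,
\[
\int \lVert x-y\rVert_1\mathrm{d}p(y)\geq \int \lVert x-y\rVert_1\mathrm{d}q(y).
\]
By \Cref{lem:convolutionization}, applied with $Z_i=I_i$, $\dd_i$ the absolute value metric and $\omega_i=1$, this becomes
\[
\sum_{i\in [k]}\int_{I_i}\lvert x_i-t\rvert\mathrm{d}p^i(t)\geq \sum_{i\in [k]}\int_{I_i}\lvert x_i-t\rvert\mathrm{d}q^i(t)\quad \text{for all } x\in X.
\]

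Next I isolate one coordinate. Fix $j\in[k]$ and write $I_j=[a_j,b_j]$. For every $i\neq j$, fix $x_i=a_i$, so that $\lvert a_i-t\rvert=t-a_i$ on $I_i$. Then the $i$-th summand equals $\mu(p^i)-a_i$ on the left and $\mu(q^i)-a_i$ on the right. Now let $x_j$ take the values $a_j$ and $b_j$ in turn.

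With $x_j=a_j$, every coordinate sits at its lower endpoint and the inequality reads
\[
\sum_i(\mu(p^i)-a_i)\geq \sum_i(\mu(q^i)-a_i),\quad\text{that is,}\quad \sum_i\mu(p^i)\geq\sum_i\mu(q^i).
\]
Taking instead all $x_i=b_i$ gives $\sum_i\mu(p^i)\leq\sum_i\mu(q^i)$. Together these only force equality of the sums, which is not yet coordinatewise.

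To separate coordinates, I use mixed corners. Take $x_j=b_j$ and $x_i=a_i$ for $i\neq j$. This yields
\[
(b_j-\mu(p^j))+\sum_{i\neq j}(\mu(p^i)-a_i)\geq (b_j-\mu(q^j))+\sum_{i\neq j}(\mu(q^i)-a_i),
\]
that is, $\sum_{i\neq j}(\mu(p^i)-\mu(q^i))\geq \mu(p^j)-\mu(q^j)$. Write $\Delta_i=\mu(p^i)-\mu(q^i)$ and $S=\sum_i\Delta_i$. The all-lower corner gives $S\geq 0$, the all-upper corner gives $S\leq 0$, so $S=0$. The mixed corner then gives $S-\Delta_j\geq\Delta_j$, hence $\Delta_j\leq 0$ for every $j$. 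The symmetric corner, with $x_j=a_j$ and $x_i=b_i$ for $i\neq j$, gives $\Delta_j\geq 0$ in the same way. Therefore $\Delta_j=0$ for all $j$, so $\mu(p)=\mu(q)$ coordinatewise.

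The only delicate step is choosing these corner test points so that each coordinate is isolated. The rest is bookkeeping with the identity $\lvert x-t\rvert=\pm(t-x)$ at the interval endpoints.
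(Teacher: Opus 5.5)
Your proof is correct and is essentially the paper's argument. The paper also evaluates the expected-distance inequality, split across coordinates via Lemma~\ref{lem:convolutionization}, at the corner points $x(G)$ for $G=[k]$, $G=\emptyset$, $G=\{j\}$ and $G=[k]\setminus\{j\}$, which are exactly your four corners; it obtains $\sum_i\Delta_i=0$ and $\Delta_j=\sum_{i\neq j}\Delta_i$, and hence $\Delta_j=0$.
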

\begin{proof}
Suppose that $X=\prod_{i\in [k]}[a_i,b_i]$ for some $k\geq 1$ and $a_i<b_i$. Suppose that $p\geq_\dd q$ and let $G\subseteq [k]$. Then, define $x(G)\in X$ as
\[
x_i(G)=a_i \mathbf{1}_G(i)+b_i\mathbf{1}_{G^c}(i).
\]
Then, for all $\ell\in \DX$, by Lemma \ref{lem:convolutionization},
\[
\int \lVert x(G)-y \rVert \mathrm{d}\ell(y)=\sum_{i\in [k]}\int \lvert x_i(G)-t\rvert \mathrm{d}\ell_i(t)=\sum_{i\in G}[\mu_i(\ell)-a_i]+\sum_{i\notin G}[b_i-\mu_i(\ell)].
\]
Then, it follows that for all $G\subseteq [k]$,
\[
\int \lVert x(G)-y \rVert \mathrm{d}(p-q)(y)=\sum_{i\in G}[\mu_i(p)-\mu_i(q)]-\sum_{i\notin G}[\mu_i(p)-\mu_i(q)]\geq 0.
\]
If $G=\{i\}$, then
\[
\mu_i(p)-\mu_i(q)\geq \sum_{j\neq i}[\mu_j(p)-\mu_j(q)]
\]
whereas if $G=[k]\setminus \{i\}$, then 
\[
\mu_i(p)-\mu_i(q)\leq \sum_{j\neq i}[\mu_j(p)-\mu_j(q)].
\]
Thus, for all $i\in [k]$,
\[
\mu_i(p)-\mu_i(q)=\sum_{j\neq i}[\mu_j(p)-\mu_j(q)].
\]
Moreover, letting $G=[k]$ and $G=\emptyset$, we get
\begin{equation}\label{eq;stupidinacarinina}
\sum_{i\in[k]}\mu_i(p)=\sum_{i\in [k]}\mu_i(q).
\end{equation}
Then, it follows that
\[
\mu_i(p)-\mu_i(q)=\sum_{j\neq i}[\mu_j(p)-\mu_j(q)]\Longrightarrow 2\mu_i(p)-\mu_i(q)=\sum_{j\in [k]}\mu_j(p)-\sum_{j\neq i}\mu_j(q)
\]
and hence, by \eqref{eq;stupidinacarinina},
\[
2\mu_i(p)-2\mu_i(q)=0.
\]
By the arbitrariness of $i\in [k]$, it follows that $\mu(p)=\mu(q)$.
\end{proof}

\begin{proof}[Proof of Proposition \ref{prop:marginals_convexorder}] Suppose that $I_i=[a_i,b_1]$ for all $\in [k]$ and $a_i<b_i$. By Lemma \ref{lem:convolutionization} and Lemma \ref{prop:convexorder} it is clear that if, for all $i\in [k]$, $p^i\geq_{\textnormal{cvx}}q^i$, then $p\geq_\dd q$. For the converse, suppose that $p\geq_\dd q$g. Thanks to Lemma \ref{lem:convolutionization}, we have that for all $x=(x_1,\ldots,x_k)\in X$,
\[
\int_{X}\lVert x-y\rVert_1\mathrm{d}p(y)=\sum_{i\in [k]}\int_{[a_i,b_i]} \lvert x_i-t\rvert\mathrm{d}p^i(t)
\]
and the same formulation holds for $q$. By hypothesis, we have that
\[
\sum_{i\in [k]}\int_{[a_i,b_i]} \lvert x_i-t\rvert\mathrm{d}(p^i-q^i)(t)\geq 0
\]
for all $(x_1,\ldots,x_k)\in X$. Now by assumption and the fact that $\mu_j(p)=\mu_j(q)$ for all $j\in [k]$, taking as vector $x=(a_1,\ldots,a_{i-1},t,a_{i+1},\ldots,a_k)$ for some $t\in [a_i,b_i]$, we have that 
\[
\sum_{j\in [k]}\int_{[a_j,b_j]} \lvert x_j-s\rvert\mathrm{d}(p^j-q^j)(s)=\int_{[a_i,b_i]} \lvert t-s\rvert\mathrm{d}(p^i-q^i)(s)\geq 0
\]
and hence, by the arbitrariness of $t$, it follows that $p^i\geq_{|\cdot|}q^i$ and hence, by Proposition \ref{prop:convexorder}, it follows that $p^i\geq_{\textnormal{cvx}}q^i$. By the arbitrariness of $i\in [k]$, the claim follows.
\end{proof}

We now provide the proof of Proposition \ref{prop:extremepoints}. To this end we first recall the classic Choquet-Cartier dilation result.\footnote{See \cite{DavidsonKennedy2021}, \cite{CartierFellMeyer1964}, and \cite{PhelpsChoquet}.}

\begin{theorem}[Choquet-Cartier dilation theorem]\label{thm:Cartier}
Let $X$ be a nonempty, compact, and convex subset of a normed space. For all $p\in \DX$ there exists a Borel probability kernel $K:X\times \mathcal{B}(X)\to [0,1]$ such that:\footnote{We recall that $\mathcal{B}(X)$ denotes the Borel sigma algebra and $K:X\times \mathcal{B}(X)\to [0,1]$ is said to be a \textit{Borel probability kernel} if:
\begin{enumerate}
\item For all $x\in X$. $K(x)(\cdot)\in \DX$. 
\item For all $B\in \mathcal{B}(X)$, $K(\cdot)(B)$ is Borel measurable.
\end{enumerate}
We use the notation $K()()$ to strengthen the importance of the first and second entry of $K$.}
\begin{enumerate}
\item $\int_Xy \mathrm{d}K(x)(y)=x$ for $p$-almost every $x\in X$.
\item $K(x)(\textnormal{ext}(X))=1$ for $p$-almost every $x\in X$.
\item The probability measure $q:B\mapsto \int K(x)(B)\mathrm{d}p(x)$ is such that $q(\textnormal{ext}(X))=1$ and $q\geq_{\textnormal{cvx}}p$.
\end{enumerate}
\end{theorem}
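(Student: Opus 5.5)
The plan is to obtain the kernel from a \emph{maximal} measure in the convex (Choquet) order dominating $p$. Strassen--Cartier then dilates $p$ into that measure, and the properties of $K$ are read off by disintegration. Throughout, $X$ is a compact subset of a normed space and is therefore metrizable. This is what makes $\textnormal{ext}(X)$ Borel (indeed a $G_\delta$) and lets us use a strictly convex continuous function on $X$.

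The first step is to build a strictly convex continuous function on $X$. Since $X$ is compact and metrizable, the restrictions to $X$ of continuous linear functionals on the normed space separate points of $X$. By compactness we can pick a countable family $(g_n)$ of such functionals, with $\lVert g_n|_X\rVert_\infty\le 1$, that still separates points. Set
\[
f=\sum_n 2^{-n}g_n^2 .
\]
This $f$ is continuous and convex, and it is strictly convex along every nondegenerate segment, because some $g_n$ is non-constant there. We then recall the standard facts. First, $\textnormal{ext}(X)$ is a $G_\delta$, hence Borel. Second, if $\bar f$ denotes the upper concave envelope of $f$, then $\{x: f(x)=\bar f(x)\}\subseteq \textnormal{ext}(X)$. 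The inclusion holds because at a nonextreme point $x=\tfrac12(y+z)$ with $y\neq z$ strict convexity gives $f(x)<\tfrac12 f(y)+\tfrac12 f(z)\le \bar f(x)$.

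The second step produces the target measure. Order $\DX$ by $p\preceq \nu$ iff $\int\phi\,\mathrm{d}p\le\int\phi\,\mathrm{d}\nu$ for all $\phi\in\mathrm{cvx}(X)$. This order is closed in the weak topology and $\DX$ is compact. Hence every chain has an upper bound, and Zorn's lemma gives a $\preceq$-maximal $q$ with $q\succeq p$, that is, $q\geq_{\mathrm{cvx}}p$. We then show $q(\textnormal{ext}(X))=1$ by the Mokobodzki argument, in two parts. Maximality forces $\int f\,\mathrm{d}q=\int \bar f\,\mathrm{d}q$: otherwise a Hahn--Banach/sublinear-functional argument on $C(X)$ using $\phi\mapsto \int\bar\phi\,\mathrm{d}q$ yields a measure strictly above $q$. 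Since $\bar f\ge f$, we then get $q(\{f=\bar f\})=1$, and the inclusion from the first step gives $q(\textnormal{ext}(X))=1$. This proves the two claims $q(\textnormal{ext}(X))=1$ and $q\geq_{\mathrm{cvx}}p$ of item 3.

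The third step, which I expect to be the main obstacle, is Cartier's dilation theorem. It says that $p\preceq q$ on a metrizable compact convex set implies the existence of a Borel kernel $K$ with barycenter $\int y\,\mathrm{d}K(x)(y)=x$ for every $x$ and $q=\int K(x)\,\mathrm{d}p(x)$. The route I would take is Strassen's. Consider the sublinear functional on $C(X\times X)$ given by $\Phi(h)=\int \overline{h(x,\cdot)}(x)\,\mathrm{d}p(x)$, where the bar is the concave envelope in the second variable. Extend the linear functional $h\mapsto\int h\,\mathrm{d}q$, defined on functions depending only on the second coordinate, dominated by $\Phi$ via Hahn--Banach. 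The domination on that subspace is exactly $p\preceq q$ combined with the fact that $\int\bar\phi\,\mathrm{d}p$ is the supremum of $\int\phi\,\mathrm{d}\nu$ over $\nu\succeq p$, which is the delicate point. Represent the extension by a probability $\gamma$ on $X\times X$. Its marginals are $p$ and $q$, and testing against $h(x,y)=a(y)-a(x)$ for continuous affine $a$ shows that the conditional barycenters equal $x$. Disintegrating $\gamma$ with respect to its first marginal, which is possible because $X$ is Polish, gives the Borel kernel $K$ and hence item 1.

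Item 2 then follows from the second step. We have $1=q(\textnormal{ext}(X))=\int K(x)(\textnormal{ext}(X))\,\mathrm{d}p(x)$ with integrand at most $1$, so $K(x)(\textnormal{ext}(X))=1$ for $p$-almost every $x$.
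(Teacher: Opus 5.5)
The paper gives no proof of this statement; it quotes it from Cartier--Fell--Meyer, Phelps and Davidson--Kennedy, and your proposal is the standard argument behind those citations (a Zorn/Mokobodzki maximal measure $q\geq_{\mathrm{cvx}}p$ carried by $\textnormal{ext}(X)$, then a Strassen-type Hahn--Banach coupling on $C(X\times X)$ disintegrated over $p$), so it is correct; for the domination you only need the easy inequality $\int\phi\,\mathrm{d}q\le\int\bar\phi\,\mathrm{d}q\le\int\bar\phi\,\mathrm{d}p$, whose second step follows by writing $\bar\phi$ as a downward-directed infimum of continuous concave functions. The one step to repair is the barycenter test: $h(x,y)=a(y)-a(x)$ only yields $\int a\,\mathrm{d}q=\int a\,\mathrm{d}p$, so test instead against $h(x,y)=b(x)\bigl(a(y)-a(x)\bigr)$ with $b\in C(X)$ arbitrary (still $\Phi(\pm h)=0$, since $y\mapsto h(x,y)$ is affine); this gives $\int g_n\,\mathrm{d}K(x)=g_n(x)$ for every $n$ and $p$-a.e.\ $x$, and the separating family $(g_n)$ then forces the barycenter of $K(x)$ to equal $x$.
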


\begin{lemma}\label{lem:forextremepoints}
Let X be a compact convex subset of a normed space and suppose $\dd$ is induced by a norm. For all $p\in \DX$ there exists a kernel $K:X\to \DX$ such that $\int_Xy \mathrm{d}K(x)(y)=x$ and $K(x)(\textnormal{ext}(X))=1$ for $p$-almost every $x\in X$, $q:B\mapsto \int K(x)(B)\mathrm{d}p(x)$ is such that $q(\textnormal{ext}(X))=1$, and $q\geq_{\dd}p$. Moreover, if $p\left(\left\lbrace x\in X:K(x)\neq \delta_x \right\rbrace\right)>0$, then $q>_\dd p$.
\end{lemma}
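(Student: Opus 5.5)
The plan is to build $q$ from the Choquet--Cartier dilation theorem (Theorem \ref{thm:Cartier}), get the weak comparison $q\geq_\dd p$ from Lemma \ref{lem}, and then prove strictness by a pointwise Jensen-equality argument.

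\textbf{Construction and weak comparison.} Given $p\in\DX$, Theorem \ref{thm:Cartier} supplies a Borel probability kernel $K$ with three properties: $\int_X y\,\mathrm{d}K(x)(y)=x$ and $K(x)(\textnormal{ext}(X))=1$ for $p$-almost every $x$, and the mixture $q:B\mapsto\int K(x)(B)\,\mathrm{d}p(x)$ satisfies $q(\textnormal{ext}(X))=1$ and $q\geq_{\textnormal{cvx}}p$. Since $\dd$ is induced by a norm, Lemma \ref{lem} turns $q\geq_{\textnormal{cvx}}p$ into $q\geq_\dd p$. Equivalently, using Proposition \ref{prop:affinecoremeasure}, for every benchmark $z\in X$,
\[
\int \lVert z-y\rVert\,\mathrm{d}q(y)-\int\lVert z-x\rVert\,\mathrm{d}p(x)=\int \Big[\int\lVert z-y\rVert\,\mathrm{d}K(x)(y)-\lVert z-x\rVert\Big]\mathrm{d}p(x)\geq 0 .
\]
I would record this identity explicitly, because the strictness argument works on it directly. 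Its integrand $g_z(x)$ is nonnegative for $p$-a.e. $x$: this is Jensen's inequality for the convex map $y\mapsto\lVert z-y\rVert$, using that $K(x)$ has barycenter $x$.

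\textbf{Strictness (the main step).} Assume $p(\{x:K(x)\neq\delta_x\})>0$, and suppose for contradiction that $q>_\dd p$ fails. Then the displayed difference is $0$ for every $z\in X$. Since $g_z\geq0$ $p$-a.e., we get $g_z=0$ $p$-a.e. for each fixed $z$. The obstacle is that the exceptional null set depends on $z$.

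To handle this, I will use that $X$ is a compact metric space and hence separable, and fix a countable dense set $D\subseteq X$. Intersecting countably many full-measure sets gives a single set $N$ with $p(N)=1$ such that, for all $x\in N$ and all $z\in D$,
\[
\int\lVert z-y\rVert\,\mathrm{d}K(x)(y)=\lVert z-x\rVert .
\]
Both sides are $1$-Lipschitz in $z$; continuity of the norm makes this hold in the background topology as well. So the equality extends from $D$ to every $z\in X$.

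Now take $z=x$ for $x\in N$. This gives $\int\lVert x-y\rVert\,\mathrm{d}K(x)(y)=0$, so $K(x)=\delta_x$ for all $x\in N$. Hence $p(\{x:K(x)\neq\delta_x\})=0$, a contradiction. Therefore some benchmark $z$ makes the difference strictly positive. Together with $q\geq_\dd p$ and Proposition \ref{prop:affinecoremeasure}, this gives $q>_\dd p$.

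The only delicate points are the measurability of $x\mapsto\int\lVert z-y\rVert\,\mathrm{d}K(x)(y)$, which follows from $K$ being a Borel kernel and the integrand being bounded and continuous, and the countable-dense-set step above.
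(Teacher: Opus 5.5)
Your proof is correct and takes essentially the paper's route: Choquet--Cartier together with Lemma \ref{lem} gives $q\geq_\dd p$, and strictness comes from the nonnegativity (by Jensen) of $g_z(x)=\int\lVert z-y\rVert\,\mathrm{d}K(x)(y)-\lVert z-x\rVert$, its Lipschitz continuity in $z$, a countable dense set of benchmarks, and evaluation at $z=x$. The only difference is cosmetic: the paper argues directly, picking for each $x$ with $K(x)\neq\delta_x$ a dense-set point $z_m$ with $g_{z_m}(x)>0$ and concluding that some $\int g_{z_m}\,\mathrm{d}p>0$, whereas you run the same argument by contradiction.
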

\begin{proof}
By Theorem \ref{thm:Cartier}, we have that there exists a kernel $K:X\to \DX$ such that  $q:B\mapsto \int K(x)(B)\mathrm{d}p(x)$ satisfies $q(\textnormal{ext}(X))=1$ and $q\geq_{\textnormal{cvx}}p$. Thus, since $\dd$ is induced by a norm, we have $q\geq_\dd p$. For all $x,z\in X$ define the function:
\[
g_z:x\mapsto \int_X\lVert z-y\rVert \mathrm{d}K(x)(y)-\lVert z-x \rVert.
\]
By Jensen's inequality, for $p$-almost all $x\in X$,
\[
\int_X\lVert z-y\rVert \mathrm{d}K(x)(y)\geq \left\lVert \int_X z-y \mathrm{d}K(x)(y) \right\rVert=\lVert z-x \rVert
\]
and hence, $g_z\geq 0$ $p$-almost everywhere. Moreover,
\begin{align*}
\lvert g_z(x)-g_w(x)\rvert &=\left\lvert \int_X\lVert z-y\rVert \mathrm{d}K(x)(y)-\lVert z-x \rVert - \int_X\lVert w-y\rVert \mathrm{d}K(x)(y)+\lVert w-x \rVert  \right\rvert\\
&=\left\lvert \int_X\lVert z-y\rVert - \lVert w-y\rVert \mathrm{d}K(x)(y)+\lVert w-x \rVert-\lVert z-x \rVert \right\rvert\\
&\leq 2 \lVert z-w \rVert.
\end{align*}
Thus, for all $x\in X$, $z\mapsto g_z(x)$ is 2-Lipschitz, and hence continuous. Now suppose that $M=\left\lbrace x\in X:K(x)\neq \delta_x \right\rbrace$ is not $p$-null, i.e., $p(M)>0$. Let $E\subseteq X$ be a Borel set with $p(E)=1$ and such that for all $x\in E$, $\int_Xy \mathrm{d}K(x)(y)=x$. Let $N=E\cap M$, clearly $p(N)>0$. Notice that for all $x\in N$,
\[
g_x(x)=\int_X \lVert x-y\rVert \mathrm{d}K(x)(y)>0
\]
Let $Z=\left\lbrace z_m:m\geq 1\right\rbrace$ be a dense subset of $X$. Since $z\mapsto g_z(x)$ is continuous, there must exist $z_m\in Z$ sufficiently close to $x$ such that $g_{z_m}(x)>0$. This implies that
\[
N\subseteq \bigcup_{m\geq 1}\left\lbrace x\in X:g_{z_m}(x)>0 \right\rbrace.
\]
Since $p(N)>0$ and $p$ is a probability measure, it follows that there must exist $m\geq 1$ such that
\[
p\left(\left\lbrace x\in X:g_{z_m}(x)>0 \right\rbrace\right)>0.
\]
Then, since $g_{z_m}$ is nonnegative, it follows that
\[
\int_X g_{z_m}(x)\mathrm{d}p(x)>0
\]
and hence,
\begin{align*}
\int_X &\lVert z_m-x \rVert\mathrm{d}q(x)-\int_X \lVert z_m-x \rVert\mathrm{d}p(x)\\
&=\int_X \int_X \lVert z_m-y \rVert\mathrm{d}K(x)(y)\mathrm{d}p(x)-\int_X \lVert z_m-x \rVert\mathrm{d}p(x)=\int_X g_{z_m}(x)\mathrm{d}p(x)>0.   
\end{align*}
This yields $q>_\dd p$.
\end{proof}

\begin{proof}[Proof of Proposition \ref{prop:extremepoints}]
It follows from Lemma \ref{lem:forextremepoints}.
\end{proof}

\begin{proof}[Proof of Proposition \ref{prop:extremepointsuniqueness}]
Suppose by contradiction that $p(\textnormal{ext}(X))<1$. Then, $p(X\setminus \textnormal{ext}(X))>0$. By Lemma \ref{lem:forextremepoints}, there exists a kernel $K$ such that $K(x)(\textnormal{ext}(X))=1$ for $p$-almost all $x\in X$. Thus, if $x\notin \textnormal{ext}(X)$, we have $\delta_x(\textnormal{ext}(X))\neq K(x)(\textnormal{ext}(X))$ and hence $K(x)\neq \delta_x$ for $p$-almost all $x\in X\setminus \textnormal{ext}(X)$. Thus, we have that 
\[
p\left(\left\lbrace x\in X:K(x)\neq \delta_x \right\rbrace\right)>0
\]
and hence, by Lemma \ref{lem:forextremepoints}, there exists $q\in \DX$ such that $q>_\dd p$, contradicting the maximality of $p$.
\end{proof}

\subsection{Proofs of Section \ref{Section:CAEU}}
\begin{proof}[Proof of Proposition \ref{prop:prefforconcent}]
Suppose that $\lambda\geq 0$. Let $x,y\in X$, $p,q\in \DX$, $\alpha\in \left[0,1\right]$, and assume that $\delta_x\sim p$ and $\delta_y\sim q$. Then,
\begin{align*}
V(\delta_{\alpha x+(1-\alpha)y})&=u(\alpha x+(1-\alpha)y)\\
&\geq \alpha u(x)+(1-\alpha)u(y)\\
&=\alpha V(\delta_x)+(1-\alpha)V(\delta_y)\\
&=\alpha \left[\EUp-\lambda C_\dd(p)\right]+(1-\alpha)\left[\EUq-\lambda C_\dd(q)\right]\\
&=\mathbb{E}_{\alpha p+(1-\alpha)q}[u]-\lambda\left[\alpha C_\dd(p)+(1-\alpha)C_\dd(q)\right]\\
&\geq \mathbb{E}_{\alpha p+(1-\alpha)q}[u]-\lambda C_\dd(\alpha p+(1-\alpha)q)=V(\alpha p+(1-\alpha)q).
\end{align*}
Thus, $V$ exhibits preference for concentration.
\end{proof}

\begin{proof}[Proof of Proposition \ref{prop:affinecoreCAEU}]
Suppose that $p\succsim^* q$ and let $x\in X$ and $\alpha\in (0,1/2]$. By Lemma \ref{basic properties} we have that
\begin{equation}\label{eq:alpha_12}
C_\dd\left(\alpha p+(1-\alpha)\delta_x\right)=\alpha\int \dd(x,y)\mathrm{d}p(y)
\end{equation}
and the same for $q$ in place of $p$. By $p\succsim^* q$,
\begin{align*}
\mathbb{E}_{\alpha p+(1-\alpha)\delta_x}\left[u\right]-\lambda C_\dd(\alpha p+(1-\alpha)\delta_x)\geq \mathbb{E}_{\alpha q+(1-\alpha)\delta_x}\left[u\right]-\lambda C_\dd(\alpha q+(1-\alpha)\delta_x)
\end{align*}
that, by \eqref{eq:alpha_12} and the fact that $\alpha>0$, holds if and only if
\[
\int u(y)-\lambda \dd(x,y)\mathrm{d}p(y)\geq  \int u(y)-\lambda \dd(x,y)\mathrm{d}q(y).
\]
By the arbitrariness of $x\in X$, we have that sufficiency holds.

Let us pass to the converse. Suppose that
\[
\forall x\in X, \int u(y)-\lambda \dd(x,y)\mathrm{d}p(y)\geq  \int u(y)-\lambda \dd(x,y)\mathrm{d}q(y)
\]
for some $p,q\in \DX$. Let $\alpha\in (0,1)$ and $\ell\in \DX$. Then, by affinity of the integral, for all $x\in X$,
\begin{equation}\label{eq:forallx}
\int u(y)-\lambda \dd(x,y)\mathrm{d}(\alpha p+(1-\alpha)\ell)(y)\geq  \int u(y)-\lambda \dd(x,y)\mathrm{d}(\alpha q+(1-\alpha)\ell)(y).
\end{equation}
 By \eqref{eq:forallx}, 
\begin{align*}
\sup\limits_{x\in X}\left\lbrace \int u(y)-\lambda \dd(x,y)\mathrm{d}(\alpha p+(1-\alpha)\ell)(y)\right\rbrace\geq \\
\quad \quad \quad \sup\limits_{x\in X}\left\lbrace \int u(y)-\lambda \dd(x,y)\mathrm{d}(\alpha q+(1-\alpha)\ell)(y)\right\rbrace.
\end{align*}
Suppose first that $\lambda \geq 0$. Then,
\[
\sup\limits_{x\in X}-\lambda \int \dd(x,y)\mathrm{d}(\alpha p+(1-\alpha)\ell(y)=-\lambda \min\limits_{x\in X}\int \dd(x,y)\mathrm{d}(\alpha p+(1-\alpha)\ell)(y)
\]
from which it follows that $\alpha p+(1-\alpha)\ell \succsim \alpha q+(1-\alpha)\ell$. If $\lambda < 0$, the same steps apply but taking the infimum instead of the supremum. Therefore, $p\succsim^* q$.
\end{proof}

\begin{proof}[Proof of Corollary \ref{prop:FSDCAEU}]
Suppose that $\geq_{\textnormal{FSD}}$ is a subrelation of $\succsim$. Since $\geq_{\textnormal{FSD}}$ is affine and $\succsim^*$ is the largest affine subrelation of $\succsim$, we have that $\geq_{\textnormal{FSD}}\subseteq \succsim^*$. If $x\succeq y$. Then, $\delta_{x}\geq_{\textnormal{FSD}}\delta_y$ and hence $\delta_x\succsim^* \delta_y$. Thus, by Proposition \ref{prop:affinecoreCAEU},
\[
\forall z\in X,\  u(x)-\lambda \dd(x,z)\geq u(y)-\lambda \dd(y,z).
\]
This implies that 
\[
u(x)\geq u(y)-\lambda \dd(x,y)\ \textnormal{and}\ u(x)-\lambda \dd(x,y)\geq u(y) 
\]
and hence, $u(x)-u(y)\geq \max \left\lbrace \lambda \dd(x,y),-\lambda \dd(x,y)\right\rbrace=|\lambda|\dd(x,y)\geq 0$.

For the converse, suppose that $x\succeq y$. Then, for all $z\in X$,
\[
u(x)-u(y)\geq |\lambda| \dd(x,y)\geq |\lambda| |\dd(x,z)-\dd(y,z)| \geq \lambda [\dd(x,z)-\dd(y,z)]
\]
and hence
\[
u(x)-\lambda \dd(x,z)\geq u(y)-\lambda \dd(y,z)
\]
thus, we have that each function $v_z:x\mapsto u(x)-\lambda \dd(x,z)$ is increasing, and hence $\succsim$ is consistent with first-order stochastic dominance. Indeed, if $p\geq_{\textnormal{FSD}}q$, then
\[
\forall z\in X,\ \int_X v_z(x)\mathrm{d}p(x)\geq \int_X v_z(x)\mathrm{d}q(x)
\]
and hence $p\succsim^* q$, that in turn yields $p\succsim q$.
\end{proof}
\begin{proof}[Proof of Corollary \ref{coro:CAEUstrongrisk}]
Suppose that $p\geq_{\textnormal{cve}}q$. Then, by affinity of the concave order,
\[
p\geq_{\textnormal{cve}}q \Longrightarrow \alpha p+(1-\alpha)\ell \geq_{\textnormal{cve}} \alpha q+(1-\alpha)\ell 
\]
for all $\alpha\in (0,1]$ and $\ell\in\DX$. Then, since $\succsim$ exhibits strong risk aversion, it follows that
\[
\alpha p+(1-\alpha)\ell \succsim \alpha q+(1-\alpha)\ell 
\]
for all $\alpha\in (0,1]$ and $\ell\in\DX$ and hence, $p\succsim^* q$. Since, $\delta_{\mu(p)}\geq_{\textnormal{cve}} p$, it follows that $\delta_{\mu(p)}\succsim^* p$ for all $p\in \DX$. By Proposition \ref{prop:affinecoreCAEU}, it follows that taking $p=\alpha \delta_x+(1-\alpha)\delta_y$ for $\alpha\in [0,1]$ and $x,y\in X$, we have that for all $z\in X$,
\[
u(\alpha x+(1-\alpha)y)-\lambda \dd(\alpha x+(1-\alpha)y,z)\geq \alpha [u(x)-\lambda \dd(x,z)]+(1-\alpha)[u(y)-\lambda \dd(y,z)].
\]
Thus, each $y\mapsto u(y)-\lambda \dd(y,z)$ is concave. 
\par\medskip
The converse follows from observing that if each $y\mapsto u(y)-\lambda \dd(y,z)$ is concave, then $\geq_{\textnormal{cve}}\subseteq \succsim^*\subseteq \succsim$, and hence $\succsim$ exhibits strong risk aversion.
\par\medskip
We assume now that $\dd$ is induced by a continuous norm. Suppose first that $\lambda \leq 0$. Then, by the equivalence we just proved, $\succsim$ exhibits strong risk aversion if and only if each $y\mapsto u(y)-\lambda \dd(y,z)$ is concave. Since $\lambda \leq 0$, this implies that taking $x\neq y$, $\alpha\in (0,1)$, and $z=\alpha x+(1-\alpha)y$,
\begin{align*}
u(z)-[\alpha u(x)+(1-\alpha)u(y)]\geq \lambda \left[-\alpha \dd(x,z)-(1-\alpha)\dd(y,z)\right]\geq 0
\end{align*}
where the last inequality follows from the fact that $\dd$ is induced by a norm, $z=\alpha x+(1-\alpha)y$, and $\lambda \leq 0$. Therefore, $u$ is concave.
\par\medskip
Suppose now that $u$ is convex. Choose $x\neq y$, $\alpha\in (0,1)$, $z=\alpha x+(1-\alpha)y$. Then, using the same argument we just employed, we get 
\begin{align*}
0\geq u(z)-[\alpha u(x)+(1-\alpha)u(y)]\geq \lambda \left[-\alpha \dd(x,z)-(1-\alpha)\dd(y,z)\right]
\end{align*}
and hence, since $\dd$ is induced by a norm, it follows that $\lambda \geq 0$.
\end{proof}

\begin{proof}[Proof of Corollary \ref{prop:SSDCAEU}]
Suppose first that every local utility is increasing and concave. If $p\geq_{\mathrm{SSD}}q$, then
\[
\int_X v_x(y)\mathrm{d}p(y)\geq\int_X v_x(y)\mathrm{d}q(y)
\]
for every local utility $v_x:y\mapsto u(y)-\lambda\dd(x,y).$
Proposition \ref{prop:affinecoreCAEU} therefore implies
$p\succsim^*q$, and hence $p\succsim q$.

Conversely, suppose that $\succsim$ is consistent with second-order
stochastic dominance. Since
\[
\geq_{\mathrm{FSD}} \cup \geq_{\mathrm{cve}}\subseteq\geq_{\mathrm{SSD}},
\]
the preference is consistent with both first-order stochastic dominance and the concave order. Corollaries \ref{prop:FSDCAEU} and \ref{coro:CAEUstrongrisk} then imply that every local utility is,
respectively, increasing and concave.

The additional conclusions also follow directly from the corresponding conclusions of Corollaries \ref{prop:FSDCAEU}
and \ref{coro:CAEUstrongrisk}.
\end{proof}


\begin{lemma}\label{lem:CAEU_maximization}
Suppose $\succsim$ are CAEU preferences represented by $V$ with utility $u$, $\lambda\in \R$, and a continuous metric $\dd$. For all subsets $A$ of $\DX$,
\begin{align*}
\lambda\geq 0\Longrightarrow \sup\limits_{p\in A}V(p)=\max\limits_{x\in X}\sup\limits_{p\in A}\left[\EUp-\lambda\int \dd(x,y)\mathrm{d}p(y)\right].
\end{align*}
If $X$ is a compact and convex subset of a normed space, $y\mapsto \dd(x,y)$ is convex for all $x\in X$, and $A$ is a convex subset of $\DX$, then
\[
\lambda<0\Longrightarrow \sup\limits_{p\in A}V(p)=\min\limits_{x\in X}\sup\limits_{p\in A}\left[\EUp-\lambda\int \dd(x,y)\mathrm{d}p(y)\right].
\]
\end{lemma}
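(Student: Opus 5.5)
The plan is to write $V(p)=\sup_{x\in X} G(x,p)$ when $\lambda\ge 0$ and $V(p)=\inf_{x\in X} G(x,p)$ when $\lambda<0$, where
\[
G(x,p)=\EUp-\lambda\int \dd(x,y)\mathrm{d}p(y).
\]
Both identities follow from the definition of $C_\dd$ as a minimum over $X$. If $\lambda\geq 0$, then $-\lambda\min_x(\cdot)=\max_x(-\lambda\,\cdot)$. If $\lambda<0$, then $-\lambda\min_x(\cdot)=\min_x(-\lambda\,\cdot)$. In each case the outer optimum is attained because $x\mapsto G(x,p)$ is continuous on the compact set $X$, as noted after Definition \ref{def:best_approximation_complexity}.

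\textbf{Case $\lambda\geq 0$.} Two suprema can be interchanged freely: $\sup_{p\in A}\sup_{x\in X}G(x,p)=\sup_{x\in X}\sup_{p\in A}G(x,p)$. The remaining point is that the outer supremum over $x$ is a maximum. The function $h(x)=\sup_{p\in A}G(x,p)$ is a supremum of functions $x\mapsto G(x,p)$. Each of these is $|\lambda|$-Lipschitz with respect to $\dd$, since $|\dd(x,y)-\dd(x',y)|\le \dd(x,x')$. Hence $h$ is Lipschitz with respect to $\dd$. It is finite because $u$ and $\dd$ are bounded on compact $X$. Since $\dd$ is continuous with respect to $\rho$, $h$ is $\rho$-continuous on the compact $X$ and attains its maximum. (If $A$ is empty, both sides equal $-\infty$ by convention.)

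\textbf{Case $\lambda<0$.} Now $\sup_{p\in A}V(p)=\sup_{p\in A}\min_{x\in X}G(x,p)$, so a minimax theorem is needed. I would apply Sion's theorem, as in the proof of Proposition \ref{prop:maximalelementsnormedspaces}, with $X$ compact and convex and $A$ convex (not necessarily compact). For fixed $p$, the map $x\mapsto G(x,p)=\EUp+|\lambda|\int \dd(x,y)\mathrm{d}p(y)$ must be convex and lower semicontinuous. Convexity follows from convexity of $\dd(\cdot,y)$, obtained from the hypothesis on $y\mapsto\dd(x,y)$ by symmetry of the metric. Continuity follows from the Lipschitz bound above. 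For fixed $x$, the map $p\mapsto G(x,p)$ is affine, hence concave, on the convex set $A$; Sion only needs upper semicontinuity in $p$ when $X$ is the compact side, and this holds under weak convergence because the integrands are continuous and bounded. Sion then gives $\sup_{p\in A}\min_{x\in X}G=\min_{x\in X}\sup_{p\in A}G$. The minimum on the right is attained because $h$ is continuous on compact $X$, as in the first case.

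The main obstacle is checking that Sion's hypotheses hold with only one compact side, namely $X$. The version needed is: $X$ compact convex, $A$ convex, $G(\cdot,p)$ lower semicontinuous and quasiconvex, and $G(x,\cdot)$ upper semicontinuous and quasiconcave. This is the standard statement, so no compactness of $A$ is required. I would state explicitly that $\DX$ carries the weak topology, so that $A$ sits in a topological vector space of signed measures, and that convexity of $\dd(\cdot,y)$ in its first argument is inherited from the assumed convexity in the second argument.
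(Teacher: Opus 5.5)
Your proposal is correct and follows the paper's proof. You write $V$ as a max (for $\lambda\geq 0$) or a min (for $\lambda<0$) over $x$ of the functions $\mathbb{E}_p[u]-\lambda\int \dd(x,y)\,\mathrm{d}p(y)$, which are affine in $p$. In the first case you interchange the two suprema, and in the second you apply Sion's minimax theorem with $X$ as the compact, convex side. Your Lipschitz argument, showing that $x\mapsto\sup_{p\in A}[\cdots]$ is continuous so the outer max/min is attained, fills in a step the paper only asserts.
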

\begin{proof}
Suppose that $\lambda\geq 0$. Then,
\begin{align*}
\sup\limits_{p\in A}V(p)&=\sup\limits_{p\in A}\left[\EUp-\lambda\min\limits_{x\in X}\int \dd(x,y)\mathrm{d}p(y)\right]\\
&=\sup\limits_{p\in A}\left[\EUp-\min\limits_{x\in X}\lambda\int \dd(x,y)\mathrm{d}p(y)\right]\\
&=\sup\limits_{p\in A}\left[\EUp+\max\limits_{x\in X}-\lambda\int \dd(x,y)\mathrm{d}p(y)\right]\\
&=\sup\limits_{p\in A}\max\limits_{x\in X}\left[\EUp-\lambda\int \dd(x,y)\mathrm{d}p(y)\right]\\
&=\max\limits_{x\in X}\sup\limits_{p\in A}\left[\EUp-\lambda\int \dd(x,y)\mathrm{d}p(y)\right].
\end{align*}
Now suppose that $\lambda<0$. Then, define the function
\[
L:(p,x)\mapsto \EUp-\lambda\int \dd(x,y)\mathrm{d}p(y).
\]
Clearly, for all $p\in \DX$ and $x\in X$, the functions $L(p,\cdot)$ and $L(\cdot,x)$ are continuous. Moreover, $L(p,\cdot)$ is convex, as $-\lambda>0$ and
\[
x\mapsto \int \dd(x,y) \mathrm{d}p(y)
\]
is convex for all $p\in \DX$. For all $x\in X$, $p\mapsto L(p,x)$ is affine. Then, by Sion's minimax theorem (\cite{SionMinimax}), it follows that
\[
\sup\limits_{p\in A}\min\limits_{x\in X}L(p,x)=\min\limits_{x\in X}\sup\limits_{p\in A}L(p,x)
\]
and hence the claim follows.
\end{proof}

\begin{proposition}\label{prop:certaintyalways}
Suppose $\succsim$ are CAEU preferences represented by $V$ and $\lambda\geq 0$ with utility $u$, $\lambda\in \R$, and a continuous metric $\dd$. If $A\subseteq \DX$ is compact and contains all degenerate lotteries, then
\[
\max\limits_{p\in A}V(p)= \max\limits_{x\in X}u(x).
\]
\end{proposition}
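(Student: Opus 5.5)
The plan is to prove the equality through two matching inequalities, using only that $C_\dd$ is nonnegative, vanishes on degenerate lotteries, and that $V$ is continuous.

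\emph{Lower bound.} Since $A$ contains every $\delta_x$ and $C_\dd(\delta_x)=0$ by Lemma \ref{basic properties}, we have $V(\delta_x)=u(x)$. Hence $\sup_{p\in A}V(p)\geq \max_{x\in X}u(x)$. The maximum on the right is attained because $u$ is continuous and $X$ is compact.

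\emph{Upper bound.} For every $p\in A$, nonnegativity of $C_\dd$ and $\lambda\geq 0$ give
\[
V(p)=\mathbb{E}_p[u]-\lambda C_\dd(p)\leq \mathbb{E}_p[u]\leq \max_{x\in X}u(x).
\]
Alternatively, one could invoke Lemma \ref{lem:CAEU_maximization}: for each $x$, the inner supremum is at most $\max u$ because the distance term is nonnegative. The direct bound above is simpler, so I will use it.

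\emph{Attainment.} The statement writes $\max_{p\in A}V(p)$, so the supremum must be attained. This is immediate from the lower-bound step: the value $\max_x u(x)$ is reached at $\delta_{x^*}\in A$, where $x^*\in\arg\max u$. Independently, $V$ is weakly continuous, since $u$ is continuous and bounded and $C_\dd$ is continuous by Lemma \ref{basic properties}. Therefore $V$ attains its maximum on the compact set $A$ in any case.

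There is no real obstacle in this argument. The only point needing care is that $\lambda\geq 0$ is used precisely in the upper bound: if $\lambda<0$, the complexity term could push $V$ above $\max u$.
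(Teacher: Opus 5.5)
Your proof is correct and follows the paper's argument: both proofs use $\lambda\geq 0$ and $C_\dd\geq 0$ to get $V(p)\leq \mathbb{E}_p[u]$, and then compare with a degenerate lottery at a $u$-maximizing outcome, which lies in $A$. The one difference is that the paper first takes a maximizer $p^*\in A$, which needs compactness of $A$, and uses the best outcome in $\supp(p^*)$; your uniform bound by $\max_X u$, attained at $\delta_{x^*}\in A$, shows that compactness of $A$ is not actually needed.
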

\begin{proof}
Let $p^*\in \arg\max_{p\in A}V(p)$. Then, by compactness of $X$, it follows that $\supp(p^*)$ is compact. Let $x^*\in \arg\max_{x\in \supp(p^*)}u(x)$. It follows that
\[
V(\delta_{x^*})=u(x^*)\geq \mathbb{E}_{p^*}[u]\geq \mathbb{E}_{p^*}[u]-\lambda \min_{x\in X}\int \dd(x,y)\mathrm{d}p^*(y)=V(p^*).
\]
Therefore, for all $p^*\in \arg\max_{p\in A}V(p)$, there exists $x^*$ such that $u(x^*)=V(\delta_{x^*})\geq V(p^*)$. Since $X\subseteq A$, the claim follows.
\end{proof}

\subsection{Extending $C_\dd$ to the noncompact case}
Suppose $(X,\dd)$ is a metric space. We recall that by $\DXint$ we denote the set of integrable probability measures,
\[
\DXint=\left\lbrace p\in \DX:\int \dd(x,y)\mathrm{d}(p\otimes p)(x,y)<\infty\right\rbrace.
\]
Define $F_\dd:X\times\DXint\to \R$ as
\[
F_\dd(x,p)=\int \dd(x,y)\mathrm{d}p(y)
\]
for all $(x,p)\in X\times\DXint$. 
\begin{lemma}
Suppose $(X,\dd)$ is a metric space. Then $F_\dd$ is nonnegative and
\begin{itemize}
\item $F_\dd(\cdot,p)$ is 1-Lipschitz for all $p\in \DXint$.
\item For all $p\in \DXint$ and $t\in \R$,
\[
\left[F_\dd(\cdot,p)\leq t\right]=\left\lbrace x\in X:F_\dd(x,p)\leq t \right\rbrace
\]
is closed and bounded.
\item Let $x_0\in X$. If $\dd(x_n,x_0)\to \infty$, then, for all $p\in \DXint$, $F_\dd(x_n,p)\to \infty$.
\end{itemize}
\end{lemma}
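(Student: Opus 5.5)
Nonnegativity is immediate, since $\dd\geq 0$. Finiteness of $F_\dd(x,p)$ for $p\in\DXint$ needs a short argument first. By Fubini, $\int\!\int \dd(z,y)\,\mathrm{d}p(z)\,\mathrm{d}p(y)<\infty$, so the inner integral $\int \dd(z,y)\,\mathrm{d}p(y)$ is finite for $p$-almost every $z$. Fix one such $z_0$. The triangle inequality then gives
\[
F_\dd(x,p)\leq \dd(x,z_0)+F_\dd(z_0,p)<\infty
\]
for every $x\in X$, so $F_\dd$ is real-valued.

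For the 1-Lipschitz property, take $x,x'\in X$. Integrating the pointwise bound $\lvert \dd(x,y)-\dd(x',y)\rvert\leq \dd(x,x')$ against $p$ yields $\lvert F_\dd(x,p)-F_\dd(x',p)\rvert\leq \dd(x,x')$.

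For the third bullet, the reverse triangle inequality gives $\dd(x_n,y)\geq \dd(x_n,x_0)-\dd(x_0,y)$ for every $y$. Integrating,
\[
F_\dd(x_n,p)\geq \dd(x_n,x_0)-F_\dd(x_0,p).
\]
The subtracted term is finite by the first step, so the right-hand side diverges, and hence $F_\dd(x_n,p)\to\infty$.

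For the second bullet, closedness of the sublevel set $[F_\dd(\cdot,p)\leq t]$ follows from the continuity just established (1-Lipschitz maps are continuous). Boundedness follows from the same lower bound: every $x$ in the sublevel set satisfies $\dd(x,x_0)\leq t+F_\dd(x_0,p)$, so the set lies in a ball around $x_0$.

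The only step requiring care is the finiteness of $F_\dd(x_0,p)$ for an \emph{arbitrary} base point $x_0$, rather than only a $p$-almost-every one. The Fubini argument combined with the triangle inequality handles this. If $\supp(p)$ were empty that would be a problem, but $p$ is a probability measure, so a good $z_0$ always exists.
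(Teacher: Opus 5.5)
Your proof is correct and follows the paper's argument essentially step by step. You use the same integrated bound $\lvert \dd(x,y)-\dd(x',y)\rvert\leq \dd(x,x')$ for the Lipschitz property, and the same integrated triangle inequality $\dd(x,x_0)\leq F_\dd(x,p)+F_\dd(x_0,p)$ both for boundedness of the sublevel sets and for divergence. Your preliminary Tonelli argument that $F_\dd$ is finite on all of $X\times\DXint$ is a step the paper leaves implicit, and it is sound; note, though, that a good $z_0$ exists because the relevant set has $p$-measure one, not because of anything about $\supp(p)$.
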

\begin{proof}
Nonnegativity is obvious. Let $x,y\in X$ and $p\in \DXint$. Then,
\[
\lvert F_\dd(x,p)-F_\dd(y,p)\rvert\leq \int |\dd(x,z)-\dd(y,z)|\mathrm{d}p(z)\leq \dd(x,y).
\]
Thus, $F_\dd(\cdot,p)$ is 1-Lipschitz. Since $F_\dd(\cdot,p)$ is continuous it follows that $\left[F_\dd(\cdot,p)\leq t\right]$ is closed. Now suppose that $x\in \left[F_\dd(\cdot,p)\leq t\right]$ and $y\in X$. Then,
 by triangle inequality,
 \begin{align*}
\dd(x,y)\leq \int\dd(x,z)\mathrm{d}p(z)+\int\dd(y,z)\mathrm{d}p(z)=F_\dd(x,p)+F_\dd(y,p)\leq t+F_\dd(y,p).
\end{align*}
Thus, by the arbitrariness of $x\in X$, we have $\left[F_\dd(\cdot,p)\leq t\right]\subseteq B(y,t+F_\dd(y,p)+\varepsilon)$, where $\varepsilon>0$. Thus, $\left[F_\dd(\cdot,p)\leq t\right]$ is bounded.
\par\medskip
Suppose that $\dd(x_n,x_0)\to \infty$. Then, for all $p\in \DXint$,
\[
F_\dd(x_n,p)=\int \dd(x_n,y)\mathrm{d}p(y)\geq \dd(x_n,x_0)-\int \dd(x_0,y)\mathrm{d}p(y)\to \infty.
\]
\end{proof}
\begin{lemma}
If for all $t\in \R$ and $p\in \DXint$, the set $\left[F_\dd(\cdot,p)\leq t\right]$ is compact, then $F_\dd(\cdot,p)$ admits a minimum.
\end{lemma}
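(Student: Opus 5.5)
The plan is a direct minimizing-sequence argument for the continuous function $F_\dd(\cdot,p)$ on $X$. Fix $p\in\DXint$. By the preceding lemma, $F_\dd(\cdot,p)$ is nonnegative and $1$-Lipschitz, hence continuous. Set
\[
m=\inf_{x\in X}F_\dd(x,p)\in[0,\infty).
\]
The infimum is finite because $F_\dd(x_0,p)<\infty$ for some (indeed every) $x_0\in X$. Indeed, integrability of $p$ gives $\int \dd(x,y)\,\mathrm{d}p(y)<\infty$ for $p$-almost every $x$, and the Lipschitz bound then transfers finiteness to all $x$. So $F_\dd$ is real-valued as defined.

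The key step is to restrict to a single sublevel set. Choose $t=m+1$. The set $[F_\dd(\cdot,p)\leq t]$ is nonempty by definition of the infimum, and it is compact by hypothesis. Any point outside this set has $F_\dd>m+1$, so
\[
\inf_{x\in X}F_\dd(x,p)=\inf_{x\in[F_\dd(\cdot,p)\leq t]}F_\dd(x,p).
\]
A continuous real function on a nonempty compact set attains its infimum (Weierstrass). The resulting minimizer $x^*$ then satisfies $F_\dd(x^*,p)=m$, so it is a global minimizer on $X$.

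Sequentially, one could equivalently take $x_n$ with $F_\dd(x_n,p)\to m$. Eventually the $x_n$ lie in $[F_\dd(\cdot,p)\leq m+1]$, so by compactness a subsequence converges to some $x^*$ in that set, and continuity gives $F_\dd(x^*,p)=m$.

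There is essentially no obstacle. The only point needing care is that the chosen sublevel set is nonempty, which is handled by taking $t$ strictly above the infimum rather than equal to it. I would also remark that in a proper (Heine–Borel) metric space the hypothesis holds automatically, since the previous lemma shows these sets are closed and bounded.
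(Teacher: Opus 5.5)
Your proof is correct and follows essentially the same route as the paper: you restrict $F_\dd(\cdot,p)$ to a nonempty compact sublevel set and apply Weierstrass, using the continuity that comes from the $1$-Lipschitz property. The only difference is cosmetic: the paper uses the level $t=F_\dd(x_0,p)$ for an arbitrary $x_0\in X$, so the sublevel set is nonempty because it contains $x_0$, and your preliminary check that the infimum is finite is not needed.
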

\begin{proof}
Let $p\in \DXint$ and $x_0\in X$. Since $F_\dd(\cdot,p)$ is 1-Lipschitz, it admits a minimum $x_*$ in $\left[F_\dd(\cdot,p)\leq F_\dd(x_0,p)\right]$. Then, we have that
\[
F_\dd(x_*,p)\leq F_\dd(x,p)\leq F_\dd(x_0,p)\leq F_\dd(y,p)
\]
for all $x\in \left[F_\dd(\cdot,p)\leq F_\dd(x_0,p)\right]$ and all $y\notin \left[F_\dd(\cdot,p)\leq F_\dd(x_0,p)\right]$. Thus, $x_*$ is a minimum.
\end{proof}
\noindent This lemma implies that whenever $(X,\dd)$ is so that each $\left[F_\dd(\cdot,p)\leq t\right]$ is compact, then the complexity measure $C_\dd$ is well-defined. Therefore in many cases one does not need to require that $(X,\dd)$ is compact. An important class of examples is composed of metric spaces with the \textit{Heine-Borel property}, meaning that closed and bounded
subsets of $X$ are compact, e.g., $X$ closed subset of $\R^n$ endowed with any norm. In general, the Hopf-Rinow Theorem (Theorem 2.5.28 of \cite{GromovMisha})
shows that any complete, locally compact, length metric space has the Heine-Borel property. This includes complete and connected Riemannian manifolds equipped with their geodesic metric (Chapter 6 of \cite{LeeRiemannian}).\footnote{The interested reader can consult also \cite{AndreinoPatatinoBrillosino}.} 
\par\medskip
Apart from metric spaces of the Heine-Borel property one can employ the previous lemmas to discuss the existence of minimizers of each $F_\dd(\cdot,p)$ in normed spaces. 
\begin{lemma}
Suppose $X$ is a convex subset of a normed space $(V,\lVert\cdot\rVert)$. If $\dd$ is induced by $\lVert\cdot \rVert$, then, for all $p\in \DXint$, $F_\dd(\cdot,p)$ is weakly lower semicontinuous.
\end{lemma}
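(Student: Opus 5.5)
The plan is to show that each sublevel set $\{x\in X: F_\dd(x,p)\leq t\}$ is convex and norm-closed. By Mazur's theorem, a convex, norm-closed subset of a normed space is weakly closed, and weak lower semicontinuity is exactly weak closedness of all sublevel sets.

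For closedness, I would use the earlier lemma, which shows that $F_\dd(\cdot,p)$ is $1$-Lipschitz with respect to $\dd$, and hence with respect to the norm. It is therefore norm continuous, so each sublevel set is closed in $X$ with the relative norm topology. If $X$ is itself closed in $V$, the sublevel sets are closed in $V$. Otherwise I would phrase weak lower semicontinuity relative to $X$: a set $S$ that is closed in $X$ has the form $S=\overline{S}\cap X$, where the norm closure $\overline{S}$ is convex.

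For convexity, fix $x,z\in X$ and $\alpha\in[0,1]$. Then
\[
\lVert \alpha x+(1-\alpha)z-y\rVert\leq \alpha\lVert x-y\rVert+(1-\alpha)\lVert z-y\rVert
\]
for every $y$. Since $p\in\DXint$, both integrals are finite, because $F_\dd(x,p)<\infty$ for all $x$ by the triangle inequality. Integrating with respect to $p$ then gives convexity of $F_\dd(\cdot,p)$, so its sublevel sets are convex.

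To conclude, let $x_\gamma\to x$ weakly in $X$ with $\liminf F_\dd(x_\gamma,p)<t$. Passing to a subnet, the points $x_\gamma$ lie in $\{F_\dd(\cdot,p)\leq t'\}$ for some $t'<t$. That sublevel set is convex and closed in $X$, so it equals $K\cap X$ with $K$ convex and norm closed in $V$. By Mazur's theorem $K$ is weakly closed, so $x\in K\cap X$ and $F_\dd(x,p)\leq t'$. The only delicate step is this relative-topology bookkeeping when $X$ is not closed. I would handle it by noting that $F_\dd(\cdot,p)$ extends to $\overline{X}$, with the same integral and the same convexity and continuity, and then restricting back to $X$.
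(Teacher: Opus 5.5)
Your proposal is correct and takes essentially the same route as the paper. Both arguments combine three ingredients: convexity of $F_\dd(\cdot,p)$ from the triangle inequality, norm continuity, and Mazur's theorem (which the paper invokes as Hahn--Banach separation) to make norm-closed convex sets weakly closed. The paper works with the epigraph of the extension $v\mapsto\int_X\lVert v-y\rVert\,\mathrm{d}p(y)$ to all of $V$, which avoids the relative-topology bookkeeping you handle through $\overline{S}\cap X$ or the extension to $\overline{X}$; the two versions are interchangeable.
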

\begin{proof}
Define $\bar{F}_\dd:V\times \DXint\to \R$ as 
\[
\bar{F}_{\dd}:(v,p)\mapsto \int_X\|v-y\|\mathrm dp(y).
\]
By triangle inequality, for all $v\in V$ and $p\in \DXint$,
\[
\bar{F}_{\dd}(v,p) \leq \| v\| +\int_X \| y\|\mathrm{d}p(y)<\infty
\]
where the last inequality follows since $p\in \DXint$, so it has finite first moment. Let $p\in \DXint$. Then, $\bar{F}_\dd(\cdot,p)$ is an extension of $F_{\dd}(\cdot,p)$. Moreover, $\bar{F}_\dd(\cdot,p)$ is convex and norm-continuous. Hence its epigraph is convex and
norm-closed. By the Hahn--Banach separation theorem, every
norm-closed convex set is weakly closed. Therefore the epigraph of
$\bar{F}_{\dd}(\cdot,p)$ is weakly closed, so $\bar{F}_{\dd}(\cdot,p)$ is weakly
lower semicontinuous. Consequently, $F_{\dd}(\cdot,p)$ is also weakly lower semicontinuous.
\end{proof}
An immediate observation from this lemma is the following.
\begin{lemma}
If $X$ is a weakly compact convex subset of a normed space $(V,\lVert\cdot\rVert)$ and $\dd$ is induced by $\lVert\cdot \rVert$, then, for all $p\in \DXint$, $F_\dd(\cdot,p)$ admits a minimum in $X$.
\end{lemma}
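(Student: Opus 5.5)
The statement follows from the weak lower semicontinuity lemma just proved, combined with a Weierstrass-type argument in the weak topology. Fix $p\in\DXint$. First I check that $F_\dd(\cdot,p)$ is finite on all of $X$. By the triangle inequality,
\[
F_\dd(x,p)\leq \lVert x-y_0\rVert+\int\lVert y_0-y\rVert\,\mathrm{d}p(y)
\]
for any fixed $y_0\in X$. The integral on the right is finite: by Fubini and the definition of $\DXint$, $\int\!\!\int \dd\,\mathrm{d}(p\otimes p)<\infty$, so $\int \dd(y_0,y)\,\mathrm{d}p(y)<\infty$ for $p$-almost every $y_0$, and such a $y_0$ can be taken in $X$. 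A finite value at one point then extends to every $x$, since $F_\dd(\cdot,p)$ is $1$-Lipschitz by the first lemma of this subsection. Hence $F_\dd(\cdot,p):X\to[0,\infty)$ is a genuine real-valued function.

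By the preceding lemma, $F_\dd(\cdot,p)$ is weakly lower semicontinuous on $X$, which is weakly compact by hypothesis. So I can apply the generalized Weierstrass theorem in the weak topology.

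Let $m=\inf_{x\in X}F_\dd(x,p)\geq 0$. For each $t>m$, the sublevel set $[F_\dd(\cdot,p)\leq t]$ is nonempty and weakly closed, by weak lower semicontinuity. These sets are nested, so they have the finite intersection property inside the weakly compact set $X$. Their intersection over $t>m$ is therefore nonempty, and any point in it attains the value $m$.

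The argument uses only compactness and closedness of sublevel sets, never sequential compactness. This matters because the weak topology need not be metrizable, so a minimizing-sequence argument would require Eberlein–Šmulian and is best avoided. Beyond this, I expect no real obstacle. The only delicate point is the one handled first: finiteness of $F_\dd(\cdot,p)$, which ensures that lower semicontinuity transfers correctly to the restriction to $X$ and that the infimum $m$ is a finite number.
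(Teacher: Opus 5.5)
Your proof is correct and follows the paper's intended route: the paper states this lemma as an immediate consequence of the preceding weak lower semicontinuity lemma, i.e.\ a Weierstrass argument on the weakly compact set $X$, and your finite-intersection argument over the weakly closed sublevel sets $[F_\dd(\cdot,p)\leq t]$ spells out exactly that step. Your preliminary finiteness check via Tonelli and the triangle inequality is a harmless extra; the paper takes it for granted when it defines $F_\dd$ as real-valued on $X\times\DXint$.
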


\subsection{Remarks on support comparisons and entropy}\label{remarksuppentropy}

Here, we notice that, whenever $|X|\geq 3$, there is no metric $\dd$ on $X$ such that $C_\dd$ is ordinally equivalent to $p\mapsto |\supp(p)|$ or $p\mapsto H(p)$. 
\begin{remark}
Suppose $X$ is finite and $|X|\geq 3$ and it is endowed with a metric $\dd$. Denote by $D^*$ the $\dd$-diameter of $X$ and by $D_*=\min_{x\neq y}\dd(x,y)>0$. Fix $x\in X$. Then, let $\varepsilon\in (0,1)$ sufficiently small such that
\[
p_{\varepsilon}(x):=1-(|X|-1)\varepsilon \in (0,1).
\]
Let $p_\varepsilon(y):=\varepsilon$ for all $y\neq x$. Then, we have that
\[
\min\limits_{z\in X}\sum_{y\in X}p_\varepsilon(y)\dd(y,z)\leq (|X|-1)\varepsilon D^*.
\]
Now let $q=1/2\delta_t+1/2\delta_s$ for some $t\neq s$ in $X$. Then,
\[
\min\limits_{z\in X}\sum_{y\in X}q(y)\dd(y,z)=\frac{1}{2}\dd(t,s)\geq \frac{1}{2}D_*.
\]
Now let $\varepsilon$ be sufficiently small so that $(|X|-1)D^*\varepsilon<1/2D_*$. Thus, we have that:
\[
C_\dd(q)\geq \frac{1}{2}D_*>C_\dd(p_\varepsilon)
\]
even though $|\supp(p)|=|X|>2=|\supp(q)|$.
\end{remark}

\begin{remark}
Suppose $X$ is finite with $|X|\geq 3$ and $\dd$ is a metric on $X$. Let $x,y\in X$ be such that $\dd(x,y)$ is the $\dd$-diameter of $X$. Let $p=1/2\delta_x+1/2\delta_y$. By Lemma \ref{basic properties},
\[
C_\dd\left(p\right)=\frac{1}{2}\dd(x,y).
\]
Moreover, let $\lambda\in (0,1)$ and let $\ell$ be the uniform on $X\setminus \left\lbrace x\right\rbrace$. Let $q=\lambda \delta_x+(1-\lambda)\ell$. Then,
\[
C_\dd(q)\leq (1-\lambda)\sum_{z\neq x}\frac{\dd(x,z)}{|X|-1}\leq (1-\lambda)\dd(x,y).
\]
If $\lambda=2/3$, then
\[
C_\dd(q)\leq \frac{\dd(x,y)}{3}<\frac{\dd(x,y)}{2}=C_\dd(p)
\]
and 
\[
H(q)=-\lambda \log(\lambda)-(1-\lambda)\log(1-\lambda)+(1-\lambda) \log(|X|-1)>\log(2)=H(p)
\]
since $|X|\geq 3$.
\end{remark}

\end{document}